\definecolor{myred}{RGB}{255, 0, 0}
\definecolor{myblue}{RGB}{0, 0, 255}
\newtheorem{theorem}{Theorem}
\newtheorem{lemma}{Lemma}
\newtheorem{proposition}{Proposition}
\newcommand{\EE}{\mathsf{E}}
\newcommand{\RF}{R}
\newcommand{\rr}{R}
\newcommand{\er}{\mathsf{E}_{r}}
\newcommand{\ee}{\mathsf{E}_{e}}
\newcommand{\nn}{\nonumber}
\newcommand{\IND}{\mathbbm{1}}
\newcommand {\Exp} {\mathbb{E}}
\newcommand {\prob} {\mathbb{P}}
\newcommand{\DEF}{\stackrel{\triangle}{=}}
\newcommand{\dfn}{\stackrel{\triangle}{=}}
\newcommand {\lexe} {\stackrel{\cdot} {\le}}
\newcommand {\reals} {{\rm I\!R}}
\newcommand {\hH} {\hat{H}}
\newcommand {\bu} {\boldsymbol{u}}
\newcommand {\bv} {\boldsymbol{v}}
\newcommand {\bU} {\boldsymbol{U}}
\newcommand {\bV} {\boldsymbol{V}}
\newcommand{\calA}{{\cal A}}
\newcommand{\calB}{{\cal B}}
\newcommand{\calC}{{\cal C}}
\newcommand{\calE}{{\cal E}}
\newcommand{\calG}{{\cal G}}
\newcommand{\calI}{{\cal I}}
\newcommand{\calK}{{\cal K}}
\newcommand{\calP}{{\cal P}}
\newcommand{\calQ}{{\cal Q}}
\newcommand{\calS}{{\cal S}}
\newcommand{\calT}{{\cal T}}
\newcommand{\calU}{{\cal U}}
\newcommand{\calV}{{\cal V}}
\newcommand{\calX}{{\cal X}}
\newcommand{\calY}{{\cal Y}}
\begin{document}
\thispagestyle{empty}
\title{Trade-offs Between Error Exponents and Excess-Rate Exponents of Typical Slepian--Wolf Codes\footnote{
		This research was supported by the Israel Science Foundation (ISF) grant no.\ 137/18. This paper was presented in part at the 2019 IEEE Information Theory Workshop, Visby, Gotland, Sweden, 25-28 August, 2019.}\\}
\author{\\ Ran Tamir (Averbuch) and Neri Merhav\\}
\maketitle
\begin{center}
The Andrew \& Erna Viterbi Faculty of Electrical Engineering \\
Technion - Israel Institute of Technology \\
Technion City, Haifa 3200003, ISRAEL \\
\{rans@campus, merhav@ee\}.technion.ac.il
\end{center}
\vspace{1.5\baselineskip}
\setlength{\baselineskip}{1.5\baselineskip}

\begin{abstract}
	Typical random codes (TRC) in a communication scenario of source coding with side information at the decoder is the main subject of this work. 
	We study the semi-deterministic code ensemble, which is a certain variant of the ordinary random binning code ensemble. In this code ensemble, the relatively small type classes of the source are deterministically partitioned into the available bins in a one-to-one manner. As a consequence, the error probability decreases dramatically. 
	The random binning error exponent and the error exponent of the TRC are derived and proved to be equal to one another in a few important special cases.
	We show that the performance under optimal decoding can be attained also by certain universal decoders, e.g., the stochastic likelihood decoder with an empirical entropy metric.
	Moreover, we discuss the trade-offs between the error exponent and the excess--rate exponent for the typical random semi-deterministic code and characterize its optimal rate function. 
	We show that for any pair of correlated information sources, both error and excess--rate probabilities are exponentially vanishing when the blocklength tends to infinity. \\
	
	\noindent
	{\bf Index Terms:}  Slepian--Wolf coding, variable--rate coding, error exponent, excess--rate exponent, typical random code.
\end{abstract}

\clearpage
\section{Introduction}

As is well known, the random coding error exponent is defined by 
\begin{align}
\label{RCE}
\EE_{\mbox{\tiny r}}(R) = \lim_{n \to \infty} \left\{ - \tfrac{1}{n} \log \mathbb{E} \left[P_{\mbox{\tiny e}}(\calC_{n}) \right] \right\},
\end{align} 
where $R$ is the coding rate, $P_{\mbox{\tiny e}}(\calC_{n})$ is the error probability of a codebook $\calC_{n}$, and the expectation is with respect to (w.r.t.) the randomness of $\calC_{n}$ across the ensemble of codes.
The error exponent of the typical random code (TRC) is defined as \cite{MERHAV_TYPICAL} 
\begin{align} \label{TRC_DEF}
\EE_{\mbox{\tiny trc}}(R) = \lim_{n \to \infty} \left\{- \tfrac{1}{n} \mathbb{E} \left[\log P_{\mbox{\tiny e}}(\calC_{n}) \right] \right\}.
\end{align}
We believe that the error exponent of the TRC is the more relevant performance metric as it captures the most likely error exponent of a randomly selected code, as opposed 
to the random coding error exponent, which is dominated by the relatively poor codes of the ensemble, rather than the channel noise, at relatively low coding rates.
In addition, since in random coding analysis, the code is selected at random and remains fixed, it seems reasonable to study the performance of the very chosen code instead of directly considering the ensemble performance. 

To the best of our knowledge, not much is known on TRCs. In \cite{BargForney}, Barg and Forney considered TRCs with independently and identically distributed codewords as well as typical linear codes, for the special case of the binary symmetric channel with maximum likelihood (ML) decoding. 
It was also shown that at a certain range of low rates, $E_{\mbox{\tiny trc}}(R)$ lies between $E_{\mbox{\tiny r}}(R)$ and the expurgated exponent, $E_{\mbox{\tiny ex}}(R)$. 
In \cite{PRAD2014} Nazari {\em et al.} provided bounds on the error exponents of TRCs for both discrete memoryless channels (DMC) and multiple--access channels. 
In a recent article by Merhav \cite{MERHAV_TYPICAL}, an exact single--letter expression has been derived for the error exponent of typical, random, fixed composition codes, over DMCs, and a wide class of (stochastic) decoders, collectively referred to as the generalized likelihood decoder (GLD).
Later, Merhav has studied error exponents of TRCs for the colored Gaussian channel \cite{MERHAV_GAUSS}, typical random trellis codes \cite{MERHAV_TRELLIS}, and a Lagrange--dual lower bound to the TRC exponent \cite{MERHAV_IID}. Large deviations around the TRC exponent was studied in \cite{TMWG}. 

While originally defined for pure channel coding \cite{BargForney}, \cite{MERHAV_TYPICAL}, \cite{PRAD2014}, the notion of TRCs has natural analogues in other settings as well, like source coding with side information at the decoder \cite{SW}. 
Typical random Slepian--Wolf (SW) codes of a certain variant of the ordinary variable--rate random binning code ensemble are the main theme of this work. 
The random coding error exponent of SW coding, based on fixed--rate (FR) random binning, was first addressed by Gallager in \cite{GAL76}, and improved later on by the expurgated bound in \cite{GoodCodes} and \cite{CKgraph}.
Variable--rate (VR) SW coding received less attention in the literature; VR codes under average rate constraint have been studied in \cite{SW2} and proved to outperform FR codes in terms of error exponents. 
Optimum trade-offs between the error exponent and the excess--rate exponent in VR coding were analyzed in \cite{SW9}. 
Sphere-packing upper bounds for source coding with side information in the FR and VR regimes have been studied in \cite{GAL76} and \cite{SW2}, respectively.
More works where exponential error bounds in source coding have been studied are \cite{CK1980}, \cite{C1982}, \cite{OH1994}, \cite{KW2011}, and \cite{KW2012}.   

It turns out that both the FR and VR ensembles suffer from an intrinsic deficiency, caused by statistical fluctuations in the sizes of the bins that are populated by the relatively small type classes of the source.    
This fundamental problem of the ordinary ensembles is alleviated in some variant of the ordinary VR ensemble -- the semi--deterministic (SD) code ensemble, which has already been proposed and studied in its FR version in \cite{OH1994}. 
In the SD code ensemble, 
for source type classes which are exponentially larger than the amount of available bins, we just randomly assign each source sequence into one of the bins, as being done in ordinary random binning.  
Otherwise, for relatively small type classes, we deterministically order each source sequence into a different bin, which provides a one--to--one mapping. 
This way, all these relatively small source type classes do not contribute to the probability of error.
The main results concerning the SD code are the following:
\begin{enumerate}
	\item The random binning error exponent and the error exponent of the TRC are derived in Theorems \ref{Thm_random_binning_SD_VR} and \ref{SW_THM}, respectively, and proved in Theorem \ref{Thm_comparison} to be equal to one another in a few important special cases, that includes the matched likelihood decoder, the MAP decoder, and the universal minimum entropy decoder. To the best of our knowledge, this phenomenon has not been seen elsewhere before, since the TRC exponent usually improves upon the random coding exponent.  
	As a byproduct, we are able to provide a relatively simple expression for the TRC exponent. 
	\item We prove in Theorem \ref{Thm_Universal} that the error exponent of the TRC under MAP decoding is also attained by two universal decoders: the minimum entropy decoder and the stochastic entropy decoder, which is a GLD with an empirical conditional entropy metric. As far as we know, this result is first of its kind in source coding; in other scenarios, the random coding bound is attained also by universal decoders, but here, we find that the TRC exponent is also universally achievable. Moreover, while the likelihood decoder and the MAP decoder have similar error exponents \cite{LCV2017}, here we prove a similar result, but for two universal decoders (one stochastic and one deterministic) that share the same metric.    
	\item We discuss the trade-offs between the error exponent and the excess--rate exponent for a typical random SD code, similarly to \cite{SW9}, but with a different notion of the excess--rate event, which takes into account the available side information. 
	In Theorem \ref{RF_UB}, we provide an expression for the optimal rate function that guarantees a required level for the error exponent of the typical random SD code. 
	Analogously, Theorem \ref{RF_LB_RC23} proposes an expression for the optimal rate function that guarantees a required level for the excess-rate exponent.    
	Furthermore, we find that for any pair of correlated information sources, the typical random SD code attains both exponentially vanishing error and excess--rate probabilities.     
\end{enumerate}        
    
The remaining part of the paper is organized as follows. 
In Section 2, we establish notation conventions. 
In Section 3, we formalize the model, the coding technique, the main objectives of this work, and we review some background. 
In Section 4, we provide the main results concerning error exponents and universal decoding in the SD ensemble, and in Section 5, we discuss the trade-offs between the error exponent and the excess-rate exponent.

\section{Notation Conventions}


Throughout the paper, random variables will be denoted by capital letters, realizations will be denoted by the corresponding lower case letters, and their alphabets will be denoted by calligraphic letters. 
Random vectors and their realizations will be denoted, 
respectively, by boldface capital and lower case letters. 
Their alphabets will be superscripted by their dimensions. 
Sources and channels will be subscripted by the names of the relevant random 
variables/vectors and their conditionings, whenever applicable, 
following the standard notation conventions, e.g., $Q_{U}$, $Q_{V|U}$, and so on. 
When there is no room for ambiguity, these subscripts will be omitted. For a generic joint 
distribution $Q_{UV} = \{Q_{UV}(u,v), u \in \mathcal{U}, v \in \mathcal{V} \}$, which will often 
be abbreviated by $Q$, information measures will be denoted in the conventional manner, but
with a subscript $Q$, that is, $H_{Q}(U)$ 
is the marginal entropy of $U$, $H_{Q}(U|V)$ is the conditional entropy of $U$ given $V$, and 
$I_{Q}(U;V) = H_{Q}(U) - H_{Q}(U|V)$ is the mutual information between $U$ and $V$.
The Kullback--Leibler divergence between two probability distributions, $Q_{UV}$ and $P_{UV}$, is defined as
\begin{align}
D(Q_{UV}\|P_{UV}) = \sum_{(u,v) \in \calU \times \calV} Q_{UV}(u,v) \log \frac{Q_{UV}(u,v)}{P_{UV}(u,v)},
\end{align}
where logarithms, here and throughout the sequel, are understood to be taken to the natural base.
The probability of an event $\mathcal{E}$ will be denoted by $\prob \{ \mathcal{E} \}$, and the expectation operator w.r.t.\ a 
probability distribution $Q$ will be denoted by $\mathbb{E}_{Q} [\cdot]$, where the subscript will often be omitted. 
For two positive sequences, $\{a_{n}\}$ and $\{b_{n}\}$, the notation $a_{n} \doteq b_{n}$ will stand for equality in the exponential scale, that is, $\lim_{n \to \infty} (1/n) \log \left(a_{n}/b_{n}\right) = 0$. Similarly, $a_{n} \lexe b_{n}$ means that $\limsup_{n \to \infty} (1/n) \log \left(a_{n}/b_{n}\right) \leq 0$, and so on. 
The indicator function of an event $\calA$ 
will be denoted by $\IND\{\calA\}$. 
The notation $[t]_{+}$ will stand for $\max\{0,t\}$. 

The empirical distribution of a sequence $\bu \in \mathcal{U}^{n}$, which will 
be denoted by $\hat{P}_{\bu}$, is the vector of relative frequencies, $\hat{P}_{\bu}(u)$, 
of each symbol $u \in \mathcal{U}$ in $\bu$. 
The type class of $\bu \in \mathcal{U}^{n}$, denoted $\calT(\bu)$, 
is the set of all vectors $\bu'$ with $\hat{P}_{\bu'} = \hat{P}_{\bu}$. 
When we wish to emphasize the dependence of the type class on the empirical 
distribution $\hat{P}$, we will denote it by $\calT(\hat{P})$. 
The set of all types of vectors of length $n$ over $\calU$ will be denoted by $\calP_{n}(\calU)$, and the set of all possible types over $\calU$ will be denoted by $\calP(\calU) \dfn \bigcup_{n=1}^{\infty} \calP_{n}(\calU)$.
Information measures associated with empirical distributions will be denoted with `hats' and will be subscripted 
by the sequences from which they are induced. For example, the entropy associated 
with $\hat{P}_{\bu}$, which is the empirical entropy of $\bu$, will be denoted 
by $\hat{H}_{\bu}(U)$. Similar conventions will apply to the joint empirical distribution, 
the joint type class, the conditional empirical distributions and the conditional type classes 
associated with pairs (and multiples) of sequences of length $n$. 
Accordingly, $\hat{P}_{\bu\bv}$ would be the joint empirical distribution 
of $(\bu, \bv) = \{(u_{i}, v_{i})\}_{i=1}^{n}$, $\calT(\hat{P}_{\bu\bv})$ will denote the joint type class of $(\bu, \bv)$, $\calT( \hat{P}_{\bu|\bv} | \bv)$ will stand for the conditional type class of $\bu$ given $\bv$, 
$\hat{H}_{\bu\bv}(U|V)$ will be the empirical conditional entropy, and so on. 
Likewise, when we wish to emphasize the dependence of empirical information measures upon a given empirical distribution $Q$, 
we denote them using the subscript $Q$, as described above.

\section{Problem Formulation and Background} \label{SEC3}

\subsection{Problem Formulation} \label{SEC3.1}
Let $(\bU,\bV) = \{(U_{t},V_{t})\}_{t=1}^{n}$ be $n$ independent copies of a pair of random variables, $(U,V) \sim P_{UV}$, taking on values in finite alphabets, $\calU$ and $\calV$, respectively. The vector $\bU$ will designate the source vector to be encoded and the vector $\bV$ will serve as correlated side information, available to the decoder. 
In ordinary VR binning, the coding rate is not fixed for every $\bu \in \calU^{n}$, but depends on its empirical distribution. Let us denote a rate function by $\RF(\cdot)$, which is a given continuous function from the probability simplex of $\calU$ to the set of nonnegative reals. In that manner, for every type $Q_{U} \in \calP_{n}(\calU)$, all source sequences in $\calT(Q_{U})$ are randomly partitioned into $e^{n \RF(Q_{U})}$ bins. 
Every source sequence is encoded by its bin index, denoted by $\calB(\bu)$, along with a header that indicates its type index, which requires only a negligible extra rate when $n$ is large enough.

The SD code ensemble is a refinement of the ordinary VR code: 
for types with $H_{Q}(U) \geq \RF(Q_{U})$, i.e., type classes which are exponentially larger than the amount of available bins, we just randomly assign each source sequence into one out of the $e^{n \RF(Q_{U})}$ bins.  
For the other types, we deterministically order each member of $\calT(Q_{U})$ into a different bin. 
This way, all type classes with $H_{Q}(U) < \RF(Q_{U})$ do not contribute to the probability of error.
The entire binning code of source sequences of block--length $n$, i.e., the set $\{\calB(\bu)\}_{\bu \in \calU^{n}}$, is denoted by $\calB_{n}$. A sequence of SW codes, $\{\calB_{n}\}_{n \geq 1}$, indexed by the block length $n$, will be denoted by $\calB$. 

The decoder estimates $\bu$ based on the bin index $\calB(\bu)$, the type index $\calT(\bu)$, and the side information sequence $\bv$, which is a realization of $\bV$. The optimal (MAP) decoder estimates $\bu$ according to
\begin{align}
\hat{\bu} =  \operatorname*{arg\,max}_{\bu' \in \calB(\bu)\cap\calT(\bu)} P(\bu',\bv).
\end{align}
As in \cite{MERHAV2017}, \cite{MERHAV_TYPICAL}, we consider here the GLD. The GLD estimates $\bu$ stochastically, using the bin index $\calB(\bu)$, the type index $\calT(\bu)$, and the SI sequence $\bv$, according to the following posterior distribution
\begin{align}
\label{GLD_DEF}
\prob \left\{ \hat{\bU}=\bu' \middle| \bv, \calB(\bu), \calT(\bu) \right\}  =
\frac{\exp \{n f( \hat{P}_{\bu' \bv} )\}} {\sum_{\tilde{\bu} \in \calB(\bu)\cap\calT(\bu)}  
	\exp \{n f( \hat{P}_{\tilde{\bu} \bv} ) \} } ,
\end{align}
where $\hat{P}_{\bu \bv}$ is the empirical distribution of $(\bu, \bv)$ and $f(\cdot)$ is a given continuous, real valued functional of this empirical distribution.
The GLD provides a unified framework which covers several important special cases, e.g., matched decoding, mismatched decoding, MAP decoding, and universal decoding (similarly to the $\alpha$--decoders described in \cite{CKgraph}). 
A more detailed discussion is given in \cite{MERHAV2017}.

The probability of error is the probability of the event $\{\hat{\bU} \neq \bU\}$.
For a given binning code $\calB_{n}$, the probability of error is given by
\begin{align} \label{DEF_Pe_SD}
P_{\mbox{\tiny e}} (\calB_{n}) 
= \sum_{\bu,\bv} P(\bu,\bv) 
\cdot \IND \left\{ \hat{H}_{\bu}(U) \geq \RF(\hat{P}_{\bu}) \right\} \cdot
\frac{\sum_{\bu' \in \calB(\bu)\cap\calT(\bu), \bu' \neq \bu} \exp\{nf(\hat{P}_{\bu'\bv})\}}{\sum_{\tilde{\bu} \in \calB(\bu)\cap\calT(\bu)} \exp\{nf(\hat{P}_{\tilde{\bu}\bv})\}}.
\end{align}
For a given rate function, we derive the random binning exponent of this ensemble, which is defined by 
\begin{align} \label{FR_random_coding}
\EE_{\mbox{\tiny r}}(\RF(\cdot))
= \lim_{n \to \infty} \left\{ - \frac {\log \mathbb{E} [P_{\mbox{\tiny e}}(\calB_{n})] }{n} \right\},
\end{align}
and compare it to the TRC exponent, which is
\begin{align} \label{FR_TRC}
\EE_{\mbox{\tiny trc}}(\RF(\cdot))
= \lim_{n \to \infty} \left\{ - \frac {\mathbb{E} [\log P_{\mbox{\tiny e}}(\calB_{n})] }{n} \right\}.
\end{align}  
Although it is unclear that the limits in \eqref{FR_random_coding} and \eqref{FR_TRC} exist a priori, it will be evident from the analyses in Appendixes A and B, respectively.

One way to define the excess--rate probability is as $\prob \{\RF(\hat{P}_{\bU}) \geq \rr\}$, where $\rr$ is some target rate \cite{SW9}. Due to the availability of side information at the decoder, it makes sense to require a target rate which depends on the pair $(\bu,\bv)$. 
Since the lowest possible compression rate in this setting is given by $H_{P}(U|V)$ \cite{SW}, then, given $\bU=\bu$ and $\bV=\bv$, it is reasonable to adopt $\hat{H}_{\bu\bv}(U|V)$ as a reference rate.
Hence, an alternative definition of the excess--rate probability of a code $\calB_{n}$, is as $p_{\mbox{\tiny er}}(\calB_{n}, \RF(\cdot),\Delta) = \prob \{\RF(\hat{P}_{\bU}) \geq \hat{H}_{\bU\bV}(U|V) + \Delta\}$, where $\Delta > 0$ is a redundancy threshold\footnote{Note that the entire analysis remains intact if we allow a more general redundancy threshold as $\Delta=\Delta(\hat{P}_{\bu\bv})$. This covers other alternatives for the excess--rate probability, e.g., $\prob \{\RF(\hat{P}_{\bU}) \geq \rr\}$ or $\prob \{\RF(\hat{P}_{\bU}) \geq \alpha \hat{H}_{\bU}(U)\}$.}.
Accordingly, the excess--rate exponent function, achieved by a sequence of codes $\calB$, is defined as
\begin{align} \label{ExcessRateExponent}
\EE_{\mbox{\tiny er}}(\calB,\RF(\cdot),\Delta)
=\liminf_{n \to \infty} -\frac {1}{n} \log p_{\mbox{\tiny er}}(\calB_{n}, \RF(\cdot),\Delta).
\end{align}
The main mission is to characterize the optimal trade--off between the error exponent and the excess--rate exponent for the typical random SD code, and the optimal rate function that attains a prescribed value for the error exponent of the typical random SD code.

\subsection{Background} \label{SEC3.2}
In pure channel coding, 
Merhav \cite{MERHAV_TYPICAL} has derived a single--letter expression for the error exponent of the typical random fixed composition code,
\begin{align}
\EE_{\mbox{\tiny trc}}(R,Q_{X})
= \lim_{n \to \infty} \left\{- \tfrac{1}{n} \mathbb{E} \left[\log P_{\mbox{\tiny e}}(\calC_{n}) \right] \right\}.
\end{align}
In order to present the main result of \cite{MERHAV_TYPICAL}, we define first a few quantities. 
Consider a DMC, $W = \{W(y|x),x \in \calX, y \in \calY\}$, where $\calX$ and $\calY$ are the finite input/output alphabets.  
Define  
\begin{align}
\alpha(R,Q_{Y}) = \max_{\{Q_{\tilde{X}|Y}:~I_{Q}(\tilde{X};Y) \leq R,~ Q_{\tilde{X}}=Q_{X}\}} \{g(Q_{\tilde{X}Y}) - I_{Q}(\tilde{X};Y)\} + R,
\end{align}
where the function $g(\cdot)$, which is the decoding metric, is a continuous function that maps joint probability distributions over $\calX \times \calY$ to real numbers.
Also define 
\begin{align}
\Gamma(Q_{XX'},R) &= \min_{Q_{Y|XX'}} \{ D(Q_{Y|X} \| W |Q_{X}) + I_{Q}(X';Y|X) \nonumber \\
&+ [\max\{g(Q_{XY}), \alpha(R,Q_{Y})\} - g(Q_{X'Y})]_{+} \},
\end{align}
where $D(Q_{Y|X} \| W |Q_{X})$ is the conditional divergence between $Q_{Y|X}$ and $W$, averaged by $Q_{X}$.
A brief intuitive explanation on the term $\Gamma(Q_{XX'},R)$ can be found in \cite[Section 4.1]{TMWG}.
Having defined the above quantities, the error exponent of the TRC is given by \cite{MERHAV_TYPICAL}
\begin{align} \label{TRCexponent}
\EE_{\mbox{\tiny trc}}(R,Q_{X})
= \min_{\{Q_{X'|X}:~I_{Q}(X;X') \leq 2R, ~Q_{X'}=Q_{X}\}} \{\Gamma(Q_{XX'},R) + I_{Q}(X;X') - R\}. 
\end{align}

Returning to the SW model, several articles have been written on error exponents for the FR and the VR codes. Here, we mention only those results that are directly relevant to the current work.
The random binning and expurgated bounds of the FR ensemble in the SW model are given, respectively, by \cite[Sec.\ VI, Th.\ 2]{CKgraph}, \cite[Appendix I, Th.\ 1]{GoodCodes}
\begin{align} \label{SW_RC}
E_{\mbox{\tiny r}}^{\mbox{\tiny fr}}(\rr)
&= \min_{Q_{U}} \left\{D(Q_{U}\|P_{U}) + E_{\mbox{\tiny r}}(Q_{U},P_{V|U},H_{Q}(U)-\rr)  \right\}, \\ \label{SW_Expurgated}
E_{\mbox{\tiny ex}}^{\mbox{\tiny fr}}(\rr)
&= \min_{Q_{U}} \left\{D(Q_{U}\|P_{U}) + E_{\mbox{\tiny ex}}(Q_{U},P_{V|U},H_{Q}(U)-\rr)  \right\},
\end{align} 
where $E_{\mbox{\tiny r}}(Q_{U},P_{V|U},S)$ and $E_{\mbox{\tiny ex}}(Q_{U},P_{V|U},S)$ are, respectively, the random coding and expurgated bounds associated with the channel $P_{V|U}$ w.r.t.\ the ensemble of fixed composition code of rate $S$, 
whose composition is $Q_{U}$. The exponent function $E_{\mbox{\tiny r}}(Q_{U},P_{V|U},S)$ is given by 
\begin{align} \label{Channel_rc}
E_{\mbox{\tiny r}}(Q_{U},P_{V|U},S) = 
\min_{Q_{V|U}} \{D(Q_{V|U} \| P_{V|U} | Q_{U}) + [I_{Q}(U;V) - S]_{+} \},
\end{align}
and $E_{\mbox{\tiny ex}}(Q_{U},P_{V|U},S)$ is given by
\begin{align} \label{Channel_ex}
E_{\mbox{\tiny ex}}(Q_{U},P_{V|U},S)
= \min_{\{Q_{U'|U}:~I_{Q}(U;U')\leq S, ~Q_{U'}=Q_{U}\}} \{\mathbb{E}_{Q_{UU'}}[d_{P_{V|U}}(U,U')] + I_{Q}(U;U') - S\},
\end{align}  
where
\begin{align}
d_{P_{V|U}}(u,u') = - \log \left[\sum_{v \in \calV} 
\sqrt{P_{V|U}(v|u)P_{V|U}(v|u')}\right].
\end{align}
The exact error exponent of VR random binning is given by \cite[eq.\ (34)]{SW9}     
\begin{align} \label{SW_RC_VR}
E_{\mbox{\tiny r}}^{\mbox{\tiny vr}}(\RF(\cdot)) = \min_{Q_{UV}} \left\{D(Q_{UV} \| P_{UV}) +[\RF(Q_{U}) - H_{Q}(U|V)]_{+} \right\}.
\end{align}

\section{Error Exponents and Universal Decoding}
\label{Sec_SD_VR}

To present some of the results, we need a few more definitions. 
The minimum conditional entropy (MCE) decoder estimates $\bu$, using the bin index $\calB(\bu)$ and the SI vector $\bv$, according to
\begin{align}
\hat{\bu} =  \operatorname*{arg\,min}_{\bu' \in \calB(\bu)\cap\calT(\bu)} \hH_{\bu'\bv}(U|V).
\end{align}
The stochastic conditional entropy (SCE) decoder estimates $\bu$ according to the following posterior distribution
\begin{align}
\label{SCE_DEF}
\prob \left\{ \hat{\bU}=\bu' \middle| \bv, \calB(\bu) , \calT(\bu)\right\}  =
\frac{\exp \{-n \hH_{\bu'\bv}(U|V) \}} {\sum_{\tilde{\bu} \in \calB(\bu)\cap\calT(\bu)} \exp \{-n \hH_{\tilde{\bu}\bv}(U|V) \}}.
\end{align}

First, we present random binning error exponents, which are modifications of \eqref{SW_RC_VR} to this ensemble. Define the expression 
\begin{align}
E(Q_{UV},\RF(\cdot)) = \min_{Q_{U'|V}} \left[\RF(Q_{U}) - H_{Q}(U'|V) + [f(Q_{UV}) - f(Q_{U'V})]_{+} \right]_{+}
\end{align}
and the exponent functions: 
\begin{align} \label{SW_RC_VR_SD}
E_{\mbox{\tiny r,GLD}}(\RF(\cdot)) 
&= \min_{\{Q_{UV}:~ H_{Q}(U) \geq \RF(Q_{U})\}} \left\{D(Q_{UV} \| P_{UV}) + E(Q_{UV},\RF(\cdot)) \right\},
\end{align}
and
\begin{align} \label{SW_RC_VR_SD_MAP}
E_{\mbox{\tiny r,MAP}}(\RF(\cdot)) = \min_{\{Q_{UV}:~ H_{Q}(U) \geq \RF(Q_{U})\}} \left\{D(Q_{UV} \| P_{UV}) +[\RF(Q_{U}) - H_{Q}(U|V)]_{+} \right\}.
\end{align}

The following result is proved in Appendix A.
\begin{theorem} \label{Thm_random_binning_SD_VR}
	Let $\RF(\cdot)$ be a given rate function. Then, for the SD ensemble,
	\begin{enumerate}
		\item $\EE_{\mbox{\tiny r}}(\RF(\cdot)) = E_{\mbox{\tiny r,GLD}}(\RF(\cdot))$ for the GLD,
		\item $\EE_{\mbox{\tiny r}}(\RF(\cdot)) = E_{\mbox{\tiny r,MAP}}(\RF(\cdot))$ for the MAP and MCE decoders.
	\end{enumerate}	
\end{theorem}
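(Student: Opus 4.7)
The plan is to compute $\mathbb{E}[P_{\mbox{\tiny e}}(\calB_n)]$ directly from \eqref{DEF_Pe_SD} via the method of types and then invoke Laplace's principle. Partition the outer sum by the joint type $Q_{UV}=\hat{P}_{\bu\bv}$; only types with $H_Q(U)\geq R(Q_U)$ contribute, owing to the indicator. For each such type, $P(\bu,\bv)\cdot|\calT(Q_{UV})|\doteq \exp\{-nD(Q_{UV}\|P_{UV})\}$, so the core task is, for every admissible $Q_{UV}$, to evaluate the expected value of the random ratio in \eqref{DEF_Pe_SD} over the binning, conditioned on $(\bu,\bv)\in\calT(Q_{UV})$.

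For the GLD I would organize competitors $\bu'$ by the conditional type $Q_{U'|V}$ of $(\bu',\bv)$; since $\bu'\in\calT(Q_U)$, the marginal $Q_{U'}=Q_U$ is forced. The number of $\bu'\in\calT(Q_U)$ with $\hat{P}_{\bu'\bv}=Q_{U'V}$ is $\doteq \exp\{nH_Q(U'|V)\}$, and under the SD rule (with $H_Q(U)\geq R(Q_U)$) each is independently assigned to $\calB(\bu)$ with probability $\exp\{-nR(Q_U)\}$, so the bin population of this conditional type concentrates exponentially on $\exp\{n[H_Q(U'|V)-R(Q_U)]_+\}$. Writing the denominator as a sum over conditional types and using the elementary sandwich $\tfrac{1}{2}\min\{1,a/b\}\leq \tfrac{a}{a+b}\leq \min\{1,a/b\}$ with $b=\exp\{nf(Q_{UV})\}$ (the deterministic anchor contributed by $\tilde{\bu}=\bu$ itself), followed by another application of Laplace, yields $\mathbb{E}[N/D\mid \bu,\bv]\doteq \exp\{-nE(Q_{UV},R(\cdot))\}$. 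Combining with $\exp\{-nD(Q_{UV}\|P_{UV})\}$ and minimizing over admissible $Q_{UV}$ produces \eqref{SW_RC_VR_SD}.

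For MAP and MCE, the random ratio collapses to an indicator of the event that some $\bu'\neq\bu$ in $\calB(\bu)\cap\calT(\bu)$ outscores $\bu$ under the respective metric. In both cases the dominant contribution at the exponential scale comes from competitors having the \emph{same} conditional type as $\bu$ relative to $\bv$: for MCE this is because any competing type must satisfy $H_{Q'}(U|V)\leq H_Q(U|V)$ while the bin-count exponent $[H_{Q'}(U|V)-R(Q_U)]_+$ grows with $H_{Q'}(U|V)$, and for MAP this is because $Q'=Q$ is always a winning competitor and the count of such sequences in the bin already provides the bulk. A standard union bound together with a truncated Paley--Zygmund lower bound shows that the expected event probability is $\doteq \exp\{-n[R(Q_U)-H_Q(U|V)]_+\}$. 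Assembling as in the GLD case yields \eqref{SW_RC_VR_SD_MAP}; the equality of the MAP and MCE exponents then also follows from the optimality of MAP (whose exponent cannot be smaller than that of MCE) sandwiched against this common upper bound.

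The principal obstacle is the tight two-sided control of the random denominator $D$ in the GLD ratio: $D$ is a random sum in which both the count of summands and the weights $\exp\{nf(\hat{P}_{\tilde{\bu}\bv})\}$ matter. The idea is to split $D$ into its deterministic anchor $\exp\{nf(Q_{UV})\}$ and the random competing part, argue that the maximum of $f(Q_{U'V})+H_Q(U'|V)-R(Q_U)$ over conditional types identifies the dominating summand, and invoke Chernoff-type concentration of bin occupancies to show that $D$ is, with high probability, within a sub-exponential factor of $\max\{\exp\{nf(Q_{UV})\},\,N\}$. This is what ultimately produces the $[f(Q_{UV})-f(Q_{U'V})]_+$ term inside $E(Q_{UV},R(\cdot))$.
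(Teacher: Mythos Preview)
Your overall architecture---condition on $(\bu,\bv)$, organize competitors by conditional type, method of types, Laplace---is the same as the paper's. But the GLD computation has a genuine gap at the step where you claim ``the bin population of this conditional type concentrates exponentially on $\exp\{n[H_Q(U'|V)-R(Q_U)]_+\}$.'' This is false in the sub-critical regime $H_Q(U'|V)<R(Q_U)$: there the binomial count $N(Q_{U'|V})$ is typically \emph{zero}, not one, and the contribution to the expected ratio comes through the event $\{N(Q_{U'|V})\ge 1\}$, which has probability $\doteq\exp\{n[H_Q(U'|V)-R(Q_U)]\}$. If you plug in the deterministic value $1$ you lose exactly the factor $\exp\{-n[R(Q_U)-H_Q(U'|V)]\}$ and arrive at $[f(Q_{UV})-f(Q_{U'V})]_+$ instead of the correct $R(Q_U)-H_Q(U'|V)+[f(Q_{UV})-f(Q_{U'V})]_+$; the resulting exponent will not match $E(Q_{UV},R(\cdot))$ whenever the minimizing $Q_{U'|V}$ lies in the sub-critical regime. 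Your last paragraph locates the obstacle in controlling the denominator, but the real issue is the numerator counts in this regime.

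Your sandwich route can be salvaged: after reducing to $\mathbb{E}[\min\{1,N/b\}]$ with $b=e^{nf(Q_{UV})}$, use $\min\{1,\sum_{Q'}a_{Q'}\}\doteq\max_{Q'}\min\{1,a_{Q'}\}$ (polynomially many types) to reduce to a single conditional type, and then compute $\mathbb{E}[\min\{1,N(Q')e^{nf(Q')}/b\}]$ by splitting into the two regimes of the binomial---concentration when $H_{Q'}(U'|V)>R(Q_U)$, and $\mathbb{P}\{N\ge 1\}\cdot\min\{1,e^{n[f(Q')-f(Q_{UV})]}\}$ when $H_{Q'}(U'|V)<R(Q_U)$. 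Both regimes collapse to the single expression $[R(Q_U)-H_{Q'}(U'|V)+[f(Q_{UV})-f(Q')]_+]_+$. The paper avoids this case split altogether by writing $\mathbb{E}[X]=\int_0^1\mathbb{P}\{X\ge s\}\,ds$, changing variables $s=e^{-n\xi}$, and reducing everything to the large-deviations rate function of the binomial tails; this is cleaner and automatically handles both regimes.

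For part~2, your direct union-bound/Paley--Zygmund computation for the MAP and MCE indicator events is correct and is in fact more direct than the paper's route, which first takes the $\beta\to\infty$ limit of the GLD expression (Step~3), then upper-bounds the resulting deterministic-decoder exponent by substituting $Q_{U'|V}=Q_{U|V}$ (Step~4), and finally shows that the MCE metric attains this bound (Step~5). Your observation that the same-conditional-type competitors dominate is exactly what drives Steps~4--5.
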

As a matter of fact, a special case of the second part of Theorem \ref{Thm_random_binning_SD_VR} has already been proved in \cite{OH1994} for the FR regime, while here, we prove a stronger result, according to which, the MCE decoder attains the same random binning error exponent as the MAP decoder, in the VR coding regime too. The first part of Theorem \ref{Thm_random_binning_SD_VR} is completely new; it proposes a single letter expression for the random binning error exponent, for a wide family of stochastic and deterministic decoders. Also, note that an analogous result to the first part of Theorem \ref{Thm_random_binning_SD_VR} has been proved in \cite{MERHAV2017}.
Comparing the expressions in \eqref{SW_RC_VR} and \eqref{SW_RC_VR_SD_MAP}, namely, the random binning error exponents of the ordinary VR and the SD VR ensembles, respectively, we find that they differ at relatively high coding rates, since these minimization problems share the same objective but \eqref{SW_RC_VR_SD_MAP} also has the constraint $H_{Q}(U) \geq \RF(Q_{U})$. The origin of this constraint is the deterministic coding of the relatively small type classes. 

Next, we provide a single--letter expression for the error exponent of the TRCs in this ensemble. 
We define
\begin{align} \label{DEF_gamma}
\gamma(\RF(\cdot),Q_{U},Q_{V}) 
&= \max_{\left\{\substack{Q_{\tilde{U}|V}:~Q_{\tilde{U}}=Q_{U}, \\ H_{Q}(\tilde{U}|V) \geq \RF(Q_{\tilde{U}}) }\right\}} \{f(Q_{\tilde{U}V}) + H_{Q}(\tilde{U}|V)\} - \RF(Q_{\tilde{U}}) 
\end{align}
and
\begin{align} \label{DEF_Psi}
\Psi(\RF(\cdot),Q_{UU'V}) &= \left[\max\{f(Q_{UV}), \gamma(\RF(\cdot),Q_{U},Q_{V})\} - f(Q_{U'V}) \right]_{+}   .
\end{align}
Furthermore, define 
\begin{align} \label{DEF_Lambda}
\Lambda(Q_{UU'},\RF(Q_{U}))
= \min_{Q_{V|UU'}} \left\{ \Psi(\RF(Q_{U}),Q_{UU'V}) - H_{Q}(V|U,U') - \mathbb{E}_{Q} [\log P(V|U)] \right\},
\end{align}
and the following exponent function
\begin{align} \label{VR_SD_SW_EXPRESSION}
E_{\mbox{\tiny trc,GLD}}(\RF(\cdot)) =
\min_{\left\{\substack{Q_{UU'}:~Q_{U'}=Q_{U}, \\ H_{Q}(U) \geq \RF(Q_{U}) } \right\}} \left\{\Lambda(Q_{UU'},\RF(Q_{U})) - \mathbb{E}_{Q} [\log P(U)] - H_{Q}(U,U') + \RF(Q_{U}) \right\}.
\end{align}
Then, the following theorem is proved in Appendix B.
\begin{theorem} \label{SW_THM}
	Let $\RF(\cdot)$ be a given rate function. Then, for the SD ensemble,
	\begin{align}
	\EE_{\mbox{\tiny trc}}(\RF(\cdot))
	= E_{\mbox{\tiny trc,GLD}}(\RF(\cdot)).
	\end{align}	
\end{theorem}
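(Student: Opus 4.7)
The plan is to adapt the typical-random-code methodology of \cite{MERHAV_TYPICAL}---originally developed for discrete memoryless channels under the GLD---to the present semi-deterministic Slepian--Wolf ensemble. I would begin by decomposing $P_{\mbox{\tiny e}}(\calB_n)$ in \eqref{DEF_Pe_SD} according to the joint type of $(\bu,\bv)$:
\begin{align*}
P_{\mbox{\tiny e}}(\calB_n) = \sum_{\{Q_{UV}:\ H_Q(U) \geq \RF(Q_U)\}} \sum_{(\bu,\bv) \in \calT(Q_{UV})} P(\bu,\bv) \cdot \phi(\bu,\bv,\calB_n),
\end{align*}
where $\phi(\bu,\bv,\calB_n)$ is the random GLD conditional error probability. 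For a fixed $(\bu,\bv)$ and a competitor type $Q_{U'V}$ with $Q_{U'}=Q_U$, introduce the type enumerator
\begin{align*}
N_{Q_{U'V}}(\bu,\bv) \dfn \bigl| \{\bu' \in \calB(\bu) \cap \calT(Q_{U'|V}|\bv):\ \bu' \neq \bu\} \bigr|,
\end{align*}
so that the numerator and denominator of $\phi$ may be written as $\sum_{Q_{U'|V}} N_{Q_{U'V}}\, e^{nf(Q_{U'V})}$ (with an additional contribution $e^{nf(Q_{UV})}$ from $\bu$ itself in the denominator).

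Since $H_Q(U) \geq \RF(Q_U)$, the type class $\calT(Q_U)$ is binned at random into $e^{n\RF(Q_U)}$ bins under the SD scheme, so $N_{Q_{U'V}}(\bu,\bv)$ is a sum of weakly dependent Bernoulli indicators. As in the concentration arguments of \cite{MERHAV_TYPICAL, TMWG}, one obtains $(1/n)\mathbb{E}[\log(1+N_{Q_{U'V}})] \to [H_Q(U'|V) - \RF(Q_U)]_+$ together with doubly-exponential concentration. Combined with the method-of-types identity that converts sums of exponentials into maxima, this yields $(1/n)\log Z_{\mbox{\tiny den}}(\bu,\bv) \to \max\{f(Q_{UV}), \gamma(\RF(\cdot),Q_U,Q_V)\}$, with $\gamma$ as in \eqref{DEF_gamma}; the constraint $H_Q(\tilde U|V) \geq \RF(Q_{\tilde U})$ in the definition of $\gamma$ reflects the fact that only joint types with conditional entropy above the bin rate contribute typically to the partition function.

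To evaluate $\mathbb{E}[\log P_{\mbox{\tiny e}}(\calB_n)]$, re-expand $P_{\mbox{\tiny e}}$ as a triple sum over $(\bu,\bu',\bv)$ with $\bu' \in \calB(\bu)\cap\calT(\bu)$, $\bu' \neq \bu$, and joint type $Q_{UU'V}$. The total (random-binning) count of such triples is $\exp\{n[H_Q(U,U',V) - \RF(Q_U)]\}$; each contributes a source factor $P(\bu,\bv) \exe e^{n\mathbb{E}_Q[\log P(U,V)]}$ and, by the previous stage, a typical GLD weight $e^{nf(Q_{U'V})}/Z_{\mbox{\tiny den}} \exe e^{-n\Psi(\RF,Q_{UU'V})}$ with $\Psi$ as in \eqref{DEF_Psi}. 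Since there are only polynomially many joint types, $\mathbb{E}[\log(\text{sum})] \exe \max \mathbb{E}[\log]$, whence
\begin{align*}
-\frac{1}{n}\mathbb{E}[\log P_{\mbox{\tiny e}}(\calB_n)] \to \min_{Q_{UU'V}} \bigl\{\RF(Q_U) - H_Q(U,U',V) - \mathbb{E}_Q[\log P(U,V)] + \Psi(\RF,Q_{UU'V})\bigr\},
\end{align*}
subject to $Q_{U'}=Q_U$ and $H_Q(U) \geq \RF(Q_U)$. Splitting $H_Q(U,U',V) = H_Q(U,U')+H_Q(V|U,U')$ and $\mathbb{E}_Q[\log P(U,V)] = \mathbb{E}_Q[\log P(U)]+\mathbb{E}_Q[\log P(V|U)]$, and performing the inner minimization over $Q_{V|UU'}$ (which reproduces $\Lambda$ from \eqref{DEF_Lambda}), gives precisely $E_{\mbox{\tiny trc,GLD}}(\RF(\cdot))$ in \eqref{VR_SD_SW_EXPRESSION}.

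The main technical obstacle is justifying the passage $\mathbb{E}[\log(\cdot)] \exe \log(\max\mathbb{E})$ through the nested sums and through the ratio $Z_{\mbox{\tiny num}}/Z_{\mbox{\tiny den}}$: the numerator and denominator share the same underlying random binning, so one cannot take expectations term by term. The resolution is that both $\log Z_{\mbox{\tiny num}}$ and $\log Z_{\mbox{\tiny den}}$ concentrate sharply around their typical exponents, which decouples the ratio on the logarithmic scale. A secondary subtlety is accounting properly for rare-event contributions of competitor joint types with $H_Q(U'|V) < \RF(Q_U)$, which do not appear in $\gamma$ but still enter $\Psi$; this requires matching upper and lower bounds on $\mathbb{E}[\log P_{\mbox{\tiny e}}]$, each obtained by carefully combining the type-enumerator concentration with the method of types, in the spirit of \cite{MERHAV_TYPICAL}.
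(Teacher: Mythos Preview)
Your proposal is correct and follows essentially the same approach as the paper: both adapt the typical-random-code machinery of \cite{MERHAV_TYPICAL} to the SD Slepian--Wolf ensemble, using type-class enumerators, concentration of the GLD partition function (the paper's Lemma~\ref{Large_Deviations_Z}), and matching upper/lower bounds on $\mathbb{E}[\log P_{\mbox{\tiny e}}(\calB_n)]$. The organization differs only in one respect worth noting: for the lower bound on the exponent, the paper does not rely on pointwise concentration of the pair enumerator $N(Q_{UU'})$ but instead invokes the inequality $\mathbb{E}[\log X]\le\rho\log\mathbb{E}[X^{1/\rho}]$ and evaluates $\mathbb{E}[N(Q_{UU'})^{1/\rho}]$ directly (bounding it via $N_{\bu}(Q_{U'|U})$, which \emph{is} binomial), then sends $\rho\to\infty$; your sketch leans instead on double-exponential concentration throughout, which is also valid and is what the paper uses for the matching upper bound.
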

As explained before, an analogous result has already been proved in pure channel coding \cite{MERHAV_TYPICAL}, and one can find a high degree of similarity between the expressions in \eqref{DEF_gamma}-\eqref{VR_SD_SW_EXPRESSION} and the expressions in Subsection \ref{SEC3.2}. 
While in channel coding, the coding rate is fixed, here, on the other hand, we allow the rate to depend on the type class of the source. In order to optimize the rate function, we constrain the problem by introducing the excess--rate exponent \eqref{ExcessRateExponent}, which is the exponential rate of decay of the probability that the compression rate will be higher than some predefined level. A detailed discussion on optimal rate functions and optimal trade--offs between these two exponents can be found in Section \ref{SEC5}.  

The definition of the error exponent of the TRC as in \eqref{FR_TRC} should not be taken for granted. The reason for that is the following. It turns out that the definition in \eqref{FR_TRC} and the value of $-\tfrac{1}{n} \log P_{\mbox{\tiny e}}(\calB_{n})$ for the highly probable codes in the ensemble may not be the same, and they coincide if and only if the ensemble does not contain both zero error probability codes and positive error probability codes.
For example, the FR ensemble in SW coding contains the one-to-one code (which obviously attains $P_{\mbox{\tiny e}}(\calB_{n})=0$) as long as $R \geq \log|\calU|$, but it is definitely not a typical code, at least when ordinary random binning is considered. Hence, in this case, we conclude that $-\tfrac{1}{n} \mathbb{E}[\log P_{\mbox{\tiny e}}(\calB_{n})]=\infty$, while the value of $-\tfrac{1}{n} \log P_{\mbox{\tiny e}}(\calB_{n})$ for the highly probable codes is still finite. 
As for the SD code ensemble, the definition in \eqref{FR_TRC} indeed provides the error exponent of the highly probable codes in the ensemble, which is explained by the following reasoning. 
For any given rate function such that $R(Q_{U}) < H_{Q}(U)$ for at least one type class, then all the type classes with $R(Q_{U}) < H_{Q}(U)$ are encoded by random binning, thus, all the codes in the ensemble have a strictly positive error probability, which implies that the value of $-\tfrac{1}{n} \log P_{\mbox{\tiny e}}(\calB_{n})$ concentrates around the error exponent of the TRC, as defined in \eqref{FR_TRC}. 

The proof of Theorem \ref{SW_THM} follows exactly the same lines as the proof of \cite[Theorem 1]{MERHAV_TYPICAL}, except for one main modification: when we introduce the type class enumerator $N(Q_{UU'})$ (see below) and sum over joint types, the summation set becomes $\{Q_{UU'}:~ Q_{U'}=Q_{U},~H_{Q}(U) \geq \RF(Q_{U})\}$, where the constraint $H_{Q}(U) \geq \RF(Q_{U})$ is due to the indicator function in \eqref{DEF_Pe_SD}. Afterwards, the analysis of the type class enumerator yields the constraint $H_{Q}(U,U') \geq \RF(Q_{U})$, which becomes redundant and thus omitted.      	
This constraint is analogous to the constraint $I_{Q}(X;X') \leq 2R$ in the minimization of \eqref{TRCexponent}. The origin of $H_{Q}(U,U') \geq R$ is the following. 
Define 
\begin{align} \label{DEF_ENUMERATOR}
N(Q_{UU'}) = \sum_{(\bu,\bu') \in \calT(Q_{UU'})} \IND \left\{\calB(\bu') = \calB(\bu) \right\},
\end{align}
which enumerate pairs of source sequences. 
Then, one of the main steps in the proof of Theorem \ref{SW_THM} is deriving the
high probability value of $N(Q_{UU'})$, which is $0$ if $H_{Q}(U,U') < R$ (a relatively small set of source pair and relatively large number of bins) and $\exp\{n[H_{Q}(U,U') - R]\}$ for $H_{Q}(U,U') \geq R$ (a large set of source sequence pair and a small number of bins). One should note that the analysis of $N(Q_{UU'})$ is not trivial, since it is not a binomial random variable, i.e., the enumerator $N(Q_{UU'})$ is given by the sum of dependent binary random variables. 
For a sum $N$ of independent binary random variables, ordinary tools from large deviation theory (e.g., the Chernoff bound) can be invoked for assessing the exponential moments $\mathbb{E} [N^{s}]$, $s \geq 0$, or the large deviation rate function of $\prob \{N \geq e^{n \sigma}\}$, $\sigma \in \reals$.   
For sums of dependent binary random variables, like $N(Q_{UU'})$ in the current problem, 
this can no longer be done by the same techniques, 
and it requires more advanced tools (see, e.g., \cite{MERHAV_TYPICAL}--\cite{MERHAV_IID}). 

It is possible to compare \eqref{SW_RC_VR_SD} and \eqref{VR_SD_SW_EXPRESSION} analytically in the special cases of the matched or the mismatched likelihood decoders and the MCE decoder. 
In the following theorem, the choice $f(Q_{UV}) = \beta \mathbb{E}_{Q} [\log \tilde{P}(U,V)]$, where $\tilde{P}(U,V)$ is a possibly different source distribution than $P(U,V)$, corresponds to a family of stochastic mismatched decoders.
We have the following result, the proof of which is given in Appendix D.
\begin{theorem} \label{Thm_comparison}
	Consider the SD ensemble and a given rate function $\RF(\cdot)$. Then,
	\begin{enumerate}
		\item For a GLD with the decoding metric $f(Q) = \beta \mathbb{E}_{Q} [\log \tilde{P}(U,V)]$, for a given $\beta > 0$,
		\begin{align}
		E_{\mbox{\tiny trc,GLD}}(\RF(\cdot)) = E_{\mbox{\tiny r,GLD}}(\RF(\cdot)).
		\end{align}
		\item For the MCE decoder,
		\begin{align}
		E_{\mbox{\tiny trc,MCE}}(\RF(\cdot)) = E_{\mbox{\tiny r,MCE}}(\RF(\cdot)).
		\end{align} 
	\end{enumerate} 	
\end{theorem}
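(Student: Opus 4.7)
The approach is a direct analytical comparison of the two single-letter formulas, \eqref{SW_RC_VR_SD} and \eqref{VR_SD_SW_EXPRESSION}. Since $-\log$ is convex, Jensen's inequality applied to \eqref{FR_random_coding} and \eqref{FR_TRC} gives the generic bound $\EE_{\mbox{\tiny trc}}(R(\cdot))\geq\EE_{\mbox{\tiny r}}(R(\cdot))$, so it suffices to prove the reverse inequality in both parts. My first step would be to recast $E_{\mbox{\tiny trc,GLD}}(R(\cdot))$ in the compact form
\begin{align}
\min\{D(Q_{UV}\|P_{UV})+\Psi(R(\cdot),Q_{UU'V})-H_{Q}(U'|U,V)+R(Q_{U})\},
\end{align}
where the minimum is over $Q_{UU'V}$ with $Q_{U'}=Q_{U}$ and $H_{Q}(U)\geq R(Q_{U})$; this is obtained by combining $-\mathbb{E}_{Q}[\log P(U)]-\mathbb{E}_{Q}[\log P(V|U)]$ into $D(Q_{UV}\|P_{UV})+H_{Q}(U)+H_{Q}(V|U)$ and cancelling against $-H_{Q}(U,U')-H_{Q}(V|U,U')$. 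For each fixed $Q_{UV}$, the inner optimum over $Q_{U'|UV}$ can be attained by maximizing $H_{Q}(U'|U,V)$ through the conditional independence $U'\perp U\mid V$, which saturates $H_{Q}(U'|U,V)=H_{Q}(U'|V)$ and further reduces the inner problem to a minimization over $Q_{U'|V}$ with $Q_{U'}=Q_{U}$.

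For part 2 (MCE), substituting $f(Q_{UV})=-H_{Q}(U|V)$ in \eqref{DEF_gamma} collapses the objective $f(Q_{\tilde{U}V})+H_{Q}(\tilde{U}|V)$ to zero identically, so $\gamma(R(\cdot),Q_{U},Q_{V})=-R(Q_{U})$ (the constraint set being nonempty via the independent-$\tilde{U}$ candidate under the outer restriction $H_{Q}(U)\geq R(Q_{U})$). This yields $\Psi=[H_{Q}(U'|V)-\min\{H_{Q}(U|V),R(Q_{U})\}]_{+}$, and a brief case analysis on whether the $[\cdot]_{+}$ bracket is active shows that the inner minimum equals $[R(Q_{U})-H_{Q}(U|V)]_{+}$, attained at the product extension $U'\perp(U,V)$ with marginal $Q_{U}$. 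The resulting objective $D(Q_{UV}\|P_{UV})+[R(Q_{U})-H_{Q}(U|V)]_{+}$ matches $E_{\mbox{\tiny r,MAP}}(R(\cdot))=E_{\mbox{\tiny r,MCE}}(R(\cdot))$ of \eqref{SW_RC_VR_SD_MAP} via Theorem \ref{Thm_random_binning_SD_VR}, completing part 2.

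For part 1 (matched/mismatched GLD), I would exploit the linearity of $f(Q_{UV})=\beta\mathbb{E}_{Q}[\log\tilde{P}(U,V)]$ in $Q_{UV}$. Evaluating \eqref{DEF_gamma} at the independent-$\tilde{U}$ candidate (feasible by $H_{Q}(U)\geq R(Q_{U})$) yields the lower bound $\gamma\geq\beta\sum_{u,v}Q_{U}(u)Q_{V}(v)\log\tilde{P}(u,v)+H_{Q}(U)-R(Q_{U})$, and I would argue this is in fact tight by a KKT-type analysis exploiting the binding marginal constraint $Q_{\tilde{U}}=Q_{U}$. Plugging this closed form for $\gamma$ and the conditional independence extension $U'\perp U\mid V$ into the reduced TRC inner objective, then running a case analysis on the signs of the two $[\cdot]_{+}$ operators (one inside $\Psi$ and the other in $E(Q_{UV},R(\cdot))$ of \eqref{SW_RC_VR_SD}), should recover, term by term, the random binning inner objective at the same $Q_{U'|V}$. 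The outer $[\cdot]_{+}$ on the random binning side is dealt with by observing that the overall minimum $E_{\mbox{\tiny trc,GLD}}(R(\cdot))\geq E_{\mbox{\tiny r,GLD}}(R(\cdot))\geq 0$ is automatically non-negative, so any non-positive evaluations of the TRC objective cannot be minimizers.

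The main obstacle is verifying the tightness of the lower bound on $\gamma$ in part 1. The objective $f(Q_{\tilde{U}V})+H_{Q}(\tilde{U}|V)$ is affine plus concave in $Q_{\tilde{U}|V}$, and while the marginal constraint $Q_{\tilde{U}}=Q_{U}$ is binding, the entropy constraint $H_{Q}(\tilde{U}|V)\geq R(Q_{U})$ is typically slack at the independent candidate (where $H_{Q}(\tilde{U}|V)=H_{Q}(U)\geq R(Q_{U})$), so ordinary convexity arguments do not directly pin down the maximizer. A careful Lagrangian or perturbation analysis—varying $Q_{\tilde{U}|V}$ away from independence while preserving the marginal—is needed to confirm that no such perturbation strictly increases $f(Q_{\tilde{U}V})+H_{Q}(\tilde{U}|V)$, in analogy with the delicate steps in the channel coding counterpart of \cite{MERHAV_TYPICAL}.
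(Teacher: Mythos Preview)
Your Part 2 argument is correct and, in fact, more direct than the paper's. The paper deduces $E_{\mbox{\tiny trc,MCE}}=E_{\mbox{\tiny r,MCE}}$ from Part 1 via the sandwich
\[
E_{\mbox{\tiny trc,MAP}}=E_{\mbox{\tiny r,MAP}}=E_{\mbox{\tiny r,MCE}}\leq E_{\mbox{\tiny trc,MCE}}\leq E_{\mbox{\tiny trc,MAP}},
\]
using Theorem~\ref{Thm_random_binning_SD_VR} and the optimality of MAP. Your direct substitution $f(Q)=-H_Q(U|V)$, which collapses $\gamma$ to $-R(Q_U)$ identically, is essentially the calculation the paper carries out in Appendix~E for the SCE decoder; it yields Part 2 without any reliance on Part 1.

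Your Part 1 argument, however, has a genuine gap: the claim that the maximizer of \eqref{DEF_gamma} is the independent candidate $\tilde{U}\perp V$ is false in general for $f(Q)=\beta\,\mathbb{E}_Q[\log\tilde{P}(U,V)]$. The objective $f(Q_{\tilde{U}V})+H_Q(\tilde{U}|V)$ is affine-plus-concave in $Q_{\tilde{U}|V}$, and a perturbation away from independence that increases the correlation between $\tilde{U}$ and $V$ in the direction favored by $\tilde{P}$ will strictly increase the linear term; for large $\beta$ this gain dominates the entropy loss, so no KKT analysis can certify optimality of the independent point. Consequently your ``closed form for $\gamma$'' is only a lower bound, and substituting a lower bound on $\gamma$ into $\Psi$ yields a \emph{lower} bound on $E_{\mbox{\tiny trc,GLD}}$, which is the wrong direction for establishing $E_{\mbox{\tiny trc,GLD}}\leq E_{\mbox{\tiny r,GLD}}$.

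The paper's fix is to avoid computing $\gamma$ altogether. After the same recasting and conditional-independence reduction you describe, the paper splits on whether $\max\{f(Q_{UV}),\gamma\}$ equals $f(Q_{UV})$ or $\gamma$ (a dichotomy determined by $Q_{UV}$ alone, not by $Q_{U'|V}$). In the first case the inner minimum over $Q_{U'|V}$ is exactly the inner expression in \eqref{SW_RC_VR_SD}. In the second case one chooses $Q_{U'|V}=Q^{*}_{\tilde{U}|V}$, the \emph{unknown} maximizer of \eqref{DEF_gamma}; because $Q^{*}$ satisfies $H_{Q^{*}}(\tilde{U}|V)\geq R(Q_U)$ by feasibility, the inner objective telescopes to $0$ at this choice. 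Either way the inner minimum is bounded above by $\max\{E(Q_{UV},R(\cdot)),0\}$, which matches the random-binning inner expression. The linearity of $f$ is invoked only to justify a min--max interchange when combining the two cases, not to locate the maximizer of $\gamma$.
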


This result is quite surprising at first glance, since one expects the error exponent of the TRC to be strictly better than the random binning error exponent, as in ordinary channel coding at relatively low coding rates \cite{BargForney}, \cite{MERHAV_TYPICAL}. This phenomenon is due to the fact that part of the source type classes are deterministically partitioned into bins in a one--to--one fashion, and hence do not affect the probability of error (notice that the constraint $H_{Q}(U) \geq \RF(Q_{U})$ appears in both the random binning and the TRC exponents, while in the latter, it made the original constraint $H_{Q}(U,U') \geq \RF(Q_{U})$ redundant). In the cases of FR or ordinary VR binning, these relatively ``thin'' type classes dominated the error probability at relatively high binning rates, but now, by encoding them deterministically into the bins, 
other mechanisms dominate the error event, like the channel noise (between $\bU$ and $\bV$) or the random binning of the type classes with $H_{Q}(U) \geq \RF(Q_{U})$. 
The result of the second part of Theorem \ref{Thm_comparison} is also nontrivial, since it establishes an equality between the error exponent of the TRC and the random binning error exponent, but now for a universal decoder.

Concerning universal decoding, it is already known \cite[Exercise 3.1.6]{CK11}, \cite{SW9} that the random binning error exponents under optimal MAP decoding in both the FR and VR codes, given by \eqref{SW_RC} and \eqref{SW_RC_VR}, respectively, are also attained by the MCE decoder. Furthermore, a similar result for the SD ensemble has been proved here in Theorem \ref{Thm_random_binning_SD_VR}. 
The natural question that arises is whether the error exponent of the TRC is also universally attainable. The following result, which is proved in Appendix E, provides a positive answer to this question.  

\begin{theorem} \label{Thm_Universal}
	Consider the SD ensemble and a given rate function $\RF(\cdot)$. Then, the error exponents of the TRC under the MAP, the MCE, and the SCE decoders are all equal, i.e.,
	\begin{align} \label{3KINGS}
	E_{\mbox{\tiny trc,MAP}}(\RF(\cdot)) 
	= E_{\mbox{\tiny trc,MCE}}(\RF(\cdot))
	= E_{\mbox{\tiny trc,SCE}}(\RF(\cdot)).
	\end{align} 	
\end{theorem}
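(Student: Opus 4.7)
The plan is to identify the single anchor value $E_{\mbox{\tiny r,MAP}}(\RF(\cdot))$ of \eqref{SW_RC_VR_SD_MAP} and collapse all three quantities in \eqref{3KINGS} to it. Two of the three collapses are already at hand: Theorem~\ref{Thm_comparison} part 2 gives $E_{\mbox{\tiny trc,MCE}}(\RF(\cdot)) = E_{\mbox{\tiny r,MCE}}(\RF(\cdot))$, and Theorem~\ref{Thm_random_binning_SD_VR} part 2 gives $E_{\mbox{\tiny r,MCE}}(\RF(\cdot)) = E_{\mbox{\tiny r,MAP}}(\RF(\cdot))$. What remains is to establish $E_{\mbox{\tiny trc,SCE}}(\RF(\cdot)) = E_{\mbox{\tiny r,MAP}}(\RF(\cdot))$ and $E_{\mbox{\tiny trc,MAP}}(\RF(\cdot)) = E_{\mbox{\tiny r,MAP}}(\RF(\cdot))$.

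For SCE, I would specialize Theorem~\ref{SW_THM} to $f(Q_{UV}) = -H_Q(U|V)$. Since $f(Q_{\tilde U V}) + H_Q(\tilde U|V)$ in \eqref{DEF_gamma} vanishes identically, and the feasibility set is non-empty (taking $\tilde U$ independent of $V$ gives $H_Q(\tilde U|V)=H_Q(U) \geq \RF(Q_U)$), I obtain $\gamma(\RF(\cdot),Q_U,Q_V) = -\RF(Q_U)$ and $\Psi_{\mbox{\tiny SCE}}(\RF(\cdot),Q_{UU'V}) = [H_Q(U'|V) - \min\{H_Q(U|V),\RF(Q_U)\}]_+$. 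Inserting this into \eqref{DEF_Lambda} and \eqref{VR_SD_SW_EXPRESSION} and applying the chain rule
\[
D(Q_{V|UU'}\|P_{V|U}|Q_{UU'}) = D(Q_{V|UU'}\|Q_{V|U}|Q_{UU'}) + D(Q_{V|U}\|P_{V|U}|Q_U),
\]
together with $D(Q_U\|P_U) + D(Q_{V|U}\|P_{V|U}|Q_U) = D(Q_{UV}\|P_{UV})$ and $H_Q(U) - H_Q(U,U') = -H_Q(U'|U)$, rewrites the objective of the outer minimization as $\Psi_{\mbox{\tiny SCE}} + D(Q_{V|UU'}\|Q_{V|U}|Q_{UU'}) + D(Q_{UV}\|P_{UV}) - H_Q(U'|U) + \RF(Q_U)$. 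The inner minimization over $Q_{V|UU'}$ kills the auxiliary divergence, and a well-chosen witness for $Q_{U'|U}$ together with a case split on the ordering of $H_Q(U|V)$ and $\RF(Q_U)$ drives the four-variable program \eqref{VR_SD_SW_EXPRESSION} down to the two-variable expression \eqref{SW_RC_VR_SD_MAP}.

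For $E_{\mbox{\tiny trc,MAP}}$, the optimality of MAP gives $P_{\mbox{\tiny e}}^{\mbox{\tiny MAP}}(\calB_n) \leq P_{\mbox{\tiny e}}^{\mbox{\tiny SCE}}(\calB_n)$ pointwise, hence $E_{\mbox{\tiny trc,MAP}}(\RF(\cdot)) \geq E_{\mbox{\tiny trc,SCE}}(\RF(\cdot)) = E_{\mbox{\tiny r,MAP}}(\RF(\cdot))$. For the reverse inequality I would replay the enumerator-based analysis behind Theorem~\ref{SW_THM} but for the deterministic MAP decoder, writing the error event as
\[
\IND\{\exists\,\bu'\in\calB(\bu)\cap\calT(\bu)\setminus\{\bu\}:\mathbb{E}_{\hat P_{\bu'\bv}}[\log P(V|U)]\geq \mathbb{E}_{\hat P_{\bu\bv}}[\log P(V|U)]\},
\]
invoking the high-probability value $N(Q_{UU'}) \doteq \exp\{n[H_Q(U,U')-\RF(Q_U)]\}$ of the enumerator \eqref{DEF_ENUMERATOR}, and minimizing over joint types $Q_{UU'V}$ satisfying the MAP error condition; the same chain-rule manipulation used for SCE should collapse the result to \eqref{SW_RC_VR_SD_MAP}.

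The main obstacle I foresee lies in the SCE simplification: the nested interplay of the outer $[\cdot]_+$ in $\Psi_{\mbox{\tiny SCE}}$, the inner $\min\{H_Q(U|V),\RF(Q_U)\}$, the marginal constraint $Q_{U'}=Q_U$, and the chain-rule-decomposed divergences produces multiple subregimes, in each of which an explicit witness pair $(Q_{U'|U},Q_{V|UU'})$ must be exhibited that simultaneously annihilates the auxiliary divergence, extremizes $\Psi_{\mbox{\tiny SCE}}$, and matches the optimizer of \eqref{SW_RC_VR_SD_MAP}.
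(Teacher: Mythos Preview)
Your overall architecture matches the paper's: anchor at $E_{\mbox{\tiny r,MAP}}(\RF(\cdot))$, get $E_{\mbox{\tiny trc,MCE}}$ from Theorems~\ref{Thm_random_binning_SD_VR} and~\ref{Thm_comparison}, and handle SCE by specializing Theorem~\ref{SW_THM} to $f(Q_{UV})=-H_Q(U|V)$, obtaining $\gamma=-\RF(Q_U)$ and $\Psi_{\mbox{\tiny SCE}}=[H_Q(U'|V)-\min\{H_Q(U|V),\RF(Q_U)\}]_+$. Up to here you coincide with the paper verbatim.

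Where you diverge is in what comes next, and the obstacle you flag is self-inflicted. The paper does \emph{not} build witnesses or case-split. It simply uses the entropy inequality $H_Q(U'|V)\geq H_Q(U'|U,V)$ to lower-bound $\Psi_{\mbox{\tiny SCE}}$ by $[H_Q(U'|U,V)-\min\{H_Q(U|V),\RF(Q_U)\}]_+$, inserts this into the reformulation of \eqref{VR_SD_SW_EXPRESSION} as a minimum over $Q_{UU'V}$, and then applies the elementary identity $-a+[a-b]_+=-\min\{a,b\}$ with $a=H_Q(U'|U,V)$. A second inequality, $H_Q(U'|U,V)\leq H_Q(U')=H_Q(U)$, together with the standing constraint $H_Q(U)\geq \RF(Q_U)$, makes the $U'$-dependence disappear and leaves exactly $D(Q_{UV}\|P_{UV})+[\RF(Q_U)-H_Q(U|V)]_+$. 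Since only the bound $E_{\mbox{\tiny trc,SCE}}\geq E_{\mbox{\tiny r,MAP}}$ is needed here (the reverse comes from MAP optimality), inequalities are fine and no witness construction is required.

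Your MAP plan is also unnecessary work. You do not need a separate enumerator analysis for the deterministic MAP decoder: Theorem~\ref{Thm_comparison} part~1 with $\tilde P=P$ and $\beta\to\infty$ already gives $E_{\mbox{\tiny trc,MAP}}(\RF(\cdot))=E_{\mbox{\tiny r,MAP}}(\RF(\cdot))$ directly (this is exactly passage~(a) in the chain \eqref{Beautiful}).
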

Theorem \ref{Thm_Universal} asserts that the error exponent of the typical random SD code is not affected if the optimal MAP decoder is replaced by a certain universal decoder, that must not even be deterministic.
While the left hand equality in \eqref{3KINGS} follows immediately from the results of Theorems \ref{Thm_random_binning_SD_VR} and \ref{Thm_comparison}, the right hand equality in \eqref{3KINGS} is far less trivial, since the SCE decoder is both universal and stochastic, and hence, its TRC exponent is expected to be inferior w.r.t.\ the TRC exponent under MAP decoding, but nevertheless, they turn out to be equal.    
Comparing to channel coding, 
it has been recently proved in \cite{TM_MMI} that 
the error exponent of the typical random fixed composition code (given in \eqref{TRCexponent}) is the same for the ML and the maximum mutual information decoder, but on the other hand, numerical evidence shows that a GLD which is based on an empirical mutual information metric attains a strictly lower exponent. 

\section{Optimal Trade-off Functions} \label{SEC5} 
In this section, we study the optimal trade--off between the threshold $\Delta$, the error exponent of the TRC, and the excess--rate exponent. Since both exponents depend on the rate function, we wish to characterize rate functions that are optimal w.r.t.\ this trade--off.
Since a single--letter characterization of the error exponent of the TRC has already been given in \eqref{VR_SD_SW_EXPRESSION}, we next provide a single-letter expression for the excess--rate exponent. Define the following exponent function:
\begin{align} \label{Excess_exponent}
E_{\mbox{\tiny er}}(\RF(\cdot),\Delta) = \min_{\{Q_{UV}:~ \RF(Q_{U}) \geq H_{Q}(U|V) + \Delta\}} D(Q_{UV} \| P_{UV}).
\end{align}
Then, we have the following.
\begin{proposition} \label{Propy1}
	Fix $\Delta>0$ and let $\RF(\cdot)$ be any rate function. Then,
	\begin{align}
	\EE_{\mbox{\tiny er}}(\calB,\RF(\cdot),\Delta)
	= E_{\mbox{\tiny er}}(\RF(\cdot),\Delta).
	\end{align}	
\end{proposition}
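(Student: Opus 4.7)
The key observation is that the excess--rate event, $\{\RF(\hat{P}_{\bU}) \geq \hat{H}_{\bU\bV}(U|V) + \Delta\}$, depends only on the joint empirical distribution of the random source pair $(\bU,\bV)$ and not on the binning code $\calB_{n}$ at all. Consequently $p_{\mbox{\tiny er}}(\calB_{n},\RF(\cdot),\Delta)$ takes the same value for every realization of $\calB_{n}$, so the $\liminf$ in the definition of $\EE_{\mbox{\tiny er}}(\calB,\RF(\cdot),\Delta)$ reduces to a purely probabilistic large--deviations problem for the i.i.d.\ source $(\bU,\bV)\sim P_{UV}^{n}$.

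The plan is to apply the method of types. Writing
\begin{align}
p_{\mbox{\tiny er}}(\calB_{n},\RF(\cdot),\Delta)
= \sum_{\left\{Q_{UV} \in \calP_{n}(\calU \times \calV):~\RF(Q_{U}) \geq H_{Q}(U|V) + \Delta\right\}} \prob\{(\bU,\bV) \in \calT(Q_{UV})\},
\end{align}
and invoking the standard type--class bounds $\prob\{(\bU,\bV)\in\calT(Q_{UV})\} \doteq \exp\{-nD(Q_{UV}\|P_{UV})\}$ together with the fact that $|\calP_{n}(\calU \times \calV)|$ grows only polynomially in $n$, one immediately obtains
\begin{align}
p_{\mbox{\tiny er}}(\calB_{n},\RF(\cdot),\Delta)
\doteq \exp\left\{ - n \min_{\left\{Q_{UV} \in \calP_{n}:~\RF(Q_{U}) \geq H_{Q}(U|V) + \Delta\right\}} D(Q_{UV}\|P_{UV}) \right\}.
\end{align}

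The remaining task is to pass from the discrete minimization over $\calP_{n}(\calU\times\calV)$ to the continuous minimization that defines $E_{\mbox{\tiny er}}(\RF(\cdot),\Delta)$ in \eqref{Excess_exponent}. Upper bounding $p_{\mbox{\tiny er}}$ yields the inequality $\EE_{\mbox{\tiny er}} \geq E_{\mbox{\tiny er}}$ directly, since the $n$--type minimum is never below the unconstrained minimum. For the matching lower bound on $p_{\mbox{\tiny er}}$ (equivalently, an upper bound on $\EE_{\mbox{\tiny er}}$), I would take $Q_{UV}^{\star}$ achieving $E_{\mbox{\tiny er}}(\RF(\cdot),\Delta)$ and approximate it by a sequence of $n$--types $Q_{UV}^{(n)}$ converging to $Q_{UV}^{\star}$ in total variation; by joint continuity of $\RF(\cdot)$, of $H_{Q}(U|V)$, and of $D(\cdot\|P_{UV})$ on the simplex, $Q_{UV}^{(n)}$ will satisfy the excess--rate constraint for all $n$ large enough, and the divergence at $Q_{UV}^{(n)}$ converges to $D(Q_{UV}^{\star}\|P_{UV})$. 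Restricting the sum to this single type class gives the required matching lower bound.

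The main obstacle is the robustness of the constraint under the type approximation: a priori, the optimizer $Q_{UV}^{\star}$ might lie exactly on the boundary $\RF(Q_{U}) = H_{Q}(U|V) + \Delta$, in which case approximating $n$--types could violate the inequality. Two remedies handle this. If the minimizing $Q_{UV}^{\star}$ lies in the interior, continuity alone does the job. Otherwise, one perturbs $Q_{UV}^{\star}$ slightly into the interior at a cost in divergence that vanishes as the perturbation shrinks, and only then approximates by $n$--types; a standard diagonal argument combines the two vanishing quantities. Finally, the edge case where the feasible set is empty must be treated separately: then $p_{\mbox{\tiny er}} = 0$ for all $n$ and both $\EE_{\mbox{\tiny er}}$ and $E_{\mbox{\tiny er}}$ equal $+\infty$, so the claimed equality holds trivially.
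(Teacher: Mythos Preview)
Your proposal is correct and follows essentially the same approach as the paper: decompose the excess--rate probability by joint types, use $\prob\{(\bU,\bV)\in\calT(Q_{UV})\}\doteq e^{-nD(Q_{UV}\|P_{UV})}$, and take the dominant term. The paper's proof is terser---it absorbs the polynomial factor, the $n$--type to continuous passage, and the boundary/empty--set issues into the $\doteq$ convention without comment---whereas you spell these out explicitly, but the underlying argument is identical.
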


\begin{proof}
	The excess--rate probability is given by:
	\begin{align}
	&\prob \{\RF(\hat{P}_{\bU}) \geq \hat{H}_{\bU\bV}(U|V) + \Delta\} \nn \\
	&= \sum_{Q_{UV}} \IND \{\RF(Q_{U}) \geq H_{Q}(U|V) + \Delta \} \cdot \prob \{(\bU,\bV) \in \calT(Q_{UV})\} \\
	&\doteq \sum_{\{Q_{UV}:~ \RF(Q_{U}) \geq H_{Q}(U|V) + \Delta\}} \exp \left\{-n D(Q_{UV} \| P_{UV}) \right\}\\
	&\doteq \exp \left\{-n \cdot \min_{\{Q_{UV}:~ \RF(Q_{U}) \geq H_{Q}(U|V) + \Delta\}} D(Q_{UV} \| P_{UV}) \right\},
	\end{align}
	which proves the desired result.	
\end{proof}

Since Proposition \ref{Propy1} is proved by the method of types \cite{CK11}, we conclude that the excess--rate event is dominated by one specific type class $\calT(Q_{UV})$, whose respective rate $\RF(Q_{U})$ has been chosen too large w.r.t.\ the value of $H_{Q}(U|V) + \Delta$. One extreme case is when the rate function is given by $H_{Q}(U)$, which obviously provides a one-to-one mapping, since the size of each $\calT(Q_{U})$ is upper-bounded by $e^{nH_{Q}(U)}$. In this case, the probability of error is zero, while the excess--rate probability is one, at least when $\Delta$ is not too large. In Subsection \ref{SEC5.2}, we prove that the optimal rate function is indeed upper-bounded by $H_{Q}(U)$, but can also be strictly smaller, especially when the requirement on the error exponent is not too stringent.  
	
One way to explore the trade--off between the error exponent of the TRC and the excess--rate exponent, that will be presented in Subsection \ref{SEC5.1}, is to require the excess--rate exponent to exceed some value $\er>0$, then solve $E_{\mbox{\tiny er}}(\RF(\cdot),\Delta) \geq \er$ for an optimal rate function $\RF^{*}(Q_{U})$, and then to substitute this optimal rate function back into the error exponents in \eqref{SW_RC_VR_SD_MAP} and \eqref{VR_SD_SW_EXPRESSION} to give expressions for the optimal trade--off function $E_{\mbox{\tiny e}}(\er,\Delta)$.
In Subsection \ref{SEC5.2}, we present an alternative option to characterize this trade--off, which is to require the error exponent of the TRC to exceed some value $\ee>0$, to solve $E_{\mbox{\tiny e}}(\RF(\cdot)) \geq \ee$ in order to extract an optimal rate function, and then to substitute it back into the excess--rate exponent in \eqref{Excess_exponent} to provide an expression for the optimal trade--off function $E_{\mbox{\tiny er}}(\ee,\Delta)$.

\subsection{Constrained Excess--Rate Exponent}
\label{SEC5.1}

Relying on the exponent function in \eqref{Excess_exponent}, the following theorem proposes a rate function, whose optimality is proved in Appendix F.

\begin{theorem} \label{RF_UB}
	Let $\er>0$ be fixed. Then, the constraint $E_{\mbox{\tiny er}}(\RF(\cdot),\Delta) \geq \er$ implies that
	\begin{align} \label{J_DEF}
	\RF(Q_{U}) \leq J(Q_{U},\er,\Delta) \dfn  \min_{\{Q_{V|U}:~D(Q_{UV} \| P_{UV}) \leq \er\}} \left\{ H_{Q}(U|V) + \Delta \right\} .
	\end{align} 
\end{theorem}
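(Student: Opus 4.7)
The plan is to argue by contraposition. Suppose, toward contradiction, that there is some $Q_U^\circ \in \calP(\calU)$ for which $\RF(Q_U^\circ) > J(Q_U^\circ,\er,\Delta)$; I will deduce that $E_{\mbox{\tiny er}}(\RF(\cdot),\Delta) < \er$, contradicting the hypothesis.

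First, I would let $Q^\star_{V|U}$ be a minimizer in the definition \eqref{J_DEF} of $J(Q_U^\circ,\er,\Delta)$, which exists because $H_Q(U|V)$ is continuous in $Q_{V|U}$ and the constraint set $\{Q_{V|U}:D(Q_U^\circ \times Q_{V|U}\|P_{UV})\leq \er\}$ is compact (if this set is empty, then $J = +\infty$ and the claim is vacuous). Writing $Q^\star_{UV}=Q_U^\circ \times Q^\star_{V|U}$, I then have $D(Q^\star_{UV}\|P_{UV}) \leq \er$ together with $H_{Q^\star}(U|V)+\Delta = J(Q_U^\circ,\er,\Delta) < \RF(Q_U^\circ)$, so $Q^\star_{UV}$ sits, with strict slack, inside the feasible set of the minimization that defines $E_{\mbox{\tiny er}}(\RF(\cdot),\Delta)$ in \eqref{Excess_exponent}. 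In the easy case $D(Q^\star_{UV}\|P_{UV}) < \er$, this immediately yields $E_{\mbox{\tiny er}}(\RF(\cdot),\Delta) \leq D(Q^\star_{UV}\|P_{UV}) < \er$, the desired contradiction.

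The remaining case is the boundary situation $D(Q^\star_{UV}\|P_{UV}) = \er$, which I would handle by a perturbation. Consider the mixture $\tilde Q_{UV}=(1-\lambda) Q^\star_{UV} + \lambda P_{UV}$ for a small $\lambda>0$. By convexity of the Kullback--Leibler divergence in its first argument,
\[
D(\tilde Q_{UV}\|P_{UV}) \leq (1-\lambda)\er + \lambda\cdot 0 = (1-\lambda)\er < \er.
\]
Moreover, $\RF$ is continuous on the simplex by assumption, and both $H_Q(U|V)$ and $D(\cdot\|P_{UV})$ are continuous in the joint distribution, so the strict inequality $\RF(Q_U^\circ) > H_{Q^\star}(U|V)+\Delta$ extends to a neighborhood of $Q^\star_{UV}$; for $\lambda$ small enough, $\RF(\tilde Q_U) > H_{\tilde Q}(U|V) + \Delta$. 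Therefore $\tilde Q_{UV}$ is feasible in \eqref{Excess_exponent} and delivers $E_{\mbox{\tiny er}}(\RF(\cdot),\Delta) \leq D(\tilde Q_{UV}\|P_{UV}) < \er$, the contradiction.

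The main obstacle is precisely this perturbation, needed when the minimizer $Q^\star_{V|U}$ happens to lie on the boundary of its feasible set; everything else reduces to exhibiting the minimizer and exploiting the strict inequality provided by the contradiction hypothesis. By contraposition, $\RF(Q_U) \leq J(Q_U,\er,\Delta)$ for every $Q_U \in \calP(\calU)$, as claimed.
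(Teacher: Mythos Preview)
Your proof is correct, and it takes a genuinely different route from the paper's. The paper rewrites the constrained minimization in \eqref{Excess_exponent} via the indicator-function identity
\[
\min_{\{Q:\,g(Q)\le 0\}} f(Q)=\min_Q \sup_{\sigma\ge 0}\{f(Q)+\sigma g(Q)\},
\]
and then unwinds the quantifiers: $E_{\mbox{\tiny er}}(\RF(\cdot),\Delta)\ge \er$ becomes a ``for all $Q_{UV}$, there exists $\sigma\ge 0$'' condition, which after isolating $\RF(Q_U)$ and optimizing over $\sigma$ yields exactly $J(Q_U,\er,\Delta)$. You instead argue by contraposition, exhibiting a feasible point in \eqref{Excess_exponent} with divergence strictly below $\er$; your perturbation $\tilde Q_{UV}=(1-\lambda)Q^\star_{UV}+\lambda P_{UV}$ handles the boundary case $D(Q^\star_{UV}\|P_{UV})=\er$ cleanly by the convexity of $D(\cdot\|P_{UV})$ and the continuity assumptions on $\RF$ and $H_Q(U|V)$. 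The paper's dual manipulation is slicker and avoids the case split, while your approach is more elementary and makes the geometry of the constraint transparent; both are short and valid.
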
 
This means that we have a dichotomy between two kinds of source types. 
Each type class that is associated with an empirical distribution that is relatively close to the source distribution, i.e., when $D(Q_{UV} \| P_{UV}) \leq \er$ for some $Q_{V|U}$, is partitioned into $e^{n J(Q_{U},\er,\Delta)}$ bins, and the rest of the type classes, those that are relatively distant from $P_{U}$, are encoded by a one--to--one mapping. Two extreme cases should be considered here. First, when $\er$ is relatively small, then only the types closest to $P_{U}$ are encoded with a rate approximately $H_{P}(U|V)+\Delta$, which can be made arbitrarily close to the SW limit \cite{SW}, 
and each a--typical source sequence is allocated with $n \cdot \log_{2} |\calU|$ bits. This coding scheme is the one related to VR coding with an average rate constraint, like the one discussed in \cite{SW2}.  
Second, when $\er$ is extremely large, then each type class is encoded to $\exp\{n \Delta\}$ bins, which is equivalent to FR coding.

Following the first part of Theorem \ref{Thm_comparison}, let us denote the error exponent of the TRC under MAP decoding by $E_{\mbox{\tiny e}}(\cdot)$. 
Upon substituting the optimal rate function of Theorem \ref{RF_UB} back into \eqref{SW_RC_VR_SD_MAP} and \eqref{VR_SD_SW_EXPRESSION} and using the fact that $E_{\mbox{\tiny e}}(\cdot)$ is monotonically increasing, we find that the optimal trade--off function for the typical random SD code is given by
\begin{align} \label{TradeOff_EXPRESSION_SD_r}
E_{\mbox{\tiny e}}(\er,\Delta) = \min_{\left\{Q_{UV}:~ H_{Q}(U) \geq J(Q_{U}) \right\}} \left\{D(Q_{UV} \| P_{UV}) +[J(Q_{U}) - H_{Q}(U|V)]_{+} \right\},
\end{align}
or, alternatively,
\begin{align} \label{TradeOff_EXPRESSION_SD}
E_{\mbox{\tiny e}}(\er,\Delta) 
&= \min_{\left\{\substack{Q_{UU'}:~Q_{U'}=Q_{U}, \\ H_{Q}(U) \geq J(Q_{U})}\right\}} \left\{\Lambda(Q_{UU'},J(Q_{U})) - \mathbb{E}_{Q} [\log P(U)] - H_{Q}(U,U') + J(Q_{U}) \right\},
\end{align}
where $J(Q_{U}) = J(Q_{U},\er,\Delta)$ is given in \eqref{J_DEF}.
The dependence of $E_{\mbox{\tiny e}}(\er,\Delta)$ on $\er$ is as follows. 
Let $Q_{UU'}^{*}(\Delta)$ and $Q_{V|U}^{*}$ be the respective minimizers of the problems which are similar to \eqref{J_DEF} and \eqref{TradeOff_EXPRESSION_SD}, except that the constraint $D(Q_{UV} \| P_{UV}) \leq \er$ is removed from \eqref{J_DEF}. Furthermore, let $Q_{U}^{*}(\Delta)$ be the marginal distribution of $Q_{UU'}^{*}(\Delta)$.
Now, when $\er$ is sufficiently large, i.e., when $\er \geq D(Q_{U}^{*}(\Delta) \times Q_{V|U}^{*} \| P_{UV})$, $E_{\mbox{\tiny e}}(\er,\Delta)$ reaches a plateau and is the lowest possible. It follows from the fact that the stringent requirement on the excess--rate forces the encoder to encode each type class $Q_{U}$ to its target rate $\Delta$, thus all of them affect the error event. 
Otherwise, when $\er < D(Q_{U}^{*}(\Delta) \times Q_{V|U}^{*} \| P_{UV})$, the constraint $D(Q_{UV} \| P_{UV}) \leq \er$ is active and $E_{\mbox{\tiny e}}(\er,\Delta)$ is a monotonically non--increasing function of $\er$. 
The reason for that is the fact that as $\er$ decreases, more and more type classes are encoded with $n \cdot \log_{2} |\calU|$ bits, and hence do not contribute to the error event. 
When $\er=0$, necessarily $Q_{U}=P_{U}$, only the typical set is encoded, and $E_{\mbox{\tiny e}}(0,\Delta)$ is the highest possible. In this case, $J(Q_{U}) = H_{P}(U|V) + \Delta$ and the constraint set in \eqref{TradeOff_EXPRESSION_SD} becomes empty when $\Delta > I_{P}(U;V)$, and then $E_{\mbox{\tiny e}}(0,\Delta) = \infty$.

\subsection{Constrained Error Exponent} \label{SEC5.2}

Based on \eqref{SW_RC_VR_SD_MAP}, the following theorem proposes a rate function, whose optimality is proved in Appendix G.

\begin{theorem} \label{RF_LB_RC23}
	Let $\ee>0$ be fixed. Then, the constraint $E_{\mbox{\tiny e}}(\RF(\cdot)) \geq \ee$ implies that
	\begin{align} \label{RF_LB_RC23_expression}
	\RF(Q_{U}) \geq \Omega(Q_{U},\ee) 
	\DEF \min \left\{ H_{Q}(U), G(Q_{U},\ee) \right\},
	\end{align} 
	where,
	\begin{align} \label{RF_LB_RC_expression}
	G(Q_{U},\ee) = \max_{\{Q_{V|U}:~ D(Q_{UV} \| P_{UV}) \leq \ee\}} \{ H_{Q}(U|V) + \ee - D(Q_{UV} \| P_{UV})\}.
	\end{align}
\end{theorem}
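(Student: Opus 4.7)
The plan is to combine Theorem \ref{Thm_comparison} with the single--letter formula \eqref{SW_RC_VR_SD_MAP}, which together identify $E_{\mbox{\tiny e}}(\RF(\cdot))$ (the TRC exponent under MAP decoding) with $E_{\mbox{\tiny r,MAP}}(\RF(\cdot))$. The hypothesis $E_{\mbox{\tiny e}}(\RF(\cdot)) \geq \ee$ therefore becomes the universally quantified assertion that
\begin{align}
D(Q_{UV}\|P_{UV}) + [\RF(Q_{U}) - H_{Q}(U|V)]_{+} \geq \ee
\end{align}
for every $Q_{UV}$ satisfying $H_{Q}(U) \geq \RF(Q_{U})$. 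My approach is to prove the contrapositive: if some type $Q_{U}$ violates $\RF(Q_{U}) \geq \Omega(Q_{U},\ee)$, I will exhibit a conditional $Q_{V|U}$ for which the left-hand side above falls strictly below $\ee$, contradicting the hypothesis.

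Fix such a $Q_{U}$. Then both $\RF(Q_{U}) < H_{Q}(U)$ and $\RF(Q_{U}) < G(Q_{U},\ee)$ hold. The first inequality guarantees that $(Q_{U},Q_{V|U})$ is admissible in the minimization defining $E_{\mbox{\tiny r,MAP}}$ for every $Q_{V|U}$. By the definition of $G$ in \eqref{RF_LB_RC_expression}, the second delivers a $Q_{V|U}^{*}$ with $D^{*} \dfn D(Q_{U}\times Q_{V|U}^{*}\|P_{UV}) \leq \ee$ and $H_{Q^{*}}(U|V) + \ee - D^{*} > \RF(Q_{U})$. If $\RF(Q_{U}) \geq H_{Q^{*}}(U|V)$, this rearranges directly to $D^{*} + \RF(Q_{U}) - H_{Q^{*}}(U|V) < \ee$, which is the objective value at $(Q_{U},Q_{V|U}^{*})$, yielding the desired contradiction. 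Otherwise the objective reduces to $D^{*}$, and a refinement is required because only $D^{*} \leq \ee$ is known.

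To sharpen the subcase $\RF(Q_{U}) < H_{Q^{*}}(U|V)$ to a strict inequality, I will perturb $Q_{V|U}^{\lambda} = (1-\lambda) Q_{V|U}^{*} + \lambda P_{V|U}$ for small $\lambda > 0$. The joint convexity of KL divergence yields
\begin{align}
D(Q_{U}\times Q_{V|U}^{\lambda}\|P_{UV}) \leq (1-\lambda) D^{*} + \lambda D(Q_{U}\|P_{U}),
\end{align}
and the bound $D(Q_{U}\|P_{U}) \leq D^{*}$ from the chain rule shows the right-hand side is strictly below $\ee$ for $\lambda > 0$ whenever $D(Q_{U}\|P_{U}) < \ee$. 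Continuity of $H_{Q^{(\lambda)}}(U|V)$ in $\lambda$ preserves $\RF(Q_{U}) < H_{Q^{(\lambda)}}(U|V)$ for $\lambda$ small, so the perturbed pair remains in the difficult subcase and now satisfies $D < \ee$. The degenerate sub-sub-case $D(Q_{U}\|P_{U}) = \ee$ forces $Q_{V|U}^{*} = P_{V|U}$ on the support of $Q_{U}$ and admits no improving perturbation; however, the continuity of $\RF(\cdot)$ assumed in Subsection \ref{SEC3.1} lets me replace $Q_{U}$ by an arbitrarily close type $Q_{U}'$ with $D(Q_{U}'\|P_{U}) < \ee$ still obeying $\RF(Q_{U}') < G(Q_{U}',\ee)$, reducing to the generic case. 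The main obstacle is precisely this boundary regime: a maximizer of $G$ only furnishes $D^{*} \leq \ee$, and the convexity/continuity perturbation above is the nontrivial tool that converts this to the strict inequality required to contradict $E_{\mbox{\tiny e}}(\RF(\cdot)) \geq \ee$.
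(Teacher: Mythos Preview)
Your argument is correct and genuinely different from the paper's. The paper proceeds by Lagrangian duality: it rewrites \eqref{SW_RC_VR_SD_MAP} as
\[
E_{\mbox{\tiny e}}(\RF(\cdot)) = \min_{Q_{U}} \min_{Q_{V|U}} \max_{\mu \in [0,1]} \sup_{\sigma \geq 0} \{D(Q_{UV}\|P_{UV}) + \mu(\RF(Q_U)-H_Q(U|V)) + \sigma(\RF(Q_U)-H_Q(U))\},
\]
then reads $E_{\mbox{\tiny e}}(\RF(\cdot)) \geq \ee$ as a universally/existentially quantified inequality, algebraically inverts to isolate $\RF(Q_U)$, and carries out the inner $\min_{\mu}\inf_{\sigma}$ and outer $\max_{Q_{V|U}}$ in closed form to land exactly on $\Omega(Q_U,\ee)$. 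No case analysis, no perturbation, and the chain of ``or''s is in fact a chain of equivalences, so one also gets the converse (that $\RF \geq \Omega$ suffices for $E_{\mbox{\tiny e}} \geq \ee$) for free. Your contrapositive route is more constructive---it names the witness $Q_{UV}$ that drags the exponent below $\ee$---but pays for this with the boundary analysis at $D^{*}=\ee$. That analysis is sound: your convexity bound on $D(Q_U\times Q_{V|U}^{\lambda}\|P_{UV})$ is correct via the chain rule, and in the degenerate sub-sub-case the move to a nearby $Q_U'$ works because $G(\cdot,\ee)$ is lower semicontinuous at $Q_U$ (evaluate at $Q_{V|U}=P_{V|U}$, which is feasible for $Q_U'$, to get $G(Q_U',\ee) \geq H_{Q_U'\times P_{V|U}}(U|V)+\ee-D(Q_U'\|P_U) \to G(Q_U,\ee)$). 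You should state this lower-semicontinuity step explicitly, since it is what makes the reduction to the generic case legitimate. In short: the paper's dual manipulation is shorter and yields the if-and-only-if; your approach is more elementary and transparent about which $Q_{UV}$ is the bottleneck, at the cost of handling the boundary by hand.
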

The dependence of $G(Q_{U},\ee)$ on $\ee$ is as follows. 
For any given $Q_{U}$, let $\tilde{Q}_{V|U}$ be the minimizer of $D(Q_{UV} \| P_{UV})$. Then, as long as $\ee < D(Q_{U} \times \tilde{Q}_{V|U} \| P_{UV})$, the constraint set in  
\eqref{RF_LB_RC_expression} is empty, and $\RF(Q_{U})$ can vanish, which practically means that in this range, the entire type class $\calT(Q_{U})$ can be totally ignored, while still achieving $P_{\mbox{\tiny e}} \approx e^{-n \ee}$. 
Only for the unique type $Q_{U}=P_{U}$, $G(P_{U},\ee) > 0$ for all $\ee \geq 0$, and specifically, we find that $G(P_{U},0) = H_{P}(U|V)$.    
Furthermore, let $Q_{V|U}^{*}$ be the maximizer in the unconstrained problem 
\begin{align} \label{Constant1}
\max_{Q_{V|U}} \left\{H_{Q}(U|V) - D(Q_{UV} \| P_{UV})\right\}.
\end{align}
Then, as long as $\ee \in [D(Q_{U} \times \tilde{Q}_{V|U} \| P_{UV}), D(Q_{U} \times Q_{V|U}^{*} \| P_{UV}))$, $G(Q_{U},\ee)$ is a monotonically non--decreasing function of $\ee$. When $\ee \geq D(Q_{U} \times Q_{V|U}^{*} \| P_{UV})$, the maximization in \eqref{RF_LB_RC_expression} reaches its unconstrained optimum, and $G(Q_{U},\ee)$ increases without bound in an affine fashion as $\ee + H_{Q^{*}}(U|V) - D(Q_{U} \times Q_{V|U}^{*} \| P_{UV})$. As can be seen in \eqref{RF_LB_RC23_expression}, $\Omega(Q_{U},\ee)$ finally reaches a plateau at the level of $H_{Q}(U)$.

Upon substituting $\Omega(Q_{U},\ee)$ back into \eqref{Excess_exponent} and using the fact that $E_{\mbox{\tiny er}}(\cdot,\Delta)$ is monotonically non--increasing, we find that the trade--off function is given by
\begin{align} \label{TradeOff_EXPRESSION3}
E_{\mbox{\tiny er}}(\ee,\Delta) = \min_{\{Q_{UV}:~ \Omega(Q_{U},\ee) \geq H_{Q}(U|V) + \Delta\}} D(Q_{UV} \| P_{UV}).
\end{align}
Since $\Omega(Q_{U},\ee)$ is monotonically non--decreasing in $\ee$ for every $Q_{U}$, $E_{\mbox{\tiny er}}(\ee,\Delta)$ is monotonically non--increasing in $\ee$, which is not very surprising. The dependence of $E_{\mbox{\tiny er}}(\ee,\Delta)$ on $\ee$ and $\Delta$ is as follows. At $\ee=0$, notice that $\Omega(Q_{U},0) = -\infty$ for any $Q_{U} \neq P_{U}$ while $\Omega(P_{U},0) = H_{P}(U|V)$.
Thus, $E_{\mbox{\tiny er}}(0,\Delta) = 0$ as long as $\Delta=0$, and it follows from the monotonicity that $E_{\mbox{\tiny er}}(\ee,0) = 0$ everywhere.  
Otherwise, if $\Delta > 0$, $\{Q_{UV}:~ \Omega(Q_{U},\ee) \geq H_{Q}(U|V)+\Delta\}$ is empty as long as $\ee < \ee^{*}(\Delta)$, and then $E_{\mbox{\tiny er}}(\ee,\Delta) = \infty$ in this range\footnote{An expression for $\ee^{*}(\Delta)$ can be found by solving $\max_{Q_{UV}} \{\Omega(Q_{U},\ee) - H_{Q}(U|V)\} \leq \Delta$.}. 
In the other extreme case of a very large $\ee$, $\Omega(Q_{U},\ee)$ reaches a plateau at a level of $H_{Q}(U)$. 
Then, if $\Delta \leq H_{P}(U) - H_{P}(U|V) = I_{P}(U;V)$, $E_{\mbox{\tiny er}}(\ee,\Delta)$ reaches zero for a sufficiently large $\ee$. Else, if $\Delta > I_{P}(U;V)$, $E_{\mbox{\tiny er}}(\ee,\Delta)$ reaches a strictly positive plateau, given by 
\begin{align}
\min_{\{Q_{UV}:~ I_{Q}(U;V) \geq \Delta\}} D(Q_{UV} \| P_{UV}),
\end{align}  
which is a monotonically non--decreasing function of $\Delta$.
Particularly, it means that in this range, the typical random SD code attains both an exponentially vanishing excess--rate probability as well as $P_{\mbox{\tiny e}} \approx 0$.

It is interesting to relate this to the expurgated bound of the FR code in the SW model, which is given by \eqref{SW_Expurgated}.  
Comparing $E_{\mbox{\tiny ex}}^{\mbox{\tiny fr}}(\rr)$ and $E_{\mbox{\tiny e}}(\infty,\Delta)$ analytically is rather difficult. 
Thus, we examined these two exponent functions numerically. Consider the case of a double binary source with alphabets $\calU=\calV=\{0,1\}$, and joint probabilities given by $P_{UV}(0,0)=0.75$, $P_{UV}(0,1)=0.1$, $P_{UV}(1,0)=0$, and $P_{UV}(1,1)=0.15$. 
We already mentioned before, that in the special case of $\er = \infty$, the rate function is given by the threshold $\Delta$, hence we choose $\Delta = \rr$ in order to have a fair comparison.
Graphs of the functions $E_{\mbox{\tiny ex}}^{\mbox{\tiny fr}}(\rr)$ and $E_{\mbox{\tiny e}}(\infty,\rr)$ are presented in Fig.\ \ref{Z-Channel-Numeric4}.

\begin{figure}[ht!]
	\centering
	\begin{tikzpicture}[scale=1.3]
	\draw[->] (0,0) -- (7,0) node[right] {$\rr$};
	\begin{axis}[
	disabledatascaling,
	scaled y ticks=false,
	yticklabel style={/pgf/number format/fixed,
		/pgf/number format/precision=3},
	xmin=0, xmax=0.7,
	ymin=0, ymax=0.9,
	legend pos=south east,
	]
	
	\addplot[smooth,color=cyan,thick]
	table[row sep=crcr] 
	{
0	4.26E-12	\\
0.001	4.26E-12	\\
0.002	4.26E-12	\\
0.003	4.26E-12	\\
0.004	4.26E-12	\\
0.005	4.26E-12	\\
0.006	4.26E-12	\\
0.007	4.26E-12	\\
0.008	4.26E-12	\\
0.009	4.26E-12	\\
0.01	4.26E-12	\\
0.011	4.26E-12	\\
0.012	4.26E-12	\\
0.013	4.26E-12	\\
0.014	4.26E-12	\\
0.015	4.26E-12	\\
0.016	4.26E-12	\\
0.017	4.26E-12	\\
0.018	4.26E-12	\\
0.019	4.26E-12	\\
0.02	4.26E-12	\\
0.021	4.26E-12	\\
0.022	4.26E-12	\\
0.023	4.26E-12	\\
0.024	4.26E-12	\\
0.025	4.26E-12	\\
0.026	4.26E-12	\\
0.027	4.26E-12	\\
0.028	4.26E-12	\\
0.029	4.26E-12	\\
0.03	4.26E-12	\\
0.031	4.26E-12	\\
0.032	4.26E-12	\\
0.033	4.26E-12	\\
0.034	4.26E-12	\\
0.035	4.26E-12	\\
0.036	4.26E-12	\\
0.037	4.26E-12	\\
0.038	4.26E-12	\\
0.039	4.26E-12	\\
0.04	4.26E-12	\\
0.041	4.26E-12	\\
0.042	4.26E-12	\\
0.043	4.26E-12	\\
0.044	4.26E-12	\\
0.045	4.26E-12	\\
0.046	4.26E-12	\\
0.047	4.26E-12	\\
0.048	4.26E-12	\\
0.049	4.26E-12	\\
0.05	4.26E-12	\\
0.051	4.26E-12	\\
0.052	4.26E-12	\\
0.053	4.26E-12	\\
0.054	4.26E-12	\\
0.055	4.26E-12	\\
0.056	4.26E-12	\\
0.057	4.26E-12	\\
0.058	4.26E-12	\\
0.059	4.26E-12	\\
0.06	4.26E-12	\\
0.061	4.26E-12	\\
0.062	4.26E-12	\\
0.063	4.26E-12	\\
0.064	4.26E-12	\\
0.065	4.26E-12	\\
0.066	4.26E-12	\\
0.067	4.26E-12	\\
0.068	4.26E-12	\\
0.069	4.26E-12	\\
0.07	4.26E-12	\\
0.071	4.26E-12	\\
0.072	4.26E-12	\\
0.073	4.26E-12	\\
0.074	4.26E-12	\\
0.075	4.26E-12	\\
0.076	4.26E-12	\\
0.077	4.26E-12	\\
0.078	4.26E-12	\\
0.079	4.26E-12	\\
0.08	4.26E-12	\\
0.081	4.26E-12	\\
0.082	4.26E-12	\\
0.083	4.26E-12	\\
0.084	4.26E-12	\\
0.085	4.26E-12	\\
0.086	4.26E-12	\\
0.087	4.26E-12	\\
0.088	4.26E-12	\\
0.089	4.26E-12	\\
0.09	4.26E-12	\\
0.091	4.26E-12	\\
0.092	4.26E-12	\\
0.093	4.26E-12	\\
0.094	4.26E-12	\\
0.095	4.26E-12	\\
0.096	4.26E-12	\\
0.097	4.26E-12	\\
0.098	4.26E-12	\\
0.099	4.26E-12	\\
0.1	4.26E-12	\\
0.101	4.26E-12	\\
0.102	4.26E-12	\\
0.103	4.26E-12	\\
0.104	4.26E-12	\\
0.105	4.26E-12	\\
0.106	4.26E-12	\\
0.107	4.26E-12	\\
0.108	4.26E-12	\\
0.109	4.26E-12	\\
0.11	4.26E-12	\\
0.111	4.26E-12	\\
0.112	4.26E-12	\\
0.113	4.26E-12	\\
0.114	4.26E-12	\\
0.115	4.26E-12	\\
0.116	4.26E-12	\\
0.117	4.26E-12	\\
0.118	4.26E-12	\\
0.119	4.26E-12	\\
0.12	4.26E-12	\\
0.121	4.26E-12	\\
0.122	4.26E-12	\\
0.123	4.26E-12	\\
0.124	4.26E-12	\\
0.125	4.26E-12	\\
0.126	4.26E-12	\\
0.127	4.26E-12	\\
0.128	4.26E-12	\\
0.129	4.26E-12	\\
0.13	4.26E-12	\\
0.131	4.26E-12	\\
0.132	4.26E-12	\\
0.133	4.26E-12	\\
0.134	4.26E-12	\\
0.135	4.26E-12	\\
0.136	4.26E-12	\\
0.137	4.26E-12	\\
0.138	4.26E-12	\\
0.139	4.26E-12	\\
0.14	4.26E-12	\\
0.141	4.26E-12	\\
0.142	4.26E-12	\\
0.143	4.26E-12	\\
0.144	4.26E-12	\\
0.145	4.26E-12	\\
0.146	4.26E-12	\\
0.147	4.26E-12	\\
0.148	4.26E-12	\\
0.149	4.26E-12	\\
0.15	4.26E-12	\\
0.151	4.26E-12	\\
0.152	4.26E-12	\\
0.153	4.26E-12	\\
0.154	4.26E-12	\\
0.155	4.26E-12	\\
0.156	4.26E-12	\\
0.157	4.26E-12	\\
0.158	4.26E-12	\\
0.159	4.26E-12	\\
0.16	4.26E-12	\\
0.161	4.26E-12	\\
0.162	4.26E-12	\\
0.163	4.26E-12	\\
0.164	4.26E-12	\\
0.165	4.26E-12	\\
0.166	4.26E-12	\\
0.167	4.26E-12	\\
0.168	4.26E-12	\\
0.169	2.94E-06	\\
0.17	1.61E-05	\\
0.171	3.97E-05	\\
0.172	7.40E-05	\\
0.173	0.000118577	\\
0.174	0.000173889	\\
0.175	0.000239198	\\
0.176	0.00031516	\\
0.177	0.00040153	\\
0.178	0.000498345	\\
0.179	0.00060542	\\
0.18	0.000722851	\\
0.181	0.00085056	\\
0.182	0.000988613	\\
0.183	0.001136908	\\
0.184	0.00129547	\\
0.185	0.001464955	\\
0.186	0.001645872	\\
0.187	0.001832836	\\
0.188	0.002031713	\\
0.189	0.002241168	\\
0.19	0.002460849	\\
0.191	0.002692088	\\
0.192	0.002930345	\\
0.193	0.003179888	\\
0.194	0.003439909	\\
0.195	0.003711499	\\
0.196	0.003989243	\\
0.197	0.004279553	\\
0.198	0.004578607	\\
0.199	0.004888215	\\
0.2	0.005208674	\\
0.201	0.005536919	\\
0.202	0.005876229	\\
0.203	0.006228565	\\
0.204	0.006584154	\\
0.205	0.006952369	\\
0.206	0.00733108	\\
0.207	0.007718634	\\
0.208	0.008116549	\\
0.209	0.008523761	\\
0.21	0.008940906	\\
0.211	0.009368427	\\
0.212	0.009806377	\\
0.213	0.010250847	\\
0.214	0.010705542	\\
0.215	0.011170618	\\
0.216	0.011647499	\\
0.217	0.012130171	\\
0.218	0.012623188	\\
0.219	0.013126332	\\
0.22	0.013640774	\\
0.221	0.014161483	\\
0.222	0.014693034	\\
0.223	0.015233637	\\
0.224	0.015784891	\\
0.225	0.016343534	\\
0.226	0.016914501	\\
0.227	0.017490935	\\
0.228	0.018080598	\\
0.229	0.018675615	\\
0.23	0.019281896	\\
0.231	0.019899175	\\
0.232	0.020522207	\\
0.233	0.021156179	\\
0.234	0.021799391	\\
0.235	0.022454326	\\
0.236	0.023117005	\\
0.237	0.023785236	\\
0.238	0.024464621	\\
0.239	0.025153211	\\
0.24	0.025851167	\\
0.241	0.026558373	\\
0.242	0.027274677	\\
0.243	0.02800073	\\
0.244	0.02873476	\\
0.245	0.02948749	\\
0.246	0.030230424	\\
0.247	0.030991925	\\
0.248	0.031762325	\\
0.249	0.032541779	\\
0.25	0.033330957	\\
0.251	0.034127533	\\
0.252	0.034933824	\\
0.253	0.035750392	\\
0.254	0.036573141	\\
0.255	0.037406413	\\
0.256	0.038248019	\\
0.257	0.039098991	\\
0.258	0.039958375	\\
0.259	0.040828558	\\
0.26	0.041704097	\\
0.261	0.042590197	\\
0.262	0.04348511	\\
0.263	0.044388801	\\
0.264	0.045301248	\\
0.265	0.046222501	\\
0.266	0.047152487	\\
0.267	0.048091249	\\
0.268	0.049038723	\\
0.269	0.049994991	\\
0.27	0.050959827	\\
0.271	0.051933431	\\
0.272	0.052915739	\\
0.273	0.053906717	\\
0.274	0.054905455	\\
0.275	0.055905455	\\
0.276	0.056905455	\\
0.277	0.057905455	\\
0.278	0.058905455	\\
0.279	0.059905455	\\
0.28	0.060905455	\\
0.281	0.061905455	\\
0.282	0.062905455	\\
0.283	0.063905455	\\
0.284	0.064905455	\\
0.285	0.065905455	\\
0.286	0.066905455	\\
0.287	0.067905455	\\
0.288	0.068905455	\\
0.289	0.069905455	\\
0.29	0.070905455	\\
0.291	0.071905455	\\
0.292	0.072905455	\\
0.293	0.073905455	\\
0.294	0.074905455	\\
0.295	0.075905455	\\
0.296	0.076905455	\\
0.297	0.077905455	\\
0.298	0.078905455	\\
0.299	0.079905455	\\
0.3	0.080905455	\\
0.301	0.081905455	\\
0.302	0.082905455	\\
0.303	0.083905455	\\
0.304	0.084905455	\\
0.305	0.085905455	\\
0.306	0.086905455	\\
0.307	0.087905455	\\
0.308	0.088905455	\\
0.309	0.089905455	\\
0.31	0.090905455	\\
0.311	0.091905455	\\
0.312	0.092905455	\\
0.313	0.093905455	\\
0.314	0.094905455	\\
0.315	0.095905455	\\
0.316	0.096905455	\\
0.317	0.097905455	\\
0.318	0.098905455	\\
0.319	0.099905455	\\
0.32	0.100905455	\\
0.321	0.101905455	\\
0.322	0.102905455	\\
0.323	0.103905455	\\
0.324	0.104905455	\\
0.325	0.105905455	\\
0.326	0.106905455	\\
0.327	0.107905455	\\
0.328	0.108905455	\\
0.329	0.109905455	\\
0.33	0.110905455	\\
0.331	0.111905455	\\
0.332	0.112905455	\\
0.333	0.113905455	\\
0.334	0.114905455	\\
0.335	0.115905455	\\
0.336	0.116905455	\\
0.337	0.117905455	\\
0.338	0.118905455	\\
0.339	0.119905455	\\
0.34	0.120905455	\\
0.341	0.121905455	\\
0.342	0.122905455	\\
0.343	0.123905455	\\
0.344	0.124905455	\\
0.345	0.125905455	\\
0.346	0.126905455	\\
0.347	0.127905455	\\
0.348	0.128905455	\\
0.349	0.129905455	\\
0.35	0.130905455	\\
0.351	0.131905455	\\
0.352	0.132905455	\\
0.353	0.133905455	\\
0.354	0.134905455	\\
0.355	0.135905455	\\
0.356	0.136905455	\\
0.357	0.137905455	\\
0.358	0.138905455	\\
0.359	0.139905455	\\
0.36	0.140905455	\\
0.361	0.141905455	\\
0.362	0.142905455	\\
0.363	0.143905455	\\
0.364	0.144905455	\\
0.365	0.145905455	\\
0.366	0.146905455	\\
0.367	0.147905455	\\
0.368	0.148905455	\\
0.369	0.149905455	\\
0.37	0.150905455	\\
0.371	0.151905455	\\
0.372	0.152905455	\\
0.373	0.153905455	\\
0.374	0.154905455	\\
0.375	0.155905455	\\
0.376	0.156905455	\\
0.377	0.157905455	\\
0.378	0.158905455	\\
0.379	0.159905455	\\
0.38	0.160905455	\\
0.381	0.161905455	\\
0.382	0.162905455	\\
0.383	0.163905455	\\
0.384	0.164905455	\\
0.385	0.165905455	\\
0.386	0.166905455	\\
0.387	0.167905455	\\
0.388	0.168905455	\\
0.389	0.169905455	\\
0.39	0.170905455	\\
0.391	0.171905455	\\
0.392	0.172905455	\\
0.393	0.173905455	\\
0.394	0.174905455	\\
0.395	0.175905455	\\
0.396	0.176905455	\\
0.397	0.177905455	\\
0.398	0.178905455	\\
0.399	0.179905455	\\
0.4	0.180905455	\\
0.401	0.181905455	\\
0.402	0.182905455	\\
0.403	0.183905455	\\
0.404	0.184905455	\\
0.405	0.185905455	\\
0.406	0.186905455	\\
0.407	0.187905455	\\
0.408	0.188905455	\\
0.409	0.189905455	\\
0.41	0.190905455	\\
0.411	0.191905455	\\
0.412	0.192905455	\\
0.413	0.193905455	\\
0.414	0.194905455	\\
0.415	0.195905455	\\
0.416	0.196905455	\\
0.417	0.197905455	\\
0.418	0.198905455	\\
0.419	0.199905455	\\
0.42	0.200905455	\\
0.421	0.201905455	\\
0.422	0.202905455	\\
0.423	0.203905455	\\
0.424	0.204905455	\\
0.425	0.205905455	\\
0.426	0.206905455	\\
0.427	0.207905455	\\
0.428	0.208905455	\\
0.429	0.209905455	\\
0.43	0.210905455	\\
0.431	0.211905455	\\
0.432	0.212905455	\\
0.433	0.213905455	\\
0.434	0.214905455	\\
0.435	0.215905455	\\
0.436	0.216905455	\\
0.437	0.217905455	\\
0.438	0.218905455	\\
0.439	0.219905455	\\
0.44	0.220905455	\\
0.441	0.221905455	\\
0.442	0.222905455	\\
0.443	0.223905455	\\
0.444	0.224905455	\\
0.445	0.225905455	\\
0.446	0.226905455	\\
0.447	0.227905455	\\
0.448	0.228905455	\\
0.449	0.229905455	\\
0.45	0.230905455	\\
0.451	0.231905455	\\
0.452	0.232905455	\\
0.453	0.233905455	\\
0.454	0.234905455	\\
0.455	0.235905455	\\
0.456	0.236905455	\\
0.457	0.237905455	\\
0.458	0.238905455	\\
0.459	0.239905455	\\
0.46	0.240905455	\\
0.461	0.241905455	\\
0.462	0.242905455	\\
0.463	0.243905455	\\
0.464	0.244905455	\\
0.465	0.245905455	\\
0.466	0.246905455	\\
0.467	0.247905455	\\
0.468	0.248905455	\\
0.469	0.249905455	\\
0.47	0.250905455	\\
0.471	0.251905455	\\
0.472	0.252905455	\\
0.473	0.253905455	\\
0.474	0.254905455	\\
0.475	0.255905455	\\
0.476	0.256905455	\\
0.477	0.257905455	\\
0.478	0.258905455	\\
0.479	0.259905455	\\
0.48	0.260905455	\\
0.481	0.261905455	\\
0.482	0.262905455	\\
0.483	0.263905455	\\
0.484	0.264905455	\\
0.485	0.265905455	\\
0.486	0.266905455	\\
0.487	0.267905455	\\
0.488	0.268905455	\\
0.489	0.269905455	\\
0.49	0.270905455	\\
0.491	0.271905455	\\
0.492	0.272905455	\\
0.493	0.273905455	\\
0.494	0.274905455	\\
0.495	0.275905455	\\
0.496	0.276905455	\\
0.497	0.277905455	\\
0.498	0.278905455	\\
0.499	0.279905455	\\
0.5	0.280905455	\\
0.501	0.281905455	\\
0.502	0.282905455	\\
0.503	0.283905455	\\
0.504	0.284905455	\\
0.505	0.285905455	\\
0.506	0.286905455	\\
0.507	0.287905455	\\
0.508	0.288905455	\\
0.509	0.289905455	\\
0.51	0.290905455	\\
0.511	0.291905455	\\
0.512	0.292905455	\\
0.513	0.293905455	\\
0.514	0.294905455	\\
0.515	0.295905455	\\
0.516	0.296905455	\\
0.517	0.297905455	\\
0.518	0.298905455	\\
0.519	0.299905455	\\
0.52	0.300905455	\\
0.521	0.301905455	\\
0.522	0.302905455	\\
0.523	0.303905455	\\
0.524	0.304905455	\\
0.525	0.305905455	\\
0.526	0.30690618	\\
0.527	0.307911479	\\
0.528	0.308921906	\\
0.529	0.309937447	\\
0.53	0.310958321	\\
0.531	0.311984382	\\
0.532	0.313015784	\\
0.533	0.314052397	\\
0.534	0.315094645	\\
0.535	0.316142177	\\
0.536	0.317195251	\\
0.537	0.318253928	\\
0.538	0.319317948	\\
0.539	0.320387646	\\
0.54	0.321463085	\\
0.541	0.322544329	\\
0.542	0.323631441	\\
0.543	0.324724483	\\
0.544	0.325823037	\\
0.545	0.326927596	\\
0.546	0.328038224	\\
0.547	0.329154982	\\
0.548	0.330277936	\\
0.549	0.331406534	\\
0.55	0.332542041	\\
0.551	0.333683267	\\
0.552	0.334830889	\\
0.553	0.335985691	\\
0.554	0.337146323	\\
0.555	0.338313541	\\
0.556	0.339487411	\\
0.557	0.340667168	\\
0.558	0.341854504	\\
0.559	0.343048682	\\
0.56	0.344249767	\\
0.561	0.345457822	\\
0.562	0.346671951	\\
0.563	0.347894113	\\
0.564	0.349123439	\\
0.565	0.350359991	\\
0.566	0.351602767	\\
0.567	0.35285394	\\
0.568	0.354112535	\\
0.569	0.355378615	\\
0.57	0.356652245	\\
0.571	0.357933489	\\
0.572	0.359223647	\\
0.573	0.360520343	\\
0.574	0.361824848	\\
0.575	0.363138544	\\
0.576	0.364460234	\\
0.577	0.365789983	\\
0.578	0.367127858	\\
0.579	0.36847535	\\
0.58	0.369829699	\\
0.581	0.371193852	\\
0.582	0.372566448	\\
0.583	0.373947553	\\
0.584	0.375338798	\\
0.585	0.376738741	\\
0.586	0.378147446	\\
0.587	0.379566632	\\
0.588	0.380994769	\\
0.589	0.382431925	\\
0.59	0.383879904	\\
0.591	0.38533709	\\
0.592	0.386805347	\\
0.593	0.388283004	\\
0.594	0.389770127	\\
0.595	0.391268665	\\
0.596	0.392778773	\\
0.597	0.394298666	\\
0.598	0.395830379	\\
0.599	0.397372071	\\
0.6	0.398925837	\\
0.601	0.400489776	\\
0.602	0.402066042	\\
0.603	0.403654794	\\
0.604	0.405254042	\\
0.605	0.406868209	\\
0.606	0.408493129	\\
0.607	0.410131109	\\
0.608	0.411780039	\\
0.609	0.413444586	\\
0.61	0.415120345	\\
0.611	0.416812104	\\
0.612	0.418515335	\\
0.613	0.420232531	\\
0.614	0.421963854	\\
0.615	0.423711952	\\
0.616	0.425472052	\\
0.617	0.427249317	\\
0.618	0.429041431	\\
0.619	0.430848557	\\
0.62	0.432670864	\\
0.621	0.434511193	\\
0.622	0.436367101	\\
0.623	0.438241498	\\
0.624	0.440131876	\\
0.625	0.442038407	\\
0.626	0.4439641	\\
0.627	0.445909225	\\
0.628	0.447871149	\\
0.629	0.449852984	\\
0.63	0.451855004	\\
0.631	0.453877486	\\
0.632	0.455920707	\\
0.633	0.457984947	\\
0.634	0.460067377	\\
0.635	0.462177609	\\
0.636	0.464306596	\\
0.637	0.466457735	\\
0.638	0.468637813	\\
0.639	0.470837474	\\
0.64	0.473063547	\\
0.641	0.475316433	\\
0.642	0.477596535	\\
0.643	0.479900829	\\
0.644	0.482236551	\\
0.645	0.484600721	\\
0.646	0.486993753	\\
0.647	0.48941965	\\
0.648	0.491875328	\\
0.649	0.494364873	\\
0.65	0.496885132	\\
0.651	0.499444018	\\
0.652	0.502038425	\\
0.653	0.504668915	\\
0.654	0.507339916	\\
0.655	0.510052123	\\
0.656	0.512806238	\\
0.657	0.51560297	\\
0.658	0.518447068	\\
0.659	0.521335312	\\
0.66	0.524276671	\\
0.661	0.527267988	\\
0.662	0.530314357	\\
0.663	0.53341681	\\
0.664	0.536580693	\\
0.665	0.539802861	\\
0.666	0.543097574	\\
0.667	0.546457471	\\
0.668	0.549892794	\\
0.669	0.55340512	\\
0.67	0.557000647	\\
0.671	0.560681162	\\
0.672	0.564462587	\\
0.673	0.568337754	\\
0.674	0.57232797	\\
0.675	0.57643115	\\
0.676	0.580659823	\\
0.677	0.585032054	\\
0.678	0.58955157	\\
0.679	0.594237552	\\
0.68	0.59911008	\\
0.681	0.604184941	\\
0.682	0.609489318	\\
0.683	0.615057179	\\
0.684	0.620929795	\\
0.685	0.627156315	\\
0.686	0.633800093	\\
0.687	0.640962911	\\
0.688	0.648765132	\\
0.689	0.657407424	\\
0.69	0.667228323	\\
0.691	0.678840368	\\
0.692	0.693775763	\\
0.693	0.72009009	\\
	};
	\legend{}
	\addlegendentry{$E_{\mbox{\tiny e}}(\rr)$}

	\addplot[smooth,color=olive,thick,dash pattern={on 4pt off 2pt}]
	table[row sep=crcr] 
	{
0	-2.19E-01	\\
0.001	-2.18E-01	\\
0.002	-2.17E-01	\\
0.003	-2.16E-01	\\
0.004	-2.15E-01	\\
0.005	-2.14E-01	\\
0.006	-2.13E-01	\\
0.007	-2.12E-01	\\
0.008	-2.11E-01	\\
0.009	-2.10E-01	\\
0.01	-2.09E-01	\\
0.011	-2.08E-01	\\
0.012	-2.07E-01	\\
0.013	-2.06E-01	\\
0.014	-2.05E-01	\\
0.015	-2.04E-01	\\
0.016	-2.03E-01	\\
0.017	-2.02E-01	\\
0.018	-0.201094545	\\
0.019	-0.200094545	\\
0.02	-0.199094545	\\
0.021	-0.198094545	\\
0.022	-0.197094545	\\
0.023	-0.196094545	\\
0.024	-0.195094545	\\
0.025	-0.194094545	\\
0.026	-0.193094545	\\
0.027	-0.192094545	\\
0.028	-0.191094545	\\
0.029	-0.190094545	\\
0.03	-0.189094545	\\
0.031	-0.188094545	\\
0.032	-0.187094545	\\
0.033	-0.186094545	\\
0.034	-0.185094545	\\
0.035	-0.184094545	\\
0.036	-0.183094545	\\
0.037	-0.182094545	\\
0.038	-0.181094545	\\
0.039	-0.180094545	\\
0.04	-0.179094545	\\
0.041	-0.178094545	\\
0.042	-0.177094545	\\
0.043	-0.176094545	\\
0.044	-0.175094545	\\
0.045	-0.174094545	\\
0.046	-0.173094545	\\
0.047	-0.172094545	\\
0.048	-0.171094545	\\
0.049	-0.170094545	\\
0.05	-0.169094545	\\
0.051	-0.168094545	\\
0.052	-0.167094545	\\
0.053	-0.166094545	\\
0.054	-0.165094545	\\
0.055	-0.164094545	\\
0.056	-0.163094545	\\
0.057	-0.162094545	\\
0.058	-0.161094545	\\
0.059	-0.160094545	\\
0.06	-0.159094545	\\
0.061	-0.158094545	\\
0.062	-0.157094545	\\
0.063	-0.156094545	\\
0.064	-0.155094545	\\
0.065	-0.154094545	\\
0.066	-0.153094545	\\
0.067	-0.152094545	\\
0.068	-0.151094545	\\
0.069	-0.150094545	\\
0.07	-0.149094545	\\
0.071	-0.148094545	\\
0.072	-0.147094545	\\
0.073	-0.146094545	\\
0.074	-0.145094545	\\
0.075	-0.144094545	\\
0.076	-0.143094545	\\
0.077	-0.142094545	\\
0.078	-0.141094545	\\
0.079	-0.140094545	\\
0.08	-0.139094545	\\
0.081	-0.138094545	\\
0.082	-0.137094545	\\
0.083	-0.136094545	\\
0.084	-0.135094545	\\
0.085	-0.134094545	\\
0.086	-0.133094545	\\
0.087	-0.132094545	\\
0.088	-0.131094545	\\
0.089	-0.130094545	\\
0.09	-0.129094545	\\
0.091	-0.128094545	\\
0.092	-0.127094545	\\
0.093	-0.126094545	\\
0.094	-0.125094545	\\
0.095	-0.124094545	\\
0.096	-0.123094545	\\
0.097	-0.122094545	\\
0.098	-0.121094545	\\
0.099	-0.120094545	\\
0.1	-0.119094545	\\
0.101	-0.118094545	\\
0.102	-0.117094545	\\
0.103	-0.116094545	\\
0.104	-0.115094545	\\
0.105	-0.114094545	\\
0.106	-0.113094545	\\
0.107	-0.112094545	\\
0.108	-0.111094545	\\
0.109	-0.110094545	\\
0.11	-0.109094545	\\
0.111	-0.108094545	\\
0.112	-0.107094545	\\
0.113	-0.106094545	\\
0.114	-0.105094545	\\
0.115	-0.104094545	\\
0.116	-0.103094545	\\
0.117	-0.102094545	\\
0.118	-0.101094545	\\
0.119	-0.100094545	\\
0.12	-0.099094545	\\
0.121	-0.098094545	\\
0.122	-0.097094545	\\
0.123	-0.096094545	\\
0.124	-0.095094545	\\
0.125	-0.094094545	\\
0.126	-0.093094545	\\
0.127	-0.092094545	\\
0.128	-0.091094545	\\
0.129	-0.090094545	\\
0.13	-0.089094545	\\
0.131	-0.088094545	\\
0.132	-0.087094545	\\
0.133	-0.086094545	\\
0.134	-0.085094545	\\
0.135	-0.084094545	\\
0.136	-0.083094545	\\
0.137	-0.082094545	\\
0.138	-0.081094545	\\
0.139	-0.080094545	\\
0.14	-0.079094545	\\
0.141	-0.078094545	\\
0.142	-0.077094545	\\
0.143	-0.076094545	\\
0.144	-0.075094545	\\
0.145	-0.074094545	\\
0.146	-0.073094545	\\
0.147	-0.072094545	\\
0.148	-0.071094545	\\
0.149	-0.070094545	\\
0.15	-0.069094545	\\
0.151	-0.068094545	\\
0.152	-0.067094545	\\
0.153	-0.066094545	\\
0.154	-0.065094545	\\
0.155	-0.064094545	\\
0.156	-0.063094545	\\
0.157	-0.062094545	\\
0.158	-0.061094545	\\
0.159	-0.060094545	\\
0.16	-0.059094545	\\
0.161	-0.058094545	\\
0.162	-0.057094545	\\
0.163	-0.056094545	\\
0.164	-0.055094545	\\
0.165	-0.054094545	\\
0.166	-0.053094545	\\
0.167	-0.052094545	\\
0.168	-0.051094545	\\
0.169	-0.050094545	\\
0.17	-0.049094545	\\
0.171	-0.048094545	\\
0.172	-0.047094545	\\
0.173	-0.046094545	\\
0.174	-0.045094545	\\
0.175	-0.044094545	\\
0.176	-0.043094545	\\
0.177	-0.042094545	\\
0.178	-0.041094545	\\
0.179	-0.040094545	\\
0.18	-0.039094545	\\
0.181	-0.038094545	\\
0.182	-0.037094545	\\
0.183	-0.036094545	\\
0.184	-0.035094545	\\
0.185	-0.034094545	\\
0.186	-0.033094545	\\
0.187	-0.032094545	\\
0.188	-0.031094545	\\
0.189	-0.030094545	\\
0.19	-0.029094545	\\
0.191	-0.028094545	\\
0.192	-0.027094545	\\
0.193	-0.026094545	\\
0.194	-0.025094545	\\
0.195	-0.024094545	\\
0.196	-0.023094545	\\
0.197	-0.022094545	\\
0.198	-0.021094545	\\
0.199	-0.020094545	\\
0.2	-0.019094545	\\
0.201	-0.018094545	\\
0.202	-0.017094545	\\
0.203	-0.016094545	\\
0.204	-0.015094545	\\
0.205	-0.014094545	\\
0.206	-0.013094545	\\
0.207	-0.012094545	\\
0.208	-0.011094545	\\
0.209	-0.010094545	\\
0.21	-0.009094545	\\
0.211	-0.008094545	\\
0.212	-0.007094545	\\
0.213	-0.006094545	\\
0.214	-0.005094545	\\
0.215	-0.004094545	\\
0.216	-0.003094545	\\
0.217	-0.002094545	\\
0.218	-0.001094545	\\
0.219	-9.45E-05	\\
0.22	0.000905455	\\
0.221	0.001905455	\\
0.222	0.002905455	\\
0.223	0.003905455	\\
0.224	0.004905455	\\
0.225	0.005905455	\\
0.226	0.006905455	\\
0.227	0.007905455	\\
0.228	0.008905455	\\
0.229	0.009905455	\\
0.23	0.010905455	\\
0.231	0.011905455	\\
0.232	0.012905455	\\
0.233	0.013905455	\\
0.234	0.014905455	\\
0.235	0.015905455	\\
0.236	0.016905455	\\
0.237	0.017905455	\\
0.238	0.018905455	\\
0.239	0.019905455	\\
0.24	0.020905455	\\
0.241	0.021905455	\\
0.242	0.022905455	\\
0.243	0.023905455	\\
0.244	0.024905455	\\
0.245	0.025905455	\\
0.246	0.026905455	\\
0.247	0.027905455	\\
0.248	0.028905455	\\
0.249	0.029905455	\\
0.25	0.030905455	\\
0.251	0.031905455	\\
0.252	0.032905455	\\
0.253	0.033905455	\\
0.254	0.034905455	\\
0.255	0.035905455	\\
0.256	0.036905455	\\
0.257	0.037905455	\\
0.258	0.038905455	\\
0.259	0.039905455	\\
0.26	0.040905455	\\
0.261	0.041905455	\\
0.262	0.042905455	\\
0.263	0.043905455	\\
0.264	0.044905455	\\
0.265	0.045905455	\\
0.266	0.046905455	\\
0.267	0.047905455	\\
0.268	0.048905455	\\
0.269	0.049905455	\\
0.27	0.050905455	\\
0.271	0.051905455	\\
0.272	0.052905455	\\
0.273	0.053905455	\\
0.274	0.054905455	\\
0.275	0.055905455	\\
0.276	0.056905455	\\
0.277	0.057905455	\\
0.278	0.058905455	\\
0.279	0.059905455	\\
0.28	0.060905455	\\
0.281	0.061905455	\\
0.282	0.062905455	\\
0.283	0.063905455	\\
0.284	0.064905455	\\
0.285	0.065905455	\\
0.286	0.066905455	\\
0.287	0.067905455	\\
0.288	0.068905455	\\
0.289	0.069905455	\\
0.29	0.070905455	\\
0.291	0.071905455	\\
0.292	0.072905455	\\
0.293	0.073905455	\\
0.294	0.074905455	\\
0.295	0.075905455	\\
0.296	0.076905455	\\
0.297	0.077905455	\\
0.298	0.078905455	\\
0.299	0.079905455	\\
0.3	0.080905455	\\
0.301	0.081905455	\\
0.302	0.082905455	\\
0.303	0.083905455	\\
0.304	0.084905455	\\
0.305	0.085905455	\\
0.306	0.086905455	\\
0.307	0.087905455	\\
0.308	0.088905455	\\
0.309	0.089905455	\\
0.31	0.090905455	\\
0.311	0.091905455	\\
0.312	0.092905455	\\
0.313	0.093905455	\\
0.314	0.094905455	\\
0.315	0.095905455	\\
0.316	0.096905455	\\
0.317	0.097905455	\\
0.318	0.098905455	\\
0.319	0.099905455	\\
0.32	0.100905455	\\
0.321	0.101905455	\\
0.322	0.102905455	\\
0.323	0.103905455	\\
0.324	0.104905455	\\
0.325	0.105905455	\\
0.326	0.106905455	\\
0.327	0.107905455	\\
0.328	0.108905455	\\
0.329	0.109905455	\\
0.33	0.110905455	\\
0.331	0.111905455	\\
0.332	0.112905455	\\
0.333	0.113905455	\\
0.334	0.114905455	\\
0.335	0.115905455	\\
0.336	0.116905455	\\
0.337	0.117905455	\\
0.338	0.118905455	\\
0.339	0.119905455	\\
0.34	0.120905455	\\
0.341	0.121905455	\\
0.342	0.122905455	\\
0.343	0.123905455	\\
0.344	0.124905455	\\
0.345	0.125905455	\\
0.346	0.126905455	\\
0.347	0.127905455	\\
0.348	0.128905455	\\
0.349	0.129905455	\\
0.35	0.130905455	\\
0.351	0.131905455	\\
0.352	0.132905455	\\
0.353	0.133905455	\\
0.354	0.134905455	\\
0.355	0.135905455	\\
0.356	0.136905455	\\
0.357	0.137905455	\\
0.358	0.138905455	\\
0.359	0.139905455	\\
0.36	0.140905455	\\
0.361	0.141905455	\\
0.362	0.142905455	\\
0.363	0.143905455	\\
0.364	0.144905455	\\
0.365	0.145905455	\\
0.366	0.146905455	\\
0.367	0.147905455	\\
0.368	0.148905455	\\
0.369	0.149905455	\\
0.37	0.150905455	\\
0.371	0.151905455	\\
0.372	0.152905455	\\
0.373	0.153905455	\\
0.374	0.154905455	\\
0.375	0.155905455	\\
0.376	0.156905455	\\
0.377	0.157905455	\\
0.378	0.158905455	\\
0.379	0.159905455	\\
0.38	0.160905455	\\
0.381	0.161905455	\\
0.382	0.162905455	\\
0.383	0.163905455	\\
0.384	0.164905455	\\
0.385	0.165905455	\\
0.386	0.166905455	\\
0.387	0.167905455	\\
0.388	0.168905455	\\
0.389	0.169905455	\\
0.39	0.170905455	\\
0.391	0.171905455	\\
0.392	0.172905455	\\
0.393	0.173905455	\\
0.394	0.174905455	\\
0.395	0.175905455	\\
0.396	0.176905455	\\
0.397	0.177905455	\\
0.398	0.178905455	\\
0.399	0.179905455	\\
0.4	0.180905455	\\
0.401	0.181905455	\\
0.402	0.182905455	\\
0.403	0.183905455	\\
0.404	0.184905455	\\
0.405	0.185905455	\\
0.406	0.186905455	\\
0.407	0.187905455	\\
0.408	0.188905455	\\
0.409	0.189905455	\\
0.41	0.190905455	\\
0.411	0.191905455	\\
0.412	0.192905455	\\
0.413	0.193905455	\\
0.414	0.194905455	\\
0.415	0.195905455	\\
0.416	0.196905455	\\
0.417	0.197905455	\\
0.418	0.198905455	\\
0.419	0.199905455	\\
0.42	0.200905455	\\
0.421	0.201905455	\\
0.422	0.202905455	\\
0.423	0.203905455	\\
0.424	0.204905455	\\
0.425	0.205905455	\\
0.426	0.206905455	\\
0.427	0.207905455	\\
0.428	0.208905455	\\
0.429	0.209905455	\\
0.43	0.210905455	\\
0.431	0.211905455	\\
0.432	0.212905455	\\
0.433	0.213905455	\\
0.434	0.214905455	\\
0.435	0.215905455	\\
0.436	0.216905455	\\
0.437	0.217905455	\\
0.438	0.218905455	\\
0.439	0.219905455	\\
0.44	0.220905455	\\
0.441	0.221905455	\\
0.442	0.222905455	\\
0.443	0.223905455	\\
0.444	0.224905455	\\
0.445	0.225905455	\\
0.446	0.226905455	\\
0.447	0.227906335	\\
0.448	0.228910854	\\
0.449	0.229919226	\\
0.45	0.230931476	\\
0.451	0.23194763	\\
0.452	0.23296772	\\
0.453	0.233991896	\\
0.454	0.235019722	\\
0.455	0.236051693	\\
0.456	0.237088152	\\
0.457	0.238129577	\\
0.458	0.239173225	\\
0.459	0.240220233	\\
0.46	0.241272335	\\
0.461	0.242328635	\\
0.462	0.243389194	\\
0.463	0.244454099	\\
0.464	0.245523214	\\
0.465	0.246596701	\\
0.466	0.247674577	\\
0.467	0.248757578	\\
0.468	0.24984364	\\
0.469	0.250934558	\\
0.47	0.252028139	\\
0.471	0.253126665	\\
0.472	0.254230812	\\
0.473	0.255339663	\\
0.474	0.256451169	\\
0.475	0.257567428	\\
0.476	0.258688256	\\
0.477	0.259813791	\\
0.478	0.260944051	\\
0.479	0.26207903	\\
0.48	0.263218739	\\
0.481	0.264363174	\\
0.482	0.265512273	\\
0.483	0.266666857	\\
0.484	0.267825827	\\
0.485	0.268989274	\\
0.486	0.270156336	\\
0.487	0.271329314	\\
0.488	0.272524479	\\
0.489	0.273690996	\\
0.49	0.274878387	\\
0.491	0.276070533	\\
0.492	0.277268285	\\
0.493	0.278472182	\\
0.494	0.279679285	\\
0.495	0.280891864	\\
0.496	0.282110206	\\
0.497	0.283334024	\\
0.498	0.284561801	\\
0.499	0.285796011	\\
0.5	0.287036367	\\
0.501	0.288278912	\\
0.502	0.289531032	\\
0.503	0.290832115	\\
0.504	0.292093736	\\
0.505	0.293310819	\\
0.506	0.294581813	\\
0.507	0.295860338	\\
0.508	0.297141565	\\
0.509	0.298431114	\\
0.51	0.299723802	\\
0.511	0.301024706	\\
0.512	0.302329	\\
0.513	0.303642737	\\
0.514	0.304962489	\\
0.515	0.306280537	\\
0.516	0.30761231	\\
0.517	0.308952742	\\
0.518	0.310423333	\\
0.519	0.311743353	\\
0.52	0.312994701	\\
0.521	0.314346102	\\
0.522	0.31571028	\\
0.523	0.317081489	\\
0.524	0.31845957	\\
0.525	0.319844917	\\
0.526	0.321234617	\\
0.527	0.322631968	\\
0.528	0.324035804	\\
0.529	0.325446126	\\
0.53	0.32686395	\\
0.531	0.328286466	\\
0.532	0.329923916	\\
0.533	0.331153866	\\
0.534	0.332597836	\\
0.535	0.334052862	\\
0.536	0.335509755	\\
0.537	0.336971251	\\
0.538	0.338443285	\\
0.539	0.339922216	\\
0.54	0.341408419	\\
0.541	0.34290563	\\
0.542	0.344402883	\\
0.543	0.345911084	\\
0.544	0.34742676	\\
0.545	0.34935036	\\
0.546	0.350481003	\\
0.547	0.35201984	\\
0.548	0.353565598	\\
0.549	0.355124192	\\
0.55	0.356681361	\\
0.551	0.358251844	\\
0.552	0.359829445	\\
0.553	0.361415104	\\
0.554	0.363009266	\\
0.555	0.364612907	\\
0.556	0.36622251	\\
0.557	0.368410793	\\
0.558	0.369472616	\\
0.559	0.371105748	\\
0.56	0.372750799	\\
0.561	0.374404877	\\
0.562	0.376067913	\\
0.563	0.37773927	\\
0.564	0.37941993	\\
0.565	0.381110375	\\
0.566	0.382813436	\\
0.567	0.384518083	\\
0.568	0.38713003	\\
0.569	0.388019865	\\
0.57	0.389809224	\\
0.571	0.391572306	\\
0.572	0.39320601	\\
0.573	0.394969199	\\
0.574	0.396751939	\\
0.575	0.398532954	\\
0.576	0.400341082	\\
0.577	0.402174909	\\
0.578	0.403966297	\\
0.579	0.405785137	\\
0.58	0.407638236	\\
0.581	0.409511874	\\
0.582	0.411337696	\\
0.583	0.413210782	\\
0.584	0.415094863	\\
0.585	0.416992984	\\
0.586	0.418902375	\\
0.587	0.420962278	\\
0.588	0.422842178	\\
0.589	0.424692615	\\
0.59	0.426650205	\\
0.591	0.428616574	\\
0.592	0.430601018	\\
0.593	0.432591831	\\
0.594	0.434601621	\\
0.595	0.436626312	\\
0.596	0.438653628	\\
0.597	0.44096621	\\
0.598	0.442763337	\\
0.599	0.444834964	\\
0.6	0.446919235	\\
0.601	0.449021652	\\
0.602	0.451140517	\\
0.603	0.453275445	\\
0.604	0.455417499	\\
0.605	0.457582662	\\
0.606	0.460262816	\\
0.607	0.461954103	\\
0.608	0.46416425	\\
0.609	0.466389827	\\
0.61	0.468627228	\\
0.611	0.470886493	\\
0.612	0.473162996	\\
0.613	0.475454714	\\
0.614	0.477820197	\\
0.615	0.48009469	\\
0.616	0.482442359	\\
0.617	0.484807292	\\
0.618	0.487192085	\\
0.619	0.489597296	\\
0.62	0.49202182	\\
0.621	0.494477788	\\
0.622	0.496932584	\\
0.623	0.499413728	\\
0.624	0.50192809	\\
0.625	0.504456437	\\
0.626	0.506999338	\\
0.627	0.509592314	\\
0.628	0.51233953	\\
0.629	0.514857808	\\
0.63	0.517436943	\\
0.631	0.52010287	\\
0.632	0.522796838	\\
0.633	0.52551589	\\
0.634	0.528265901	\\
0.635	0.531468959	\\
0.636	0.533842096	\\
0.637	0.536672719	\\
0.638	0.539527075	\\
0.639	0.542419894	\\
0.64	0.545337995	\\
0.641	0.548288157	\\
0.642	0.551888956	\\
0.643	0.554290006	\\
0.644	0.557341811	\\
0.645	0.560429252	\\
0.646	0.563554603	\\
0.647	0.566714314	\\
0.648	0.569917291	\\
0.649	0.573154362	\\
0.65	0.5764357	\\
0.651	0.579759592	\\
0.652	0.58320038	\\
0.653	0.586549275	\\
0.654	0.590002664	\\
0.655	0.59351306	\\
0.656	0.597074921	\\
0.657	0.601111542	\\
0.658	0.604359187	\\
0.659	0.608084963	\\
0.66	0.61187163	\\
0.661	0.615721483	\\
0.662	0.619635529	\\
0.663	0.623618942	\\
0.664	0.627673895	\\
0.665	0.631803945	\\
0.666	0.636015209	\\
0.667	0.640314178	\\
0.668	0.644697921	\\
0.669	0.649177126	\\
0.67	0.653849252	\\
0.671	0.658422715	\\
0.672	0.663236687	\\
0.673	0.668172197	\\
0.674	0.673172768	\\
0.675	0.678356883	\\
0.676	0.68370228	\\
0.677	0.689197971	\\
0.678	0.694871926	\\
0.679	0.700867566	\\
0.68	0.706847501	\\
0.681	0.713188034	\\
0.682	0.719803887	\\
0.683	0.726755338	\\
0.684	0.734031912	\\
0.685	0.74174058	\\
0.686	0.749953859	\\
0.687	0.758749997	\\
0.688	0.768299893	\\
0.689	0.778861191	\\
0.69	0.790807086	\\
0.691	0.804881292	\\
0.692	0.822846822	\\
0.693	0.855168077	\\
	};
	\addlegendentry{$E_{\mbox{\tiny ex}}^{\mbox{\tiny fr}}(\rr)$}	
		
	\end{axis}
	
	\end{tikzpicture}
	\caption{Graphs of the functions $E_{\mbox{\tiny ex}}^{\mbox{\tiny fr}}(\rr)$ and $E_{\mbox{\tiny e}}(\infty,\rr)$.} \label{Z-Channel-Numeric4}
\end{figure}

As can be seen in Fig.\ \ref{Z-Channel-Numeric4}, both $E_{\mbox{\tiny ex}}^{\mbox{\tiny fr}}(\rr)$ and $E_{\mbox{\tiny e}}(\infty,\rr)$ tend to infinity as $R$ tends to $\log 2 \approx 0.693$. For relatively high binning rates, $E_{\mbox{\tiny ex}}^{\mbox{\tiny fr}}(\rr)$ is strictly higher than $E_{\mbox{\tiny e}}(\infty,\rr)$, which can be explained in the following way:
Referring to the analogy between SW coding and channel coding, one can think of each bin as containing a channel code. In general, a channel code behaves well if it does not contain pairs of relatively ``close'' codewords. Since we randomly assign the source vectors into the bins (even if the populations of the bins are totally equal, which can be attained by randomly partitioning each type class into $\exp\{n\rr\}$ subsets), it is reasonable to assume that some bins will contain relatively bad codebooks. 
On the other hand, in the expurgated SW code \cite{CKgraph}, each type class $\calT(Q_{U})$ is partitioned into $\exp\{n\rr\}$ ``balanced'' subsets in some sense (referring to the enumerators $N(Q_{UU'})$ in \eqref{DEF_ENUMERATOR}, they are equally populated in all of the bins), such that the codebooks contained in the bins have approximately equal error probabilities. Moreover, we conclude from \eqref{SW_Expurgated} that each bin contains a codebook with a quality of an expurgated channel code.
This code is certainly better than the TRCs in the SD ensemble.       

In channel coding, it is known \cite{SSG} that the random Gilbert--Varshamov ensemble has an exact random coding error exponent which is as high as the maximum between \eqref{Channel_rc} and \eqref{Channel_ex}.
In SW source coding, on the other hand, it seems to be a more challenging problem to define an ensemble, such that the error exponent of its TRCs is as high as $E_{\mbox{\tiny ex}}^{\mbox{\tiny fr}}(\rr)$ of \eqref{SW_Expurgated}.  
Since the gap between $E_{\mbox{\tiny ex}}^{\mbox{\tiny fr}}(\rr)$ and $E_{\mbox{\tiny e}}(\infty,\rr)$ is not necessarily very significant, as can be seen in Fig.\ \ref{Z-Channel-Numeric4}, we conclude that the SD ensemble may be more attractive because the amount of computations needed for drawing a code from it are much lower than the amount of computations required for having an expurgated SW code. In addition, it is important to note that the probability of drawing a SD code with an exponent much lower than $E_{\mbox{\tiny e}}(\infty,\rr)$ decays exponentially fast, in analogy to the result in pure channel coding \cite{TMWG}.

\section*{Appendix A}
\renewcommand{\theequation}{A.\arabic{equation}}
\setcounter{equation}{0}  
\subsection*{Proof of Theorem \ref{Thm_random_binning_SD_VR}}

By definition, we have
\begin{align}
\mathbb{E} [P_{\mbox{\tiny e}} (\calB_{n})] 
= \mathbb{E} \left[ \frac{\sum_{\bu' \in \calB(\bU), \bu' \neq \bU} \exp\{nf(\hat{P}_{\bu'\bV})\}}{\sum_{\tilde{\bu} \in \calB(\bU)} \exp\{nf(\hat{P}_{\tilde{\bu}\bV})\}} \right].
\end{align}

\subsubsection*{Step 1: Averaging Over the Random Code}
We first condition on the true source sequences $(\bU=\bu, \bV=\bv)$ and take the expectation only w.r.t.\ the random binning. 
We get
\begin{align}
&\mathbb{E} [ P_{\mbox{\tiny e}} (\calB_{n}) | \bu,\bv ] \nn \\
&= \mathbb{E} \left[ \frac{\sum_{\bu' \in \calB(\bu), \bu' \neq \bu} \exp\{nf(\hat{P}_{\bu'\bv})\}} {\exp \{n \cdot f( \hat{P}_{\bu \bv} ) \} + \sum_{\bu' \in \calB(\bu), \bu' \neq \bu} \exp\{nf(\hat{P}_{\bu'\bv})\} } \right] \\
\label{ToRef0}
&= \int_{0}^{1} \prob \left\{ \frac{\sum_{\bu' \in \calB(\bu), \bu' \neq \bu} \exp\{nf(\hat{P}_{\bu'\bv})\}} {\exp \{n \cdot f( \hat{P}_{\bu \bv} ) \} + \sum_{\bu' \in \calB(\bu), \bu' \neq \bu} \exp\{nf(\hat{P}_{\bu'\bv})\} } \geq s \right\} ds \\
\label{ToRef1}
&= \int_{0}^{\infty} ne^{-n \xi} \cdot \prob \left\{ \frac{\sum_{\bu' \in \calB(\bu), \bu' \neq \bu} \exp\{nf(\hat{P}_{\bu'\bv})\}} {\exp \{n \cdot f( \hat{P}_{\bu \bv} ) \} + \sum_{\bu' \in \calB(\bu), \bu' \neq \bu} \exp\{nf(\hat{P}_{\bu'\bv})\} } \geq e^{-n \xi} \right\} d\xi \\
&= \int_{0}^{\infty} ne^{-n \xi} \cdot  \prob \left\{ (1- e^{-n \xi}) \sum_{\bu' \in \calB(\bu), \bu' \neq \bu} \exp\{nf(\hat{P}_{\bu'\bv})\} \geq e^{-n \xi}  \exp\{n  f(\hat{P}_{\bu\bv}) \}  \right\} d\xi \\
\label{step0}
&\doteq \int_{0}^{\infty} ne^{-n \xi} \cdot  \prob \left\{ \sum_{\bu' \in \calB(\bu), \bu' \neq \bu} \exp\{nf(\hat{P}_{\bu'\bv})\} \geq \exp\{n [ f(\hat{P}_{\bu\bv}) - \xi ]\}  \right\} d\xi,
\end{align}
where \eqref{ToRef1} follows by changing the integration variable in \eqref{ToRef0} according to $s = e^{-n \xi}$.
Define 
\begin{align}
N_{\bu,\bv}(Q_{U|V}) = \sum_{\bu' \in \calB(\bu), \bu' \neq \bu} \IND \{\bu' \in \calT(Q_{U|V}|\bv)\},
\end{align}
such that the probability in \eqref{step0} is given by
\begin{align}
&\prob \left\{ \sum_{\bu' \in \calB(\bu), \bu' \neq \bu} \exp\{nf(\hat{P}_{\bu'\bv})\} \geq \exp\{n [ f(\hat{P}_{\bu\bv}) - \xi ]\}  \right\} \nn \\
&=\prob \left\{ \sum_{Q_{U'|V}} N_{\bu,\bv}(Q_{U'|V}) \exp\{nf(Q_{U'V})\} \geq \exp\{n [ f(\hat{P}_{\bu\bv}) - \xi ]\}  \right\} \\
&\doteq \prob \left\{ \max_{Q_{U'|V}} N_{\bu,\bv}(Q_{U'|V}) \exp\{nf(Q_{U'V})\} \geq \exp\{n [f(\hat{P}_{\bu\bv}) - \xi ]\}  \right\} \\
&=\prob \bigcup_{Q_{U'|V}} \left\{N_{\bu,\bv}(Q_{U'|V}) \exp\{nf(Q_{U'V})\} \geq \exp\{n [f(\hat{P}_{\bu\bv}) - \xi ]\}  \right\} \\
&\doteq \sum_{Q_{U'|V}} \prob \left\{N_{\bu,\bv}(Q_{U'|V}) \geq \exp\{n [ f(\hat{P}_{\bu\bv}) - f(Q_{U'V}) - \xi ]\}  \right\},
\end{align}
where $Q_{U'V} = Q_{U'|V} \times \hat{P}_{\bv}$. Let us denote $B_{0} = f(\hat{P}_{\bu\bv}) - f(Q_{U'V})$.
Now, given $\bu$ and $\bv$, $N_{\bu,\bv}(Q_{U'|V})$ is a binomial sum of $|\calT(Q_{U'|V}|\bv)| \doteq e^{n H_{Q}(U'|V)}$ trials and success rate of the exponential order of $e^{-n \RF(Q_{U})}$. Therefore, using the techniques of \cite[Section 6.3]{MERHAV09},
\begin{align}
&-\frac{1}{n} \log \prob \left\{ N_{\bu,\bv}(Q_{U'|V})  \geq  \exp\{n[B_{0} - \xi ]\}  \right\} \nonumber \\
&= \left\{   
\begin{array}{l l}
\left[\RF(Q_{U})-H_{Q}(U'|V)\right]_{+}    & \quad \text{  $\left[H_{Q}(U'|V) - \RF(Q_{U})\right]_{+} \geq B_{0} - \xi$  }   \\
\infty    & \quad \text{  $\left[H_{Q}(U'|V) - \RF(Q_{U})\right]_{+} < B_{0} - \xi$  }
\end{array} \right. \\
&= \left\{   
\begin{array}{l l}
\left[\RF(Q_{U})-H_{Q}(U'|V)\right]_{+}    & \quad \text{  $ \xi \geq B_{0} - \left[H_{Q}(U'|V) - \RF(Q_{U})\right]_{+}$  }   \\
\infty    & \quad \text{  $\xi < B_{0} - \left[H_{Q}(U'|V) - \RF(Q_{U})\right]_{+}$  }
\end{array} \right. ,
\end{align}
and so,
\begin{align}
&\int_{0}^{\infty} e^{-n \xi} \cdot \prob \left\{ N_{\bu,\bv}(Q_{U|V})  \geq  \exp\{n[B_{0} - \xi ]\}  \right\} d\xi \nonumber \\
&\doteq \int_{\left[B_{0} - [H_{Q}(U'|V) - \RF(Q_{U})]_{+}\right]_{+}}^{\infty} e^{-n \xi} \cdot e^{-n [\RF(Q_{U})-H_{Q}(U'|V)]_{+}} d\xi \\
&\doteq \exp \left\{-n \left([\RF(Q_{U})-H_{Q}(U'|V)]_{+} + \left[B_{0} - [H_{Q}(U'|V) - \RF(Q_{U})]_{+}\right]_{+}  \right) \right\} \\
&= \exp \left\{-n  
\left(   
\begin{array}{l l}
\RF(Q_{U})-H_{Q}(U'|V) + [B_{0} ]_{+} & \quad \text{$\RF(Q_{U}) \geq H_{Q}(U'|V)$} \\
\left[\RF(Q_{U}) - H_{Q}(U'|V) + B_{0} \right]_{+} & \quad \text{$\RF(Q_{U}) < H_{Q}(U'|V)$}
\end{array} \right)
\right\}  \\
&= \exp \left\{-n  
\left(   
\begin{array}{l l}
\left[\RF(Q_{U})-H_{Q}(U'|V) + [B_{0}]_{+} \right]_{+} & \quad \text{$\RF(Q_{U}) \geq H_{Q}(U'|V)$} \\
\left[\RF(Q_{U}) - H_{Q}(U'|V) + [B_{0}]_{+} \right]_{+} & \quad \text{$\RF(Q_{U}) < H_{Q}(U'|V)$}
\end{array} \right)
\right\}  \\
&= \exp \left\{-n \cdot \left[\RF(Q_{U}) - H_{Q}(U'|V) + [B_{0}]_{+} \right]_{+}
\right\} .
\end{align}
Finally, we have that
\begin{align}
&\sum_{Q_{U'|V}} \int_{0}^{\infty} e^{-n \xi} \cdot \prob \left\{ N_{\bu,\bv}(Q_{U|V})  \geq  \exp\{n[B_{0} - \xi ]\}  \right\} d\xi  \\
&\doteq \max_{Q_{U'|V}} \exp \left\{-n \cdot \left[\RF(Q_{U}) - H_{Q}(U'|V) + [B_{0}]_{+} \right]_{+}
\right\} \\
&= \exp \left\{-n \cdot \min_{Q_{U'|V}} \left[\RF(Q_{U}) - H_{Q}(U'|V) + [B_{0}]_{+} \right]_{+}
\right\},
\end{align}
thus,
\begin{align}
E(\bu,\bv)
= \min_{Q_{U'|V}} \left[\RF(Q_{U}) - H_{Q}(U'|V) + [f(\hat{P}_{\bu\bv}) - f(Q_{U'V})]_{+} \right]_{+}.
\end{align}

\subsubsection*{Step 2: Averaging Over $\bU$ and $\bV$}
Notice that the exponent function $E(\bu,\bv)$ depends on $(\bu,\bv)$ only via the empirical distribution $\hat{P}_{\bu\bv}$.
Averaging over the source and the SI sequences, now yields
\begin{align}
\mathbb{E} \left\{ P_{\mbox{\tiny e}} (\calB_{n}) \right\} 
&= \sum_{\bu,\bv} P(\bu,\bv) 
\cdot \IND \left\{ \hat{H}_{\bu}(U) \geq \RF(\hat{P}_{\bu}) \right\} \cdot
\exp\left\{-n \cdot E(\hat{P}_{\bu\bv})\right\} \\
&\doteq \sum_{\{Q_{UV}:~ H_{Q}(U) \geq \RF(Q_{U})\}} e^{-n \cdot D(Q_{UV} \| P_{UV})} \cdot \exp\left\{-n \cdot E(Q_{UV})\right\} \\
\label{FinalExponent0}
&\doteq \exp\left\{-n \cdot \min_{\{Q_{UV}:~ H_{Q}(U) \geq \RF(Q_{U})\}}  \left[ D(Q_{UV} \| P_{UV}) +  E(Q_{UV}) \right] \right\} ,
\end{align}
which proves the first point of Theorem \ref{Thm_random_binning_SD_VR}.

\subsubsection*{Step 3: Moving from Stochastic to Deterministic Decoding}

In order to transform the GLD into the general deterministic decoder of 
\begin{align}
\label{DEF_Deterministic_Decoder}
\hat{\bu} = \operatorname*{arg\,max}_{\bu' \in \calB(\bu)}
f( \hat{P}_{\bu'\bv} ),
\end{align}
we just have to multiply $f(\cdot)$, in
\begin{align}
E(Q_{UV})
= \min_{Q_{U'|V}} \left[\RF(Q_{U}) - H_{Q}(U'|V) + [f(Q_{UV}) - f(Q_{U'V})]_{+} \right]_{+},
\end{align} 
by $\beta \geq 0$, and then let $\beta \to \infty$. We find that the overall error exponent of the SD ensemble with the general deterministic decoder of \eqref{DEF_Deterministic_Decoder} is given by  
\begin{align}
E(P) 
&= \min_{\{Q_{UV}:~ H_{Q}(U) \geq \RF(Q_{U})\}}  \left[ D(Q_{UV} \| P_{UV}) +  \tilde{E}(Q_{UV}) \right],
\end{align}
where,
\begin{align}
\label{Minimum1}
\tilde{E}(Q_{UV}) 
&= \min_{\{Q_{U'|V}:~ f(Q_{U'V}) \geq f(Q_{UV})\}} \left[ \RF(Q_{U}) - H_{Q}(U'|V) \right]_{+} .
\end{align}

\subsubsection*{Step 4: A Fundamental Limitation on the Error Exponent}

Note that the minimum in \eqref{Minimum1} can be upper--bounded by choosing a specific distribution in the feasible set.  
In \eqref{Minimum1}, we take $Q_{U'|V} = Q_{U|V}$ and then
\begin{align}
\tilde{E}(Q_{UV}) 
&\leq \left[ \RF(Q_{U}) - H_{Q}(U|V) \right]_{+}. 
\end{align}
Hence, the overall error exponent is upper--bounded as 
\begin{align}
\label{Upper_Bound}
E(P) &\leq \min_{\{Q_{UV}:~ H_{Q}(U) \geq \RF(Q_{U})\}}  \left[ D(Q_{UV} \| P_{UV}) +  \left[ \RF(Q_{U}) - H_{Q}(U|V) \right]_{+} \right].
\end{align}

\subsubsection*{Step 5: An Optimal Universal Decoder}

We prove that the upper bound of \eqref{Upper_Bound} is attainable by choosing the universal decoding metric $f(Q_{UV}) = - H_{Q}(U|V)$.
Now, we get for \eqref{Minimum1} 
\begin{align}
\tilde{E}(Q_{UV}) 
&= \min_{\{Q_{U'|V}:~ f(Q_{U'V}) \geq f(Q_{UV})\}} \left[ \RF(Q_{U}) - H_{Q}(U'|V) \right]_{+} \\
&= \min_{\{Q_{U'|V}:~ H_{Q}(U|V) \geq H_{Q}(U'|V) \}} \left[ \RF(Q_{U}) - H_{Q}(U'|V) \right]_{+} \\
&= \left[ \RF(Q_{U}) - H_{Q}(U|V) \right]_{+} ,
\end{align}
which completes the proof of Theorem \ref{Thm_random_binning_SD_VR}.

\section*{Appendix B}
\renewcommand{\theequation}{B.\arabic{equation}}
\setcounter{equation}{0}  
\subsection*{Proof of Theorem \ref{SW_THM}}
\subsubsection*{Lower Bound on the Error Exponent}
Our starting point is the following inequality, for any $\rho > 0$,
\begin{align} \label{StartingPoint}
\mathbb{E} [\log P_{\mbox{\tiny e}} (\calB_{n}) ]
\leq \log \left( \mathbb{E} [P_{\mbox{\tiny e}} (\calB_{n})]^{1/\rho}\right)^{\rho},
\end{align}
which is due to the following considerations. 
First, for a positive random variable $X$, the function
\begin{align}
f(\rho) = \log \left( \mathbb{E} \left[X^{1/\rho}\right] \right)^{\rho}
\end{align}
is monotonically decreasing, and second, by L'Hospital's rule,
\begin{align}
\lim_{\rho \to \infty} \log \left( \mathbb{E} \left[X^{1/\rho}\right]   \right)^{\rho} = \mathbb{E}[\log X].
\end{align} 
Recall that the error probability is given by 
\begin{align}
P_{\mbox{\tiny e}} (\calB_{n}) 
&= \sum_{\bu,\bv} P(\bu,\bv) 
\cdot \IND \left\{ \hat{H}_{\bu}(U) \geq \RF(\hat{P}_{\bu}) \right\} \cdot
\frac{\sum_{\bu' \in \calB(\bu)\cap\calT(\bu), \bu' \neq \bu} \exp\{nf(\hat{P}_{\bu'\bv})\}}{\sum_{\tilde{\bu} \in \calB(\bu)\cap\calT(\bu)} \exp\{nf(\hat{P}_{\tilde{\bu}\bv})\}}. 
\end{align}
Let
\begin{align}
\label{Z_DEF}
Z_{\bu}(\bv) = \sum_{\tilde{\bu} \in \calB(\bu)\cap\calT(\bu), \tilde{\bu} \neq \bu} \exp\{nf(\hat{P}_{\tilde{\bu}\bv})\},
\end{align}
fix $\epsilon>0$ arbitrarily small, and for every $\bu \in \calU^{n}$ and $\bv \in \calV^{n}$, define the set
\begin{align}
\label{B_DEF1}
B_{\epsilon}(\bu,\bv) = \left\{\calB_{n}:~ Z_{\bu}(\bv) \leq \exp\{n \alpha(R+\epsilon,\hat{P}_{\bu},\hat{P}_{\bv})\} \right\}.
\end{align}
Following the result of \cite[Appendix B]{MERHAV2017}, we prove the following modification in Appendix C.
\begin{lemma} \label{Large_Deviations_Z}
	Let $\epsilon>0$ be arbitrarily small. Then, for every $\bu \in \calU^{n}$ and $\bv \in \calV^{n}$,
	\begin{equation}
	\prob \left\{Z_{\bu}(\bv) \leq \exp\{n \alpha(R+\epsilon,\hat{P}_{\bu},\hat{P}_{\bv})\}  \right\} \leq \exp \{-e^{n\epsilon} +n\epsilon +1\}.
	\end{equation}
\end{lemma}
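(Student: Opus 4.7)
My plan is to adapt the doubly-exponential concentration argument of \cite[Appendix B]{MERHAV2017} to the variable-rate semi-deterministic SW setting. The quantity $\alpha(R,Q_U,Q_V)$ in the lemma is the natural SW analog of the maximizer appearing in \eqref{DEF_gamma} — roughly $\max\{f(Q_{U'V})+H_Q(U'|V)\}-R$ over $\{Q_{U'|V}:\ Q_{U'}=Q_U,\ H_Q(U'|V)\ge R\}$ — and it captures the typical exponential growth rate of $Z_\bu(\bv)$ under the random binning part of the ensemble. The whole game is to show that with probability at least $1 - \exp\{-e^{n\epsilon}+n\epsilon+1\}$, a single well-chosen term in the sum defining $Z_\bu(\bv)$ already exceeds $\exp\{n\alpha(R+\epsilon,\hat P_\bu,\hat P_\bv)\}$.

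The first step is to group terms by the conditional type of $\tilde\bu$ given $\bv$, writing
\begin{equation*}
Z_\bu(\bv)=\sum_{Q_{U'|V}} N_{\bu,\bv}(Q_{U'|V})\exp\{nf(Q_{U'|V}\times\hat P_\bv)\},
\end{equation*}
where $N_{\bu,\bv}(Q_{U'|V})$ counts sequences of $\calB(\bu)\cap\calT(\bu)\cap\calT(Q_{U'|V}|\bv)\setminus\{\bu\}$. Only conditional types whose induced $U'$-marginal matches $\hat P_\bu$ contribute. Pick $Q^*_{U'|V}$ to be the maximizer in the definition of $\alpha(R+\epsilon,\hat P_\bu,\hat P_\bv)$; feasibility forces $H_{Q^*}(U'|V)\ge R(\hat P_\bu)+\epsilon$. (If the feasible set is empty, $\alpha=-\infty$ and the claim is vacuous.) Since the lemma is needed only in the regime $\hat H_\bu(U)\ge R(\hat P_\bu)$ in which random binning is actually applied in the SD ensemble, each of the $\doteq\exp\{nH_{Q^*}(U'|V)\}$ sequences in $\calT(Q^*_{U'|V}|\bv)\cap\calT(\bu)$ is assigned independently and uniformly to one of $\exp\{nR(\hat P_\bu)\}$ bins.

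The next step is concentration. Conditioned on $(\bu,\bv)$, the count $N_{\bu,\bv}(Q^*_{U'|V})$ is a sum of independent Bernoulli variables with mean
\begin{equation*}
\mu_n\doteq\exp\{n[H_{Q^*}(U'|V)-R(\hat P_\bu)]\}\ge e^{n\epsilon}.
\end{equation*}
A standard Chernoff lower-tail bound for independent Bernoullis gives $\prob\{N_{\bu,\bv}(Q^*_{U'|V})\le \mu_n/2\}\le\exp\{-\mu_n/8\}$. On the complementary event,
\begin{equation*}
Z_\bu(\bv)\ge N_{\bu,\bv}(Q^*_{U'|V})\exp\{nf(Q^*_{U'V})\}\ge\tfrac12\exp\{n\alpha(R+\epsilon,\hat P_\bu,\hat P_\bv)\},
\end{equation*}
and the factor $1/2$, together with the polynomial prefactors from the method of types and the removal of $\bu$ itself (at most one trial), are absorbed into the additive slack $n\epsilon+1$ in the exponent of the final bound.

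The main obstacle is bookkeeping of the $\epsilon$-slack: choosing the threshold as $R+\epsilon$ rather than $R$ buys the gap $H_{Q^*}(U'|V)-R(\hat P_\bu)\ge\epsilon$ needed to make $\mu_n\ge e^{n\epsilon}$ (hence the doubly-exponential concentration $e^{-e^{n\epsilon}/8}$), while simultaneously being small enough that replacing $\alpha(R,\cdot)$ by $\alpha(R+\epsilon,\cdot)$ does not damage the ensuing exponent calculus in the main theorem. A secondary bookkeeping issue is that dependence on $(\bu,\bv)$ enters only through $\hat P_\bu$ and $\hat P_\bv$, so the bound is uniform over $(\bu,\bv)$ in a given type class; this is exactly what is needed when \Cref{Large_Deviations_Z} is subsequently invoked inside an expectation over $(\bU,\bV)$.
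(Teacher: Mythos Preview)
Your approach is essentially identical to the paper's (Appendix~C): decompose $Z_{\bu}(\bv)$ by conditional types, pick the maximizer $Q^{*}_{U'|V}$ in the definition of $\alpha(R+\epsilon,\hat P_{\bu},\hat P_{\bv})$ (the paper obtains this $Q^{*}$ via a short contradiction argument, but it is of course just the maximizer), and then apply a binomial lower-tail bound to the count $N_{\bu,\bv}(Q^{*}_{U'|V})$.

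The one discrepancy is quantitative. Your multiplicative Chernoff bound $\prob\{N\le\mu_n/2\}\le\exp\{-\mu_n/8\}$ yields at best $\exp\{-e^{n\epsilon}/8\}$; the factor $1/8$ \emph{multiplies} $e^{n\epsilon}$ and is therefore \emph{not} absorbed by the additive slack $n\epsilon+1$ as you suggest. The paper instead sets the threshold at $\exp\{n[\alpha(R+\epsilon,\cdot)-f(Q^{*}_{U'V})]\}$ (which equals $e^{-n\epsilon}\mu_n$, not $\mu_n/2$) and uses the sharper binary-divergence estimate $D(e^{-an}\|e^{-bn})\ge e^{-bn}\bigl[1-e^{-(a-b)n}(1+n(a-b))\bigr]$ from \cite[Sec.~6.3]{MERHAV09} with $a-b\ge\epsilon$; this produces exactly $\exp\{-e^{n\epsilon}+n\epsilon+1\}$. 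Your weaker bound is still doubly exponential and would suffice for every downstream use in Appendix~B (the union bound over $(\bu,\bv)$ and the negligibility of $\prob\{B_{\epsilon}\}$), so the sketch is morally complete; but to prove the lemma exactly as stated you must swap in that finer divergence inequality.
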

So, by the union bound,
\begin{align}
\label{B_UNION_DEF}
\prob \left\{\bigcup_{\bu \in \calU^{n}} \bigcup_{\bv \in \calV^{n}} B_{\epsilon}(\bu,\bv)\right\}
\DEF  \prob \left\{B_{\epsilon}\right\} 
&\leq \sum_{\bu \in \calU^{n}} \sum_{\bv \in \calV^{n}} \prob \left\{B_{\epsilon}(\bu,\bv)\right\} \\
&\leq \sum_{\bu \in \calU^{n}} \sum_{\bv \in \calV^{n}} \exp\{-e^{n\epsilon} + n\epsilon + 1\} \\
&= |\calU \times \calV|^{n} \cdot \exp\{-e^{n\epsilon} + n\epsilon + 1\},
\end{align}
which still decays double--exponentially fast. 
Recall that $\calQ = \{Q_{UU'}:~ Q_{U}=Q_{U'}\}$. Then, for any $\rho \geq 1$
\begin{align}
&\mathbb{E} \left\{ \left[P_{\mbox{\tiny e}} (\calB_{n}) \right]^{1/ \rho} \right\}  \nonumber \\
&=\Exp \left\{P_{\mbox{\tiny e}}(\calB_{n})^{1/\rho} \cdot \IND\{B_{\epsilon}^{\mbox{\tiny c}} \} \right\} + \Exp \left\{P_{\mbox{\tiny e}}(\calB_{n})^{1/\rho} \cdot \IND\{B_{\epsilon} \} \right\} \\
&\leq \mathbb{E} \left\{ \left[
\sum_{\bu,\bv} P(\bu,\bv) \IND \left\{ \hat{H}_{\bu}(U) \geq \RF(\hat{P}_{\bu}) \right\} 
\frac{\sum_{\bu' \in \calB(\bu)\cap\calT(\bu), \bu' \neq \bu} \exp\{nf(\hat{P}_{\bu'\bv})\}}{\exp\{nf(\hat{P}_{\bu\bv})\} + Z_{\bu}(\bv)} \right]^{1/ \rho} \IND\{B_{\epsilon}^{\mbox{\tiny c}} \} \right\} \nn \\
&~~+ \prob\{B_{\epsilon} \} \\
\label{E1}
&\leq \mathbb{E} \left\{ \left[
\sum_{\bu,\bv} \sum_{\bu' \in \calB(\bu)\cap\calT(\bu), \bu' \neq \bu} P(\bu,\bv) \IND \left\{ \hat{H}_{\bu}(U) \geq \RF(\hat{P}_{\bu}) \right\}
\right. \right. \nonumber \\ & \left. \left. ~~~~~ \times
\min \left\{1, \frac{\exp\{nf(\hat{P}_{\bu'\bv})\}}{\exp\{nf(\hat{P}_{\bu\bv})\} + \exp\{n \alpha(R+\epsilon,\hat{P}_{\bu},\hat{P}_{\bv})\}} \right\}  \right]^{1/ \rho} \right\} \nn \\
&~~+ |\calU \times \calV|^{n} \cdot \exp\{-e^{n\epsilon} + n\epsilon + 1\} \\
\label{Use_the_MOT_2}
&\doteq \mathbb{E} \left\{ \left[
\sum_{\bu} \sum_{\bu' \in \calB(\bu)\cap\calT(\bu), \bu' \neq \bu} P(\bu) \IND \left\{ \hat{H}_{\bu}(U) \geq \RF(\hat{P}_{\bu}) \right\} \right. \right. \nonumber \\
& \left. \left. ~~ \times \sum_{\bv} P(\bv|\bu) \exp \left\{-n \cdot \big[\max\{f(\hat{P}_{\bu\bv}), \alpha(R+\epsilon,\hat{P}_{\bu},\hat{P}_{\bv})\} - f(\hat{P}_{\bu'\bv}) \big]_{+} \right\}  \right]^{1/ \rho} \right\} \\
\label{E2}
&\doteq \mathbb{E} \left\{ \left[
\sum_{\bu} \sum_{\bu' \in \calB(\bu)\cap\calT(\bu), \bu' \neq \bu} P(\bu) \cdot \IND \left\{ \hat{H}_{\bu}(U) \geq \RF(\hat{P}_{\bu}) \right\} \cdot
\exp \left\{-n \cdot \Lambda(\hat{P}_{\bu\bu'},R+\epsilon) \right\}  
\right]^{1/ \rho} \right\} \\
\label{E3}
&= \mathbb{E} \left\{ \left[
\sum_{\{Q_{UU'}\in \calQ:~H_{Q}(U)\geq R(Q_{U})\}} N(Q_{UU'}) \cdot  e^{n \mathbb{E}_{Q} [\log P(U)]} \cdot 
\exp \left\{-n \cdot \Lambda(Q_{UU'},R+\epsilon) \right\}  
\right]^{1/ \rho} \right\} \\
\label{continue4}
&\doteq \sum_{\{Q_{UU'}\in \calQ:~H_{Q}(U)\geq R(Q_{U})\}} \mathbb{E} \left\{ \left[ N(Q_{UU'}) \right]^{1/ \rho} \right\} \cdot  e^{n (\mathbb{E}_{Q} [\log P(U)])/ \rho} \cdot 
\exp \left\{-n \cdot \Lambda(Q_{UU'},R+\epsilon) / \rho \right\} ,
\end{align}
where \eqref{E1} is due to Lemma \ref{Large_Deviations_Z}, \eqref{E2} is by the method of types and the definition of $\Lambda(Q_{UU'},R)$ in \eqref{DEF_Lambda}, and in \eqref{E3} we used the definition of $N(Q_{UU'})$ in \eqref{DEF_ENUMERATOR}.
Therefore, our next task is to evaluate the $1/ \rho$--th moment of $N(Q_{UU'})$.
Let us define
\begin{align}
N_{\bu}(Q_{U'|U}) = \sum_{\bu' \in \calT(Q_{U'|U}|\bu)} \IND \left\{\calB(\bu') = \calB(\bu) \right\} .
\end{align}
For a given $\rho \geq 1$, let $s \in [1, \rho]$. Then, 
\begin{align}
\mathbb{E} \left\{ \left[N(Q_{UU'})\right]^{1 / \rho}  \right\} 
&=  \mathbb{E} \left\{ \left[ \sum_{\bu \in \calT(Q_{U})} N_{\bu}(Q_{U'|U}) \right]^{1 / \rho}   \right\}    \\
&=  \mathbb{E} \left\{ \left( \left[ \sum_{\bu \in \calT(Q_{U})} N_{\bu}(Q_{U'|U}) \right]^{1 / s} \right)^{s / \rho}   \right\}    \\
\label{Exp3}
&\leq \mathbb{E} \left\{ \left( \sum_{\bu \in \calT(Q_{U})} \left[ N_{\bu}(Q_{U'|U}) \right]^{1 / s} \right)^{s / \rho}   \right\}    \\
\label{Exp4}
&\leq \left( \mathbb{E} \left\{  \sum_{\bu \in \calT(Q_{U})} \left[ N_{\bu}(Q_{U'|U}) \right]^{1 / s}  \right\} \right)^{s / \rho}    \\
\label{Stage23}
&= \left(  \sum_{\bu \in \calT(Q_{U})} \mathbb{E} \left\{ \left[ N_{\bu}(Q_{U'|U}) \right]^{1 / s} \right\}  \right)^{s / \rho}, 
\end{align}
where \eqref{Exp4} follows from Jensen's inequality.
Now, $N_{\bu}(Q_{U'|U})$ is a binomial random variable with $|\calT(Q_{U'|U}|\bu)| \doteq e^{n H_{Q}(U'|U)}$ trials and success rate which is of the exponential order of $e^{-nR}$.
We have that \cite[Sec.\ 6.3]{MERHAV09} 
\begin{align}
\mathbb{E} \left\{ \left[ N_{\bu}(Q_{U'|U}) \right]^{1/s}  \right\} \doteq    
\left\{   
\begin{array}{l l}
\exp\{n[H_{Q}(U'|U) - R]/s\}   & \quad \text{  $H_{Q}(U'|U) \geq R$  }\\
\exp\{n[H_{Q}(U'|U) - R]\}   & \quad \text{  $H_{Q}(U'|U) < R$  } 
\end{array} \right.  ,
\end{align}
and so,
\begin{align}
\mathbb{E} \left\{ \left[N(Q_{UU'})\right]^{1 / \rho}  \right\}  
&\leq e^{n H_{Q}(U) \cdot  s/ \rho} \cdot  \left( \mathbb{E} \left\{ \left[ N_{\bu}(Q_{U'|U}) \right]^{1/s}  \right\} \right)^{s/ \rho}  \\
&\doteq e^{n H_{Q}(U) \cdot s/ \rho} \cdot
\left\{   
\begin{array}{l l}
\exp\{n [H_{Q}(U'|U) - R] / \rho\}   & \quad \text{  $H_{Q}(U'|U) \geq R$  }\\
\exp\{n [H_{Q}(U'|U) - R] s/ \rho \}   & \quad \text{  $H_{Q}(U'|U) < R$  } 
\end{array} \right. \\
&= 
\left\{   
\begin{array}{l l}
\exp\{n [H_{Q}(U) \cdot s + H_{Q}(U'|U) - R]/ \rho \}   & \quad \text{  $H_{Q}(U'|U) \geq R$  }\\
\exp\{n  [H_{Q}(U) + H_{Q}(U'|U) - R]s/ \rho \}   & \quad \text{  $H_{Q}(U'|U) < R$  } 
\end{array} \right. \\
&= 
\left\{   
\begin{array}{l l}
\exp\{n [H_{Q}(U) \cdot s + H_{Q}(U'|U) - R]/ \rho \}   & \quad \text{  $H_{Q}(U'|U) \geq R$  }\\
\exp\{n  [H_{Q}(U,U') - R]s/ \rho \}   & \quad \text{  $H_{Q}(U'|U) < R$  } 
\end{array} \right. .
\end{align}
After optimizing over $s$, we get
\begin{align}
&\frac{1}{n} \log \mathbb{E} \left\{ \left[N(Q_{UU'})\right]^{1 / \rho}  \right\} \nonumber \\
&\leq  \min_{1 \leq s \leq \rho}             \left\{   
\begin{array}{l l}
\left[H_{Q}(U) \cdot s + H_{Q}(U'|U) - R\right]/ \rho   & \quad \text{  $H_{Q}(U'|U) \geq R$  }\\
\left[H_{Q}(U,U') - R\right]s/ \rho    & \quad \text{  $H_{Q}(U'|U) < R, H_{Q}(U,U') \geq R$  }   \\
\left[H_{Q}(U,U') - R\right]s/ \rho    & \quad \text{  $H_{Q}(U'|U) < R, H_{Q}(U,U') < R$  }
\end{array} \right.  \\
&=      \left\{   
\begin{array}{l l}
\left[H_{Q}(U) + H_{Q}(U'|U) - R\right]/ \rho   & \quad \text{  $H_{Q}(U'|U) \geq R$  }\\
\left[H_{Q}(U,U') - R\right]/ \rho   & \quad \text{  $H_{Q}(U'|U) < R, H_{Q}(U,U') \geq R$  }   \\
\left[H_{Q}(U,U') - R\right] \rho  / \rho    & \quad \text{  $H_{Q}(U'|U) < R, H_{Q}(U,U') < R$  }
\end{array} \right.  \\
\label{upup19}
&=      \left\{   
\begin{array}{l l}
\left[H_{Q}(U,U') - R\right] / \rho    & \quad \text{  $H_{Q}(U,U') \geq R$  }   \\
\left[H_{Q}(U,U') - R\right]    & \quad \text{  $H_{Q}(U,U') < R$  }
\end{array} \right. , 
\end{align}
which gives, after raising to the $\rho$--th power,
\begin{align}
\left( \mathbb{E} \left\{ \left[N(Q_{UU'})\right]^{1 / \rho}  \right\} \right)^{\rho} 
&\leq     \left\{   
\begin{array}{l l}
\exp\{n\left[H_{Q}(U,U') - R\right] \}   & \quad \text{  $H_{Q}(U,U') \geq R$  }   \\
\exp\{n\left[H_{Q}(U,U') - R\right] \cdot \rho \}   & \quad \text{  $H_{Q}(U,U') < R$  }
\end{array} \right. \\
\label{ToUse2}
&= \exp\{n([H_{Q}(U,U') - R]_{+} - \rho [R-H_{Q}(U,U')]_{+}) \}. 
\end{align}
Let us denote $F(Q,R,\rho) = [H_{Q}(U,U') - R]_{+} - \rho [R-H_{Q}(U,U')]_{+}$.
Continuing now from \eqref{continue4},
\begin{align}
&\left( \mathbb{E} \left\{ \left[P_{\mbox{\tiny e}} (\calB_{n}) \right]^{1/ \rho} \right\} \right)^{\rho} \nonumber \\ 
&\lexe \left( \sum_{\{Q_{UU'}\in \calQ:~ H_{Q}(U) \geq R(Q_{U})\}} \mathbb{E} \left\{ \left[ N(Q_{UU'}) \right]^{1/ \rho} \right\} \cdot  e^{n [\mathbb{E}_{Q} \log P(U)]/ \rho} \cdot \exp \left\{-n \cdot \Lambda(Q_{UU'},R+\epsilon) / \rho \right\} \right)^{\rho} \\
\label{E4}
&\doteq  \sum_{\{Q_{UU'}\in \calQ:~ H_{Q}(U) \geq R(Q_{U})\}} \left( \mathbb{E} \left\{ \left[ N(Q_{UU'}) \right]^{1/ \rho} \right\} \right)^{\rho} \cdot  e^{n \mathbb{E}_{Q} \log P(U)} \cdot \exp \left\{-n \cdot \Lambda(Q_{UU'},R+\epsilon) \right\} \\
\label{E5}
&\leq  \sum_{\{Q_{UU'}\in \calQ:~ H_{Q}(U) \geq R(Q_{U})\}} \exp\{n(F(Q,R,\rho) + \mathbb{E}_{Q} [\log P(U)] - \Lambda(Q_{UU'},R+\epsilon)) \}  \\
&\doteq  \exp\left\{-n \cdot \min_{\{Q_{UU'}\in \calQ:~ H_{Q}(U) \geq R(Q_{U})\}} (\Lambda(Q_{UU'},R+\epsilon) - F(Q,R,\rho) - \mathbb{E}_{Q} [\log P(U)] ) \right\} .
\end{align}
where \eqref{E5} follows from \eqref{ToUse2}.
Finally, it follows by \eqref{StartingPoint} that
\begin{align}
&\liminf_{n \to \infty} -\frac{1}{n} \mathbb{E} [\log P_{\mbox{\tiny e}} (\calB_{n}) ] \nn \\
&\geq \liminf_{n \to \infty} -\frac{1}{n} \log \left( \mathbb{E} [P_{\mbox{\tiny e}} (\calB_{n})]^{1/\rho}\right)^{\rho} \\
&\geq \min_{\{Q_{UU'}\in \calQ:~ H_{Q}(U) \geq R(Q_{U})\}} (\Lambda(Q_{UU'},R+\epsilon) - F(Q,R,\rho) - \mathbb{E}_{Q} [\log P(U)] ).
\end{align}  
Letting $\rho$ grow without bound yields that
\begin{align}
&\liminf_{n \to \infty} -\frac{1}{n} \mathbb{E} [\log P_{\mbox{\tiny e}} (\calB_{n}) ] \nn \\
&\geq \min_{\{Q_{UU'}\in \calQ:~ H_{Q}(U) \geq R(Q_{U}),~H_{Q}(U,U') \geq R(Q_{U})\}} (\Lambda(Q_{UU'},R+\epsilon) - H_{Q}(U,U') + R(Q_{U}) - \mathbb{E}_{Q} [\log P(U)] ) \\
&= \min_{\{Q_{UU'}\in \calQ:~ H_{Q}(U) \geq R(Q_{U})\}} (\Lambda(Q_{UU'},R+\epsilon) - H_{Q}(U,U') + R(Q_{U}) - \mathbb{E}_{Q} [\log P(U)] ).
\end{align}
Due to the arbitrariness of $\epsilon>0$, we have proved that
\begin{align}
&\liminf_{n \to \infty} -\frac{1}{n} \mathbb{E} [\log P_{\mbox{\tiny e}} (\calB_{n}) ] \nn \\
&\geq \min_{\{Q_{UU'}\in \calQ:~ H_{Q}(U) \geq R(Q_{U})\}} (\Lambda(Q_{UU'},R) - H_{Q}(U,U') + R(Q_{U}) - \mathbb{E}_{Q} [\log P(U)] ).
\end{align}
completing half of the proof of Theorem \ref{SW_THM}.

\subsubsection*{Upper Bound on the Error Exponent}

Consider a joint distribution $Q_{UU'}$, that satisfies $H_{Q}(U,U') > R$, and define the event $\calE(Q_{UU'})=\{\calB_{n}:~ N(Q_{UU'})< \exp\{n[H_{Q}(U,U') - R - \epsilon]\}\}$. We want to show that $\prob\{\calE(Q_{UU'})\}$ is small. Consider the following:
\begin{align}
\prob\{\calE(Q_{UU'})\} 
&= \prob \{ N(Q_{UU'})< \exp\{n[H_{Q}(U,U') - R - \epsilon]\} \} \\
&= \prob \{ N(Q_{UU'})< e^{-n \epsilon} \cdot \mathbb{E} \{N(Q_{UU'})\} \} \\
&= \prob \left\{ \frac{N(Q_{UU'})}{\mathbb{E} \{N(Q_{UU'})\}} - 1< -(1-e^{-n \epsilon}) \right\} \\
&\leq \prob \left\{ \left[\frac{N(Q_{UU'}) - \mathbb{E} \{N(Q_{UU'})\}}{\mathbb{E} \{N(Q_{UU'})\}} \right]^{2} >  (1-e^{-n \epsilon})^{2} \right\} \\
&\leq \frac{\mbox{Var}\{N(Q_{UU'})\}}{(1-e^{-n \epsilon})^{2} \cdot \mathbb{E}^{2} \{N(Q_{UU'})\}}.
\end{align}
Let us use the shorthand notations $\calI(\bu,\bu')=\IND \left\{\calB(\bu') = \calB(\bu) \right\}$, $K = |\calT(Q_{UU'})|$, and $p = e^{-nR}$. Concerning the variance of $N(Q_{UU'})$, we have the following
\begin{align}
&\mbox{Var}\{N(Q_{UU'})\} \nonumber \\
&= \mathbb{E}\{N^{2}(Q_{UU'})\} - \mathbb{E}^{2}\{N(Q_{UU'})\} \\
&= \mathbb{E} \left\{ \left[ \sum_{(\bu,\bu') \in \calT(Q_{UU'})} \calI(\bu,\bu') \right] \times \left[ \sum_{(\tilde{\bu},\hat{\bu}) \in \calT(Q_{UU'})} \calI(\tilde{\bu},\hat{\bu}) \right] \right\} - (Kp)^{2} \\
&= \sum_{(\bu,\bu') \in \calT(Q_{UU'})} \sum_{(\tilde{\bu},\hat{\bu}) \in \calT(Q_{UU'})} \mathbb{E} \left\{\calI(\bu,\bu') \calI(\tilde{\bu},\hat{\bu})    
\right\} - (Kp)^{2} \\
&= \sum_{(\bu,\bu') \in \calT(Q_{UU'})}  \mathbb{E} \left\{\calI^{2}(\bu,\bu') \right\} +
\sum_{\substack{(\bu,\bu'),(\tilde{\bu},\hat{\bu}) \in \calT(Q_{UU'}) \\ (\bu,\bu') \neq (\tilde{\bu},\hat{\bu})}} \mathbb{E} \left\{\calI(\bu,\bu') \calI(\tilde{\bu},\hat{\bu})    
\right\} - (Kp)^{2} \\
&= Kp + K (K-1) p^{2} - (Kp)^{2} \\
&= K p (1-p) \\
&\doteq \exp\{n[H_{Q}(U,U') - R]\},
\end{align}
and hence,
\begin{align}
\prob\{\calE(Q_{UU'})\} 
&\lexe \frac{\exp\{n[H_{Q}(U,U') - R]\}}{\exp\{n[2H_{Q}(U,U') - 2R]\}} \\
&= \exp\{-n[H_{Q}(U,U') - R]\},
\end{align}
which decays to zero since we have assumed that $H_{Q}(U,U') > R$. Furthermore, if $H_{Q}(U,U') \geq R + \epsilon$, then $\prob\{\calE(Q_{UU'})\}$ tends to zero at least as fast as $e^{-n \epsilon}$. Now, for a given $\epsilon > 0$, and a given joint type $Q_{UU'V}$, such that $H_{Q}(U,U') \geq R + \epsilon$, let us define
\begin{align}
Z_{\bu\bu'}(\bv) = \sum_{\tilde{\bu} \in \calB(\bu) \cap \calT(\bu), \tilde{\bu} \neq \bu,\bu'} \exp\{nf(\hat{P}_{\tilde{\bu}\bv})\},
\end{align}   
and
\begin{align}
\calG_{n}(Q_{UU'V}) = \Bigg\{\calB_{n}:&~ \sum_{(\bu,\bu') \in \calT(Q_{UU'})} \IND \{\calB(\bu') = \calB(\bu)\} \times \nonumber \\
& \sum_{\bv \in \calT(Q_{V|UU'}|\bu,\bu')} \IND \left\{Z_{\bu\bu'}(\bv) 
\leq e^{n[\alpha(R-2\epsilon,Q_{U},Q_{V}) + \epsilon]} \right\} \geq \nonumber \\
&\exp\{n[H_{Q}(U,U') - R - 3\epsilon/2]\} \cdot |\calT(Q_{V|UU'}|\bu,\bu')| \Bigg\},
\end{align}
where $(\bu,\bu')$ in the expression $|\calT(Q_{V|UU'}|\bu,\bu')|$ should be understood as any pair of source sequences in $\calT(Q_{UU'})$. Next, we define 
\begin{align}
\calG_{n} = \bigcap_{\{Q_{UU'V}:~ H_{Q}(U,U')\geq R + \epsilon\}}
[\calG_{n}(Q_{UU'V}) \cap \calE^{\mbox{\tiny c}}(Q_{UU'})].
\end{align}
We start by proving that $\prob\{\calG_{n}\} \to 1$ as $n \to \infty$, or equivalently, that $\prob\{\calG_{n}^{\mbox{\tiny c}}\} \to 0$ as $n \to \infty$. Now,  
\begin{align}
\prob\{\calG_{n}^{\mbox{\tiny c}}\} 
&= \prob \left\{ \bigcup_{\{Q_{UU'V}:~ H_{Q}(U,U')\geq R + \epsilon\}} [\calG_{n}^{\mbox{\tiny c}}(Q_{UU'V}) \cup \calE(Q_{UU'})] \right\} \\
&\leq  \sum_{\{Q_{UU'V}:~ H_{Q}(U,U')\geq R + \epsilon\}} \prob \left\{ \calG_{n}^{\mbox{\tiny c}}(Q_{UU'V}) \cup \calE(Q_{UU'}) \right\} \\
&=  \sum_{\{Q_{UU'V}:~ H_{Q}(U,U')\geq R + \epsilon\}} 
[\prob \left\{\calE(Q_{UU'}) \right\} + \prob \left\{ \calG_{n}^{\mbox{\tiny c}}(Q_{UU'V}) \cap \calE^{\mbox{\tiny c}}(Q_{UU'}) \right\}].
\end{align}
The last summation contains a polynomial number of terms. If we prove that the summand tends to zero exponentially with $n$, then $\prob\{\calG_{n}^{\mbox{\tiny c}}\} \to 0$ as $n \to \infty$. The first term in the summand, $\prob \left\{\calE(Q_{UU'}) \right\}$, has already been proved to be upper bounded by $e^{-n \epsilon}$. Concerning the second term, we have the following 
\begin{align}
&\prob \left\{ \calG_{n}^{\mbox{\tiny c}}(Q_{UU'V}) \cap \calE^{\mbox{\tiny c}}(Q_{UU'}) \right\} \nonumber  \\
&= \prob \Bigg[ \sum_{(\bu,\bu') \in \calT(Q_{UU'})} \IND \{\calB(\bu') = \calB(\bu)\} \cdot \sum_{\bv \in \calT(Q_{V|UU'}|\bu,\bu')} \IND \left\{Z_{\bu\bu'}(\bv) 
\leq e^{n[\alpha(R-2\epsilon,Q_{U},Q_{V}) + \epsilon]} \right\} < \nonumber \\
&~~~~~~\exp\{n[H_{Q}(U,U') - R - 3\epsilon/2]\} \cdot |\calT(Q_{V|UU'}|\bu,\bu')|, \nonumber \\
&~~~~~~~~~~~~~~~~~~~~~~~~~~~~~~~~~~~~~~~~~~~~~~~~~~~~~~~~~~ N(Q_{UU'}) \geq \exp\{n[H_{Q}(U,U') - R - \epsilon]\}\Bigg] \\
\label{E7}
&= \prob \Bigg[ \sum_{(\bu,\bu') \in \calT(Q_{UU'})} \IND \{\calB(\bu') = \calB(\bu)\} \cdot \sum_{\bv \in \calT(Q_{V|UU'}|\bu,\bu')} \IND \left\{Z_{\bu\bu'}(\bv) 
> e^{n[\alpha(R-2\epsilon,Q_{U},Q_{V}) + \epsilon]} \right\} > \nonumber \\
&~~~~~~[N(Q_{UU'}) - \exp\{n[H_{Q}(U,U') - R - 3\epsilon/2]\}] \cdot |\calT(Q_{V|UU'}|\bu,\bu')|, \nonumber \\
&~~~~~~~~~~~~~~~~~~~~~~~~~~~~~~~~~~~~~~~~~~~~~~~~~~~~~~~~~~ N(Q_{UU'}) \geq \exp\{n[H_{Q}(U,U') - R - \epsilon]\}\Bigg] \\
\label{E8}
&\leq \prob \Bigg[ \sum_{(\bu,\bu') \in \calT(Q_{UU'})} \IND \{\calB(\bu') = \calB(\bu)\} \cdot \sum_{\bv \in \calT(Q_{V|UU'}|\bu,\bu')} \IND \left\{Z_{\bu\bu'}(\bv) 
> e^{n[\alpha(R-2\epsilon,Q_{U},Q_{V}) + \epsilon]} \right\} > \nonumber \\
&~~~~~~[\exp\{n[H_{Q}(U,U') - R - \epsilon]\} - \exp\{n[H_{Q}(U,U') - R - 3\epsilon/2]\}] \cdot |\calT(Q_{V|UU'}|\bu,\bu')|, \nonumber \\
&~~~~~~~~~~~~~~~~~~~~~~~~~~~~~~~~~~~~~~~~~~~~~~~~~~~~~~~~~~ N(Q_{UU'}) \geq \exp\{n[H_{Q}(U,U') - R - \epsilon]\}\Bigg] \\
\label{E9}
&\leq \prob \Bigg[ \sum_{(\bu,\bu') \in \calT(Q_{UU'})} \IND \{\calB(\bu') = \calB(\bu)\} \cdot \sum_{\bv \in \calT(Q_{V|UU'}|\bu,\bu')} \IND \left\{Z_{\bu\bu'}(\bv) 
> e^{n[\alpha(R-2\epsilon,Q_{U},Q_{V}) + \epsilon]} \right\} > \nonumber \\
&~~~~[\exp\{n[H_{Q}(U,U') - R - \epsilon]\} - \exp\{n[H_{Q}(U,U') - R - 3\epsilon/2]\}] \cdot |\calT(Q_{V|UU'}|\bu,\bu')| \Bigg] \\
\label{E10}
&\leq \frac{\mathbb{E} \left\{ \sum_{(\bu,\bu') \in \calT(Q_{UU'})} \IND \{\calB(\bu') = \calB(\bu)\} \cdot \sum_{\bv \in \calT(Q_{V|UU'}|\bu,\bu')} \IND \left\{Z_{\bu\bu'}(\bv) 
	> e^{n[\alpha(R-2\epsilon,Q_{U},Q_{V}) + \epsilon]} \right\} \right\}}{[\exp\{n[H_{Q}(U,U') - R - \epsilon]\} - \exp\{n[H_{Q}(U,U') - R - 3\epsilon/2]\}] \cdot |\calT(Q_{V|UU'}|\bu,\bu')|} \\
\label{E11}
&\lexe \frac{ |\calT(Q_{UU'})| \cdot |\calT(Q_{V|UU'}|\bu,\bu')| \cdot \prob \left\{\calB(\bu') = \calB(\bu),~Z_{\bu\bu'}(\bv) 
	> e^{n[\alpha(R-2\epsilon,Q_{U},Q_{V}) + \epsilon]} \right\}}{\exp\{n[H_{Q}(U,U') - R - \epsilon]\} \cdot |\calT(Q_{V|UU'}|\bu,\bu')|} \\
\label{E12}
&\doteq \frac{ \exp\{nH_{Q}(U,U')\} \cdot \prob \left\{\calB(\bu') = \calB(\bu) \right\} \cdot \prob \left\{Z_{\bu\bu'}(\bv) > e^{n[\alpha(R-2\epsilon,Q_{U},Q_{V}) + \epsilon]} \right\} }{\exp\{n[H_{Q}(U,U') - R - \epsilon]\}} \\
\label{E13}
&= e^{n \epsilon} \cdot \prob \left\{Z_{\bu\bu'}(\bv) > e^{n[\alpha(R-2\epsilon,Q_{U},Q_{V}) + \epsilon]} \right\},
\end{align}
where \eqref{E8} follows by using the second event $N(Q_{UU'}) \geq \exp\{n[H_{Q}(U,U') - R - \epsilon]\}$ to increase the first event inside the probability in \eqref{E7}, \eqref{E9} is true since the second event in \eqref{E8} was omitted, \eqref{E10} follows from Markov's inequality, and \eqref{E12} is due to the independence between the two events inside the probability in \eqref{E11}.   
As for the probability in \eqref{E13},
\begin{align}
&\prob \left\{Z_{\bu\bu'}(\bv) > e^{n[\alpha(R-2\epsilon,Q_{U},Q_{V}) + \epsilon]} \right\} \nonumber \\
&= \prob \left\{\sum_{Q_{U|V}} N(Q_{UV}) e^{nf(Q_{UV})} > e^{n[\alpha(R-2\epsilon,Q_{U},Q_{V}) + \epsilon]} \right\} \\
&\doteq \max_{Q_{U|V}} \prob \left\{ N(Q_{UV}) > \exp\{n[\alpha(R-2\epsilon,Q_{U},Q_{V}) + \epsilon - f(Q_{UV})]\} \right\} \\
&\doteq e^{-n E},
\end{align}
where $N(Q_{UV})$ is the number of source sequences within $\calB(\bu)$, other than $\bu$ and $\bu'$, that fall in the conditional type class $\calT(Q_{U|V}|\bv)$, which is a binomial random variable with $e^{nH_{Q}(U|V)}-2$ trials and success rate of exponential order $e^{-nR}$, and hence,
\begin{align}
E
&= \min_{Q_{U|V}} \left\{ 
\begin{array}{l l}
[R - H_{Q}(U|V)]_{+} & \text{$f(Q_{UV}) + [H_{Q}(U|V) - R]_{+}  \geq \alpha(R-2\epsilon,Q_{U},Q_{V}) + \epsilon$} \\
\infty     & \text{$f(Q_{UV}) + [H_{Q}(U|V) - R]_{+}  < \alpha(R-2\epsilon,Q_{U},Q_{V}) + \epsilon$} 
\end{array} \right. \\
&= \min_{\{Q_{U|V}:~  f(Q_{UV}) + [H_{Q}(U|V) - R]_{+}  \geq \alpha(R-2\epsilon,Q_{U},Q_{V}) + \epsilon\}} [R - H_{Q}(U|V)]_{+}.
\end{align}
By definition of the function $\alpha(R,Q_{U},Q_{V})$, the set $\{Q_{U|V}:~  f(Q_{UV}) + [H_{Q}(U|V) - R]_{+}  \geq \alpha(R-2\epsilon,Q_{U},Q_{V}) + \epsilon\}$ is a subset of $\{Q_{U|V}:~  H_{Q}(U|V) \leq R - 2\epsilon \}$. Thus,
\begin{align}
E \geq \min_{\{Q_{U|V}:~  H_{Q}(U|V) \leq R - 2\epsilon \}} [R - H_{Q}(U|V)]_{+}
\geq 2\epsilon,
\end{align}  
and hence, $\prob \left\{Z_{\bu\bu'}(\bv) > e^{n[\alpha(R-2\epsilon,Q_{U},Q_{V}) + \epsilon]} \right\} \lexe e^{-2n \epsilon}$, which provides
\begin{align}
\prob \left\{ \calG_{n}^{\mbox{\tiny c}}(Q_{UU'V}) \cap \calE^{\mbox{\tiny c}}(Q_{UU'}) \right\} \lexe e^{n \epsilon} \cdot e^{-2n \epsilon} = e^{-n \epsilon},
\end{align}
which proves that $\prob\{\calG_{n}\} \to 1$ as $n \to \infty$. 
Now, for a given $\calB_{n} \in \calG_{n}(Q_{UU'V})$, we define the set
\begin{align}
\calK(\calB_{n},Q_{UU'V}) = \{(\bu,\bu',\bv):~ Z_{\bu\bu'}(\bv) \leq \exp\{n[\alpha(R-2\epsilon,Q_{U},Q_{V}) + \epsilon]\}\},
\end{align}
as well as 
\begin{align} \label{DEF_K}
\calK(\calB_{n},Q_{UU'V}|\bu,\bu') = \{\bv:~ (\bu,\bu',\bv) \in \calK(\calB_{n},Q_{UU'V})\}.
\end{align}
Then, by definition, for any $\calB_{n} \in \calG_{n}(Q_{UU'V})$,
\begin{align} \label{ToHelp}
&\sum_{(\bu,\bu') \in \calT(Q_{UU'})} \IND \{\calB(\bu') = \calB(\bu)\} \cdot
\frac{|\calT(Q_{V|UU'}|\bu,\bu') \cap \calK(\calB_{n},Q_{UU'V}|\bu,\bu')|}{|\calT(Q_{V|UU'}|\bu,\bu')|} \nonumber \\
&\geq \exp\{n[H_{Q}(U,U') - R - 3\epsilon/2]\},
\end{align}
where we have used the fact that $\calT(Q_{V|UU'}|\bu,\bu')$ has exponentially the same cardinality for all $(\bu,\bu') \in \calT(Q_{UU'})$. 
Wrapping all up, we get that for any $\calB_{n} \in \calG_{n}$, 
\begin{align}
&P_{\mbox{\tiny e}} (\calB_{n})  \nn \\
&= \sum_{\bu,\bv} P(\bu,\bv) \IND \left\{ \hat{H}_{\bu}(U) \geq \RF(\hat{P}_{\bu}) \right\} 
\frac{\sum_{\bu' \in \calB(\bu) \cap \calT(\bu), \bu' \neq \bu} \exp\{nf(\hat{P}_{\bu'\bv})\}}{\exp\{nf(\hat{P}_{\bu\bv})\} + \exp\{nf(\hat{P}_{\bu'\bv})\} + Z_{\bu\bu'}(\bv)} \\
&\geq \sum_{\{Q_{UU'}:~ H_{Q}(U,U') \geq R + \epsilon,H_{Q}(U) \geq R(Q_{U})\}} \sum_{(\bu,\bu') \in \calT(Q_{UU'})} \IND \{\calB(\bu') = \calB(\bu)\} \cdot  \exp\{n \mathbb{E}_{Q} \log P(U)\}  \nonumber \\
& ~~~~ \times \sum_{Q_{V|UU'}} \sum_{\bv \in \calT(Q_{V|UU'}|\bu,\bu') \cap \calK(\calB_{n},Q_{UU'V}|\bu,\bu')} \exp\{n \mathbb{E}_{Q} \log P(V|U)\}  \nn \\
& ~~~~~~~~~~~~ \times \frac{\exp\{nf(Q_{U'V})\}}{\exp\{nf(Q_{UV})\} + \exp\{nf(Q_{U'V})\} + Z_{\bu\bu'}(\bv)}   \\
\label{E14}
&\geq \sum_{\{Q_{UU'}:~ H_{Q}(U,U') \geq R + \epsilon,H_{Q}(U) \geq R(Q_{U})\}} \sum_{(\bu,\bu') \in \calT(Q_{UU'})} \IND \{\calB(\bu') = \calB(\bu)\} \cdot  \exp\{n \mathbb{E}_{Q} \log P(U)\}  \nn \\
& ~~~~ \times \sum_{Q_{V|UU'}} \sum_{\bv \in \calT(Q_{V|UU'}|\bu,\bu') \cap \calK(\calB_{n},Q_{UU'V}|\bu,\bu')} \exp\{n \mathbb{E}_{Q} \log P(V|U)\}  \nonumber \\
& ~~~~~~~~~~~~ \times \frac{\exp\{nf(Q_{U'V})\}}{\exp\{nf(Q_{UV})\} + \exp\{nf(Q_{U'V})\} + \exp\{n[\alpha(R-2\epsilon,Q_{U},Q_{V}) + \epsilon]\}}   \\
&\doteq \sum_{\{Q_{UU'}:~ H_{Q}(U,U') \geq R + \epsilon,H_{Q}(U) \geq R(Q_{U})\}} \sum_{(\bu,\bu') \in \calT(Q_{UU'})} \IND \{\calB(\bu') = \calB(\bu)\}  \nonumber \\
& ~~~~ \times \sum_{Q_{V|UU'}} \frac{|\calT(Q_{V|UU'}|\bu,\bu') \cap \calK(\calB_{n},Q_{UU'V}|\bu,\bu')|}{|\calT(Q_{V|UU'}|\bu,\bu')|} \cdot |\calT(Q_{V|UU'}|\bu,\bu')| \cdot e^{n \mathbb{E}_{Q} \log P(U,V)}  \nonumber \\
& ~~~~ \times  \exp\{-n \cdot [\max\{f(Q_{UV}),\alpha(R-2\epsilon,Q_{U},Q_{V}) + \epsilon\} - f(Q_{U'V})]_{+}\}  \\
&\doteq \sum_{\{Q_{UU'V}:~ H_{Q}(U,U') \geq R + \epsilon,H_{Q}(U) \geq R(Q_{U})\}} \sum_{(\bu,\bu') \in \calT(Q_{UU'})} \IND \{\calB(\bu') = \calB(\bu)\}  \nonumber \\
& ~~~~ \times \frac{|\calT(Q_{V|UU'}|\bu,\bu') \cap \calK(\calB_{n},Q_{UU'V}|\bu,\bu')|}{|\calT(Q_{V|UU'}|\bu,\bu')|} \cdot e^{n H_{Q}(V|U,U')} \cdot e^{n \mathbb{E}_{Q} \log P(U,V)}  \nonumber \\
& ~~~~ \times  \exp\{-n \cdot [\max\{f(Q_{UV}),\alpha(R-2\epsilon,Q_{U},Q_{V}) + \epsilon\} - f(Q_{U'V})]_{+}\}  \\
\label{E15}
&\geq \sum_{\{Q_{UU'V}:~ H_{Q}(U,U') \geq R + \epsilon,H_{Q}(U) \geq R(Q_{U})\}} \exp\{n[H_{Q}(U,U') - R - 3\epsilon/2]\} \cdot e^{n H_{Q}(V|U,U')} \nn \\
& ~~ \times  e^{n \mathbb{E}_{Q} \log P(U,V)} \cdot \exp\{-n \cdot [\max\{f(Q_{UV}),\alpha(R-2\epsilon,Q_{U},Q_{V}) + \epsilon\} - f(Q_{U'V})]_{+}\}  \\
&\doteq \exp\left\{-n \cdot \min_{\{Q_{UU'V}:~ H_{Q}(U,U') \geq R + \epsilon,H_{Q}(U) \geq R(Q_{U})\}}  \{-H_{Q}(U,U') + R + 3\epsilon/2 - H_{Q}(V|U,U') \right. \nn \\
&\left. ~~~~~~~~~~- \mathbb{E}_{Q} [\log P(U,V)] + [\max\{f(Q_{UV}),\alpha(R-2\epsilon,Q_{U},Q_{V}) + \epsilon\} - f(Q_{U'V})]_{+}\}\right\} \\
&\dfn \exp\{-n E_{\mbox{\tiny trc}}(R,\epsilon)\}, 
\end{align}
where \eqref{E14} follows from the definition of the set $\calK(\calB_{n},Q_{UU'V}|\bu,\bu')$ in \eqref{DEF_K} and \eqref{E15} is due to \eqref{ToHelp}.
Consider the following:
\begin{align}
&\mathbb{E} \left[-\frac{1}{n} \log P_{\mbox{\tiny e}} (\calB_{n}) \right] \nn \\
&= \sum_{\calB_{n}} \prob\{\calB_{n}\} \left(-\frac{1}{n} \log P_{\mbox{\tiny e}} (\calB_{n}) \right) \\
&= \sum_{\calB_{n} \in \calG_{n}} \prob\{\calB_{n}\} \left(-\frac{1}{n} \log P_{\mbox{\tiny e}} (\calB_{n}) \right) + \sum_{\calB_{n} \in \calG_{n}^{\mbox{\tiny c}}} \prob\{\calB_{n}\} \left(-\frac{1}{n} \log P_{\mbox{\tiny e}} (\calB_{n}) \right) \\
&\leq \sum_{\calB_{n} \in \calG_{n}} \prob\{\calB_{n}\} \left(-\frac{1}{n} \log e^{-n E_{\mbox{\tiny trc}}(R,\epsilon)} \right) + \sum_{\calB_{n} \in \calG_{n}^{\mbox{\tiny c}}} \prob\{\calB_{n}\} \left(-\frac{1}{n} \log e^{-n E_{\mbox{\tiny sp}}(R)} \right) \\
&= \prob\{\calG_{n}\} E_{\mbox{\tiny trc}}(R,\epsilon) + \prob\{\calG_{n}^{\mbox{\tiny c}}\} E_{\mbox{\tiny sp}}(R),
\end{align}
which implies that 
\begin{align}
\limsup_{n \to \infty} \mathbb{E} \left[-\frac{1}{n} \log P_{\mbox{\tiny e}} (\calB_{n}) \right] 
\leq E_{\mbox{\tiny trc}}(R,\epsilon).
\end{align}
It follows from the arbitrariness of $\epsilon$ that
\begin{align}
&\limsup_{n \to \infty} \mathbb{E} \left\{-\frac{1}{n}\log \left[P_{\mbox{\tiny e}} (\calB_{n}) \right] \right\}  \nonumber \\
&\leq \min_{\{Q_{UU'V}:~ H_{Q}(U,U') \geq R, H_{Q}(U) \geq R(Q_{U})\}}  \{-H_{Q}(U,U') + R - H_{Q}(V|U,U')  \nn \\
& ~~~~~~~~~~- \mathbb{E}_{Q} [\log P(U,V)] + [\max\{f(Q_{UV}),\alpha(R,Q_{U},Q_{V})\} - f(Q_{U'V})]_{+}\} \\
&= \min_{\{Q_{UU'V}:~ H_{Q}(U) \geq R(Q_{U})\}}  \{-H_{Q}(U,U') + R - H_{Q}(V|U,U')  \nn \\
& ~~~~~~~~~~- \mathbb{E}_{Q} [\log P(U,V)] + [\max\{f(Q_{UV}),\alpha(R,Q_{U},Q_{V})\} - f(Q_{U'V})]_{+}\} \\
&= \min_{\{Q_{UU'}\in \calQ:~ H_{Q}(U) \geq R(Q_{U})\}} \{\Lambda(Q_{UU'},R) - H_{Q}(U,U') + R(Q_{U}) - \mathbb{E}_{Q} [\log P(U)]\},
\end{align} 
which completes the proof of Theorem \ref{SW_THM}.

\section*{Appendix C}
\renewcommand{\theequation}{C.\arabic{equation}}
\setcounter{equation}{0}  
\subsection*{Proof of Lemma \ref{Large_Deviations_Z}}

Let $N(\calT(Q_{U|V}|\bv),\calB(\bu))$ be defined as
\begin{align}
N(\calT(Q_{U|V}|\bv),\calB(\bu)) = \sum_{\bu' \in \calT(Q_{U|V}|\bv)}  
\IND \left\{ \calB(\bu') = \calB(\bu) \right\}.
\end{align}
First, note that
\begin{align}
Z_{\bu}(\bv) = \sum_{\tilde{\bu} \in \calB(\bu)\cap\calT(\bu), \tilde{\bu} \neq \bu} \exp\{nf(\hat{P}_{\tilde{\bu}\bv})\} 
= \sum_{Q_{U|V} \in \calS(\hat{P}_{\bu},\hat{P}_{\bv})} N(\calT(Q_{U|V}|\bv),\calB(\bu)) e^{nf(Q_{UV})} ,
\end{align}
where $\calS(\hat{P}_{\bu},\hat{P}_{\bv}) = \{Q_{U|V}:~ (\hat{P}_{\bv} \times Q_{U|V})_{U}=\hat{P}_{\bu}\}$.
Thus, taking the randomness of $\{\calB(\bu)\}_{\bu \in \calU^{n}}$ into account,
\begin{align}
&\prob \left\{ Z_{\bv}(\bu) \leq \exp \{n \alpha(R+\epsilon,\hat{P}_{\bu},\hat{P}_{\bv}) \} \right\} \nonumber \\
&=  \prob \left\{ \sum_{Q_{U|V} \in \calS(\hat{P}_{\bu},\hat{P}_{\bv})} N(\calT(Q_{U|V}|\bv),\calB(\bu)) e^{nf(Q_{UV})} \leq \exp \{n \alpha(R+\epsilon,\hat{P}_{\bu},\hat{P}_{\bv}) \}   \right\} \\
&\leq  \prob \left\{ \max_{Q_{U|V} \in \calS(\hat{P}_{\bu},\hat{P}_{\bv})} N(\calT(Q_{U|V}|\bv),\calB(\bu)) e^{nf(Q_{UV})} \leq \exp \{n \alpha(R+\epsilon,\hat{P}_{\bu},\hat{P}_{\bv}) \}   \right\} \\
&=  \prob \bigcap_{Q_{U|V} \in \calS(\hat{P}_{\bu},\hat{P}_{\bv})} \left\{ N(\calT(Q_{U|V}|\bv),\calB(\bu)) e^{nf(Q_{UV})} \leq \exp \{n \alpha(R+\epsilon,\hat{P}_{\bu},\hat{P}_{\bv}) \}   \right\} \\
&=  \prob \bigcap_{Q_{U|V} \in \calS(\hat{P}_{\bu},\hat{P}_{\bv})} \left\{ N(\calT(Q_{U|V}|\bv),\calB(\bu))  \leq \exp \{n [ \alpha(R+\epsilon,\hat{P}_{\bu},\hat{P}_{\bv}) - f(Q_{UV})] \}   \right\} .
\end{align}
Now, $N(\calT(Q_{U|V}|\bv),\calB(\bu))$ is a binomial random variable with $|\calT(Q_{U|V}|\bv)| \doteq e^{n H_{Q}(U|V)}$ trials and success rate which is of the exponential order of $e^{-nR}$. We prove that by the very definition of the function $\alpha(R+\epsilon,\hat{P}_{\bu},\hat{P}_{\bv})$, there must exist some conditional distribution $Q_{U|V}^{*} \in \calS(\hat{P}_{\bu},\hat{P}_{\bv})$ such that for $Q_{UV}^{*} = \hat{P}_{\bv} \times Q_{U|V}^{*}$, the two inequalities $H_{Q^{*}}(U|V) \geq R + \epsilon$ and $H_{Q^{*}}(U|V) - R - \epsilon  \geq \alpha(R+\epsilon,\hat{P}_{\bu},\hat{P}_{\bv}) - f(Q_{UV}^{*})$ hold. To show that, we assume conversely, i.e., that for every conditional distribution $Q_{U|V} \in \calS(\hat{P}_{\bu},\hat{P}_{\bv})$, which defines $Q_{UV} = \hat{P}_{\bv} \times Q_{U|V}$, either $H_{Q}(U|V) < R + \epsilon$ or $H_{Q}(U|V) - R - \epsilon  < \alpha(R+\epsilon,\hat{P}_{\bu},\hat{P}_{\bv}) - f(Q_{UV})$, which means that for every distribution $Q_{U|V} \in \calS(\hat{P}_{\bu},\hat{P}_{\bv})$
\begin{align}
H_{Q}(U|V) - \epsilon &< \max \{R, R + \alpha(R+\epsilon,\hat{P}_{\bu},\hat{P}_{\bv}) - f(Q_{UV}) \}  \\
&= R + [\alpha(R+\epsilon,\hat{P}_{\bu},\hat{P}_{\bv}) - f(Q_{UV})]_{+}.
\end{align}
Writing it slightly differently, for every $Q_{U|V} \in \calS(\hat{P}_{\bu},\hat{P}_{\bv})$ there exists some real number $t \in [0,1]$ such that
\begin{align}
H_{Q}(U|V) - \epsilon &< R + t[\alpha(R+\epsilon,\hat{P}_{\bu},\hat{P}_{\bv}) - f(Q_{UV})],
\end{align}
or equivalently, 
\begin{align}
\alpha(R+\epsilon,\hat{P}_{\bu},\hat{P}_{\bv}) &>
\max_{Q_{U|V} \in \calS(\hat{P}_{\bu},\hat{P}_{\bv})} \min_{t \in [0,1]}  f(Q_{UV}) + \frac{H_{Q}(U|V) - R - \epsilon}{t} \\
&= \max_{Q_{U|V} \in \calS(\hat{P}_{\bu},\hat{P}_{\bv})} \left\{ 
\begin{array}{l l}
f(Q_{UV}) + H_{Q}(U|V) - R - \epsilon   &  \text{  $H_{Q}(U|V) \geq R + \epsilon$  }\\
-\infty  &  \text{ $H_{Q}(U|V) < R + \epsilon$  } 
\end{array} \right. \\
&= \max_{\{Q_{U|V} \in \calS(\hat{P}_{\bu},\hat{P}_{\bv}):~ H_{Q}(U|V) \geq R + \epsilon\}} [f(Q_{UV}) + H_{Q}(U|V)] - R - \epsilon \\
&\equiv \alpha(R+\epsilon,\hat{P}_{\bu},\hat{P}_{\bv}),
\end{align}
which is a contradiction. Let the conditional distribution $Q_{U|V}^{*}$ be as defined above. Then,
\begin{align}
&\prob \bigcap_{Q_{U|V} \in \calS(\hat{P}_{\bu},\hat{P}_{\bv})} \Big\{ N(\calT(Q_{U|V}|\bv),\calB(\bu))  \leq \exp \{n [ \alpha(R+\epsilon,\hat{P}_{\bu},\hat{P}_{\bv}) - f(Q_{UV})] \}   \Big\} \\
\label{up7}
&\leq \prob \Big\{ N(\calT(Q_{U|V}^{*}|\bv),\calB(\bu))  \leq \exp \{n [ \alpha(R+\epsilon,\hat{P}_{\bu},\hat{P}_{\bv}) - f(Q_{UV}^{*})] \}  \Big\}.
\end{align}
Now, we know that both of the inequalities $H_{Q^{*}}(U|V) \geq R + \epsilon$ and $H_{Q^{*}}(U|V) - R - \epsilon  \geq \alpha(R+\epsilon,\hat{P}_{\bu},\hat{P}_{\bv}) - f(Q_{UV}^{*})$ hold. By the Chernoff bound, the probability of (\ref{up7}) is upper bounded by
\begin{align}
\exp \Big\{ -e^{nH_{Q^{*}}(U|V)} D( e^{-an} \| e^{-bn} ) \Big\},
\end{align}
where $a = H_{Q^{*}}(U|V) + f(Q_{UV}^{*}) - \alpha(R+\epsilon,\hat{P}_{\bu},\hat{P}_{\bv})$ and $b = R$, and where $D( \alpha \| \beta )$, for $\alpha,\beta \in [0,1]$, is the binary divergence function, that is 
\begin{align}
D( \alpha \| \beta ) = \alpha \log \frac{\alpha}{\beta} + (1-\alpha) \log \frac{1-\alpha}{1-\beta}.  
\end{align}
Since $a - b \geq \epsilon$, the binary divergence is lower bounded as follows (\cite[Sec.\ 6.3]{MERHAV09}):
\begin{align}
D( e^{-an} \| e^{-bn} ) &\geq e^{-bn} \left\{ 1- e^{-(a-b)n} [1 + n(a-b)]  \right\} \\
&\geq e^{-nR} [ 1- e^{-n \epsilon} (1 + n \epsilon)  ] ,
\end{align}
where in the second inequality, we invoked the decreasing monotonicity of the function $f(t) = (1+t) e^{-t}$ for $t \geq 0$. Finally, we get that
\begin{align}
& \prob \Big\{ N(\calT(Q_{U|V}^{*}|\bv),\calB(\bu))  \leq \exp \{n [ \alpha(R+\epsilon,\hat{P}_{\bu},\hat{P}_{\bv}) - f(Q_{UV}^{*})] \}  \Big\} \\
&\leq \exp \Big\{ -e^{nH_{Q^{*}}(U|V)} \cdot  e^{-nR} [ 1- e^{-n \epsilon} (1 + n \epsilon)  ]  \Big\}  \\
&\leq \exp \big\{ -e^{n \epsilon}  [ 1- e^{-n \epsilon} (1 + n \epsilon)  ]  \big\} \\
&= \exp \big\{ -e^{n \epsilon}  + n \epsilon + 1  \big\} .
\end{align}
This completes the proof of Lemma \ref{Large_Deviations_Z}.

\section*{Appendix D}
\renewcommand{\theequation}{D.\arabic{equation}}
\setcounter{equation}{0}  
\subsection*{Proof of Theorem \ref{Thm_comparison}}

By definition of the error exponents, it follows that $E_{\mbox{\tiny trc,GLD}}(\RF(\cdot)) \geq E_{\mbox{\tiny r,GLD}}(\RF(\cdot))$. We now prove the other direction. 
The expression in \eqref{VR_SD_SW_EXPRESSION} can also be written as
\begin{align} 
&E_{\mbox{\tiny trc,GLD}}(\RF(\cdot)) \nn \\
&= \min_{\left\{\substack{Q_{UU'}:~Q_{U'}=Q_{U}, \\ H_{Q}(U) \geq \RF(Q_{U}) } \right\}} \left\{\Lambda(Q_{UU'},\RF(Q_{U})) - \mathbb{E}_{Q} [\log P(U)] - H_{Q}(U,U') + \RF(Q_{U}) \right\} \\
&= \min_{\left\{\substack{Q_{UU'}:~Q_{U'}=Q_{U}, \\ H_{Q}(U) \geq \RF(Q_{U}) } \right\}} \left\{\min_{Q_{V|UU'}} \left\{ \Psi(\RF(Q_{U}),Q_{UU'V}) - H_{Q}(V|U,U') - \mathbb{E}_{Q} [\log P(V|U)] \right\} \right. \nn \\
& \left.~~~~~~~~~~~~~~~~~~~~~~~~~~~~~~~~~~~~ - \mathbb{E}_{Q} [\log P(U)] - H_{Q}(U,U') + \RF(Q_{U}) \right\} \\
&= \min_{\left\{\substack{Q_{UU'V}:~Q_{U'}=Q_{U}, \\ H_{Q}(U) \geq \RF(Q_{U}) } \right\}} \left\{  \Psi(\RF(Q_{U}),Q_{UU'V}) - H_{Q}(U,U',V) - \mathbb{E}_{Q} [\log P(U,V)] + \RF(Q_{U}) \right\} \\
&= \min_{\left\{\substack{Q_{UU'V}:~Q_{U'}=Q_{U}, \\ H_{Q}(U) \geq \RF(Q_{U}) } \right\}} \left\{  \Psi(\RF(Q_{U}),Q_{UU'V}) + D(Q_{UV} \| P_{UV})  - H_{Q}(U'|U,V) + \RF(Q_{U}) \right\} \\
\label{MAP_exponent_SD}
&=\min_{\calQ} \left\{ D(Q_{UV} \| P_{UV}) + \RF(Q_{U}) - H_{Q}(U'|U,V) \right. \nn \\
&~~~~~~~~~~~ \left. + \left[\max\{f(Q_{UV}), \gamma(\RF(Q_{U}),Q_{U},Q_{V})\} - f(Q_{U'V}) \right]_{+} \right\},
\end{align} 	    
with the set $\calQ$ given by $\calQ = \{ Q_{UU'V}:~Q_{U'}=Q_{U},~H_{Q}(U) \geq \RF(Q_{U}) \}$,
and where,
\begin{align} \label{gamma_def} 
\gamma(\RF(\cdot),Q_{U},Q_{V}) 
&= \max_{\left\{\substack{Q_{\tilde{U}|V}:~Q_{\tilde{U}}=Q_{U}, \\ H_{Q}(\tilde{U}|V) \geq \RF(Q_{\tilde{U}}) }\right\}} \{f(Q_{\tilde{U}V}) + H_{Q}(\tilde{U}|V)\} - \RF(Q_{U}). 
\end{align}  
We upper--bound the minimum in \eqref{MAP_exponent_SD} by decreasing the feasible set; we add to $\calQ$ the constraint that $U \leftrightarrow V \leftrightarrow U'$ form a Markov chain in that order and denote the new feasible set by $\tilde{\calQ}$. We get that 
\begin{align} 
E_{\mbox{\tiny trc,GLD}}(\RF(\cdot)) 
&\leq \min_{\tilde{\calQ}} \left\{ D(Q_{UV} \| P_{UV}) + \RF(Q_{U}) - H_{Q}(U'|U,V) \right. \nn \\
&~~~~~~~~~~~ \left. + \left[\max\{f(Q_{UV}), \gamma(\RF(Q_{U}),Q_{U},Q_{V})\} - f(Q_{U'V}) \right]_{+} \right\} \\
&= \min_{\tilde{\calQ}} \left\{ D(Q_{UV} \| P_{UV}) + \RF(Q_{U}) - H_{Q}(U'|V) \right. \nn \\
&~~~~~~~~~~~ \left. + \left[\max\{f(Q_{UV}), \gamma(\RF(Q_{U}),Q_{U},Q_{V})\} - f(Q_{U'V}) \right]_{+} \right\} \\
\label{Inner_minimum}
&= \min_{\{Q_{UV}:~ H_{Q}(U) \geq \RF(Q_{U}) \}} \left\{ D(Q_{UV} \| P_{UV}) + \min_{Q_{U'|V} \in \hat{\calQ}} \{\RF(Q_{U}) - H_{Q}(U'|V) \right. \nn \\
&\left. ~~~~~~~~~~~+ \left[\max\{f(Q_{UV}), \gamma(\RF(Q_{U}),Q_{U},Q_{V})\} - f(Q_{U'V}) \right]_{+} \} \right\},
\end{align}
where $\hat{\calQ} = \{ Q_{U'|V}:~Q_{U'}=Q_{U} \}.$
In order to upper--bound the inner minimum in \eqref{Inner_minimum}, we split into two cases, according to the maximum between $f(Q_{UV})$ and $ \gamma(\RF(Q_{U}),Q_{U},Q_{V})$. This is legitimate when the inner minimum and this maximum can be interchanged, which is possible at least in the special cases of the matched/mismatched decoding metrics $f(Q) = \beta \mathbb{E}_{Q} [\log \tilde{P}(U,V)]$ for some $\beta > 0$, since if $f(Q)$ is linear, then the entire expression inside the inner minimum in \eqref{Inner_minimum} is convex in $Q_{U'|V}$. 
On the one hand, if the maximum is given by $f(Q_{UV})$, then the inner minimum in \eqref{Inner_minimum} is just
\begin{align} \label{ToUse0}
\min_{Q_{U'|V} \in \hat{\calQ}} \left\{ \RF(Q_{U}) - H_{Q}(U'|V) + \left[ f(Q_{UV}) - f(Q_{U'V}) \right]_{+} \right\}. 
\end{align}
On the other hand, if the maximum is given by $\gamma(\RF(Q_{U}),Q_{U},Q_{V})$, let $Q^{*}=Q_{\tilde{U}|V}^{*}$ be the maximizer in \eqref{gamma_def}, and then 
\begin{align}
&\min_{Q_{U'|V} \in \hat{\calQ}} \left\{ \RF(Q_{U}) - H_{Q}(U'|V) + \left[ \gamma(\RF(Q_{U}),Q_{U},Q_{V}) - f(Q_{U'V}) \right]_{+} \right\} \nn \\ 
&=\min_{Q_{U'|V} \in \hat{\calQ}} \left\{ \RF(Q_{U}) - H_{Q}(U'|V) + \left[ f(Q_{\tilde{U}V}^{*}) + H_{Q^{*}}(\tilde{U}|V) - \RF(Q_{U}) - f(Q_{U'V}) \right]_{+} \right\} \\ 
\label{ToExp0}
&\leq \RF(Q_{U}) - H_{Q^{*}}(U'|V) + \left[ f(Q_{\tilde{U}V}^{*}) + H_{Q^{*}}(\tilde{U}|V) - \RF(Q_{U}) - f(Q_{U'V}^{*}) \right]_{+}  \\
&= \RF(Q_{U}) - H_{Q^{*}}(U'|V) + \left[ H_{Q^{*}}(\tilde{U}|V) - \RF(Q_{U}) \right]_{+}  \\
\label{ToExp1}
&= \RF(Q_{U}) - H_{Q^{*}}(U'|V) +  H_{Q^{*}}(\tilde{U}|V) - \RF(Q_{U})   \\
\label{ToUse1}
&= 0,  
\end{align}
where \eqref{ToExp0} is because we choose $Q_{U'|V}^{*}=Q_{\tilde{U}|V}^{*}$ instead of minimizing over all $Q_{U'|V} \in \hat{\calQ}$ and \eqref{ToExp1} is true since $H_{Q^{*}}(\tilde{U}|V) \geq \RF(Q_{U})$ by the definition of $\gamma(\RF(Q_{U}), Q_{U},Q_{V})$. Combining \eqref{ToUse0} and \eqref{ToUse1}, we find that \eqref{Inner_minimum} is upper--bounded by
\begin{align} 
E_{\mbox{\tiny trc,GLD}}(\RF(\cdot))   
&\leq \min_{\left\{Q_{UV}:~ H_{Q}(U) \geq \RF(Q_{U}) \right\}} \left\{ D(Q_{UV} \| P_{UV}) \right. \nn \\
&~~~ \left. +  \max\left\{\min_{Q_{U'|V} \in \hat{\calQ}} \left\{ \RF(Q_{U}) - H_{Q}(U'|V) + \left[ f(Q_{UV}) - f(Q_{U'V}) \right]_{+} \right\} ,0 \right\} \right\}\\
&= \min_{\left\{Q_{UV}:~ H_{Q}(U) \geq \RF(Q_{U}) \right\}} \left\{ D(Q_{UV} \| P_{UV}) \right. \nn \\
&~~~ \left. +  \left[ \min_{Q_{U'|V} \in \hat{\calQ}} \left\{ \RF(Q_{U}) - H_{Q}(U'|V) + \left[ f(Q_{UV}) - f(Q_{U'V}) \right]_{+} \right\} \right]_{+} \right\} \\
&= \min_{\left\{Q_{UV}:~ H_{Q}(U) \geq \RF(Q_{U}) \right\}} \left\{ D(Q_{UV} \| P_{UV}) \right. \nn \\
&~~~ \left. + \min_{Q_{U'|V} \in \hat{\calQ}} \left\{ \left[ \RF(Q_{U}) - H_{Q}(U'|V) + \left[ f(Q_{UV}) - f(Q_{U'V}) \right]_{+} \right]_{+} \right\} \right\} \\	 
&= E_{\mbox{\tiny r,GLD}}(\RF(\cdot)),
\end{align}
which proves the first point of the theorem.
Moving forward, consider the following:
\begin{align} \label{Beautiful}
E_{\mbox{\tiny trc,MAP}}(\RF(\cdot)) 
\overset{\mbox{\small (a)}}{=} E_{\mbox{\tiny r,MAP}}(\RF(\cdot))
\overset{\mbox{\small (b)}}{=} E_{\mbox{\tiny r,MCE}}(\RF(\cdot))
\overset{\mbox{\small (c)}}{\leq} E_{\mbox{\tiny trc,MCE}}(\RF(\cdot))
\overset{\mbox{\small (d)}}{\leq} E_{\mbox{\tiny trc,MAP}}(\RF(\cdot)),
\end{align}
where $\mbox{(a)}$ follows from the first point in this theorem by using the matched decoding metric $f(Q) = \beta \mathbb{E}_{Q} [\log P(U,V)]$ and letting $\beta \to \infty$. Equality $\mbox{(b)}$ is due to the second point of Theorem \ref{Thm_random_binning_SD_VR}, which ensures that the random binning error exponents of the MAP and the MCE decoders are equal.
Passage $\mbox{(c)}$ is thanks to the fact that for any decoder, the error exponent of the typical random code is always at least as high as the random coding error exponent and $\mbox{(d)}$ is due to the fact that the MAP decoder is optimal. Finally, the leftmost and the rightmost sides of \eqref{Beautiful} are the same, which implies that passages $\mbox{(c)}$ and $\mbox{(d)}$ must hold with equalities. 
The equality in passage $\mbox{(c)}$ concludes the second point of the theorem.

\section*{Appendix E}
\renewcommand{\theequation}{E.\arabic{equation}}
\setcounter{equation}{0}  
\subsection*{Proof of Theorem \ref{Thm_Universal}}

The left equality in \eqref{3KINGS} is implied by the proved equality in passage $\mbox{(d)}$ in \eqref{Beautiful}.  
In order to prove the right equality in \eqref{3KINGS}, first note that $E_{\mbox{\tiny trc,SCE}}(\RF(\cdot)) \leq E_{\mbox{\tiny trc,MAP}}(\RF(\cdot))$ by the optimality of the MAP decoder. For the other direction,   
consider the universal decoding metric of $f(Q_{UV}) = -H_{Q}(U|V)$. Then, trivially,
\begin{align}
\gamma(\RF(\cdot),Q_{U},Q_{V}) 
&= \max_{\left\{\substack{Q_{\tilde{U}|V}:~Q_{\tilde{U}}=Q_{U}, \\ H_{Q}(\tilde{U}|V) \geq \RF(Q_{\tilde{U}}) }\right\}} \{f(Q_{\tilde{U}V}) + H_{Q}(\tilde{U}|V)\} - \RF(Q_{U}) = - \RF(Q_{U}), 
\end{align} 
as well as
\begin{align}
\Psi(\RF(\cdot),Q_{UU'V}) 
&= \left[\max\{f(Q_{UV}), \gamma(\RF(\cdot),Q_{U},Q_{V})\} - f(Q_{U'V}) \right]_{+} \\ 
&= \left[\max\{-H_{Q}(U|V), -\RF(Q_{U})\} + H_{Q}(U'|V) \right]_{+}  \\
&= \left[H_{Q}(U'|V) - \min\{H_{Q}(U|V), \RF(Q_{U})\} \right]_{+}  \\
&\geq \left[H_{Q}(U'|U,V) - \min\{H_{Q}(U|V), \RF(Q_{U})\} \right]_{+}.
\end{align}
We have the following
\begin{align} 
&E_{\mbox{\tiny trc,SCE}}(\RF(\cdot)) \nn \\
&=\min_{\calQ} \left\{ D(Q_{UV} \| P_{UV}) + \RF(Q_{U}) - H_{Q}(U'|U,V) \right. \nn \\
&~~~~~~~~~~~ \left. + \left[\max\{f(Q_{UV}), \gamma(\RF(Q_{U}),Q_{U},Q_{V})\} - f(Q_{U'V}) \right]_{+} \right\} \\
&\geq \min_{\calQ} \left\{ D(Q_{UV} \| P_{UV}) + \RF(Q_{U}) - H_{Q}(U'|U,V) \right. \nn \\
&~~~~~~~~~~~ \left. + \left[H_{Q}(U'|U,V) - \min\{H_{Q}(U|V), \RF(Q_{U})\} \right]_{+} \right\} \\
&= \min_{\calQ} \left\{ D(Q_{UV} \| P_{UV})  - \min\{H_{Q}(U|V),H_{Q}(U'|U,V), \RF(Q_{U})\} + \RF(Q_{U}) \right\} \\
&\geq \min_{\calQ} \left\{ D(Q_{UV} \| P_{UV})  - \min\{H_{Q}(U|V),H_{Q}(U'), \RF(Q_{U})\} + \RF(Q_{U}) \right\} \\
&= \min_{\{Q_{UV}:~ H_{Q}(U) \geq \RF(Q_{U})\}} \left\{ D(Q_{UV} \| P_{UV})  - \min\{H_{Q}(U|V), H_{Q}(U), \RF(Q_{U}) \} + \RF(Q_{U}) \right\} \\
&= \min_{\{Q_{UV}:~ H_{Q}(U) \geq \RF(Q_{U})\}} \left\{ D(Q_{UV} \| P_{UV})  - \min\{H_{Q}(U|V), \RF(Q_{U}) \} + \RF(Q_{U}) \right\} \\
&= \min_{\{Q_{UV}:~ H_{Q}(U) \geq \RF(Q_{U})\}} \left\{ D(Q_{UV} \| P_{UV})  + \max\{\RF(Q_{U}) - H_{Q}(U|V),0\} \right\} \\
&= \min_{\{Q_{UV}:~ H_{Q}(U) \geq \RF(Q_{U})\}} \left\{D(Q_{UV} \| P_{UV})  + [\RF(Q_{U}) - H_{Q}(U|V)]_{+} \right\} \\
&= E_{\mbox{\tiny trc,MAP}}(\RF(\cdot)),
\end{align}
which completes the proof of the theorem.

\section*{Appendix F}
\renewcommand{\theequation}{F.\arabic{equation}}
\setcounter{equation}{0}  
\subsection*{Proof of Theorem \ref{RF_UB}}

We start by writing the expression in \eqref{Excess_exponent} in a slightly different way using $\min_{\{Q:~ g(Q) \leq 0\}} f(Q) = \min_{Q} \sup_{s \geq 0} \{f(Q) + s \cdot g(Q)\}$:
\begin{align} 
E_{\mbox{\tiny er}}(\RF(\cdot),\Delta) 
&= \min_{\{Q_{UV}:~ \RF(Q_{U}) \geq H_{Q}(U|V) + \Delta\}} D(Q_{UV} \| P_{UV}) \\ 
&= \min_{Q_{UV}} \sup_{\sigma \geq 0} \{D(Q_{UV} \| P_{UV}) + \sigma \cdot (H_{Q}(U|V) + \Delta - \RF(Q_{U}))\}.
\end{align}	
Now, the requirement $E_{\mbox{\tiny er}}(\RF(\cdot),\Delta) \geq \er$ is equivalent to 
\begin{align}
&\min_{Q_{UV}} \sup_{\sigma \geq 0} \{D(Q_{UV} \| P_{UV}) + \sigma \cdot (H_{Q}(U|V) + \Delta - \RF(Q_{U}))\} \geq \er 
\end{align}
or,
\begin{align}
&\forall Q_{UV},~ \exists \sigma \geq 0,~ D(Q_{UV} \| P_{UV}) + \sigma \cdot (H_{Q}(U|V) + \Delta - \RF(Q_{U})) \geq \er 
\end{align}
or,
\begin{align}
&\forall Q_{U},~\forall Q_{V|U},~ \exists \sigma \geq 0,~   \RF(Q_{U}) \leq H_{Q}(U|V) + \Delta + \frac{D(Q_{UV} \| P_{UV}) - \er}{\sigma} 
\end{align}
or that for any $Q_{U} \in \calP(\calU)$,
\begin{align}
\RF(Q_{U}) 
&\leq \min_{Q_{V|U}} \sup_{\sigma \geq 0} \left\{H_{Q}(U|V) + \Delta + \frac{D(Q_{UV} \| P_{UV}) - \er}{\sigma} \right\} \\
&= 
\min_{Q_{V|U}} \left\{   
\begin{array}{l l}
H_{Q}(U|V) + \Delta  & \quad \text{ $D(Q_{UV} \| P_{UV}) \leq \er$  }\\
\infty  & \quad \text{ $D(Q_{UV} \| P_{UV}) > \er$  } 
\end{array} \right. \\
&=  \min_{\{Q_{V|U}:~D(Q_{UV} \| P_{UV}) \leq \er\}} \left\{ H_{Q}(U|V) + \Delta \right\},
\end{align}
with the understanding that a minimum over an empty set equals infinity.

\section*{Appendix G}
\renewcommand{\theequation}{G.\arabic{equation}}
\setcounter{equation}{0}  
\subsection*{Proof of Theorem \ref{RF_LB_RC23}}

It follows by the identities $\min_{\{Q:~ g(Q) \leq 0\}} f(Q) = \min_{Q} \sup_{s \geq 0} \{f(Q) + s \cdot g(Q)\}$ and $\left[ A \right]_{+} = \max_{\mu \in [0,1]} \mu A$ that \eqref{SW_RC_VR_SD_MAP} can also be written as
\begin{align}
E_{\mbox{\tiny e}}(\RF(\cdot)) = \min_{Q_{U}} \min_{Q_{V|U}} \max_{\mu \in [0,1]} \sup_{\sigma \geq 0} \{D(Q_{UV} \| P_{UV}) &+\mu \cdot (\RF(Q_{U}) - H_{Q}(U|V)) \nn \\
&~~~~~~~~+ \sigma \cdot (\RF(Q_{U}) - H_{Q}(U)) \},
\end{align}
such that $E_{\mbox{\tiny e}}(\RF(\cdot)) \geq \ee$ is equivalent to
\begin{align}
&\forall Q_{U},~ \forall Q_{V|U},~ \exists \mu \in [0,1],~ \exists \sigma\geq 0: \nn \\
&D(Q_{UV} \| P_{UV}) +\mu \cdot (\RF(Q_{U}) - H_{Q}(U|V)) + \sigma \cdot (\RF(Q_{U}) - H_{Q}(U)) \geq \ee,
\end{align}
or,
\begin{align}
&\forall Q_{U},~ \forall Q_{V|U},~ \exists \mu \in [0,1],~ \exists \sigma\geq 0: \nn \\ 
&\RF(Q_{U}) \geq  \frac{\mu \cdot H_{Q}(U|V) + \sigma \cdot H_{Q}(U) + \ee - D(Q_{UV} \| P_{UV})}{\mu+\sigma},
\end{align}
or that for any $Q_{U} \in \calP(\calU)$,
\begin{align} 
\RF(Q_{U}) 
&\geq \max_{Q_{V|U}} \min_{\mu \in [0,1]} \inf_{\sigma \geq 0}
\left\{ \frac{\mu \cdot H_{Q}(U|V) + \sigma \cdot H_{Q}(U) + \ee - D(Q_{UV} \| P_{UV})}{\mu+\sigma} \right\} \\
&= \max_{Q_{V|U}} \min_{\mu \in [0,1]} \min
\left\{H_{Q}(U), H_{Q}(U|V) + \frac{\ee - D(Q_{UV} \| P_{UV})}{\mu} \right\} \\
&= \max_{Q_{V|U}} \min
\left\{H_{Q}(U), \min_{\mu \in [0,1]} \left\{
H_{Q}(U|V) + \frac{\ee - D(Q_{UV} \| P_{UV})}{\mu} \right\} \right\} \\
\label{To_exp_1}
&= \max_{Q_{V|U}} \left\{   
\begin{array}{l l}
\min \{H_{Q}(U), H_{Q}(U|V) + \ee - D(Q_{UV} \| P_{UV})\}  &  \text{ $\ee \geq D(Q_{UV} \| P_{UV})$  }\\
-\infty  &  \text{ $\ee < D(Q_{UV} \| P_{UV})$  } 
\end{array} \right. \\
&= \max_{\{Q_{V|U}:~ D(Q_{UV} \| P_{UV}) \leq \ee\}} \min \{H_{Q}(U), H_{Q}(U|V) + \ee - D(Q_{UV} \| P_{UV})\} \\
&= \min \left\{ H_{Q}(U), \max_{\{Q_{V|U}:~ D(Q_{UV} \| P_{UV}) \leq \ee\}} \{ H_{Q}(U|V) + \ee - D(Q_{UV} \| P_{UV})\} \right\},
\end{align}
and the proof is complete.


\begin{thebibliography}{AA}
\bibitem{GoodCodes}
R. Ahlswede and G. Dueck, ``Good codes can be produced by a few permutations,'' {\it IEEE Trans. on Inform. Theory}, vol. 28, no. 3, pp. 430--443, May 1982.
	

	\bibitem{BargForney}
	A. Barg and G. D. Forney, Jr., ``Random codes: minimum distances and error exponents,'' {\it IEEE Trans. on Inform. Theory}, vol. 48, no. 9, pp. 2568--2573, September 2002.
	
	\bibitem{SW2}
	J. Chen, D.-K. He, A. Jagmohan, and L. A. Lastras-Mont\~ano, ``On the reliability function of variable--rate Slepian--Wolf coding,'' Entropy, vol. 19, 389, 2017. doi:10.3390/e19080389

\bibitem{C1982}
I. Csisz\'ar, ``Linear codes for sources and source networks: Error exponents, universal coding," {\it IEEE Trans. on Inform. Theory}, vol. 28, no. 4, pp. 585--592, July 1982.

\bibitem{CK1980}
I.~Csisz\'ar and J.~K\"orner, ``Towards a general theory of source networks," {\it IEEE Trans. on Inform. Theory}, vol. 26, no. 2, pp. 155--165, March 1980.

\bibitem{CKgraph}
I.~Csisz\'ar and J.~K\"orner, ``Graph decomposition: a new key to coding theorems," {\it IEEE Trans. on Inform. Theory}, vol. 27, no. 1, pp. 5--12, January 1981.
%
%

%
	
	
	\bibitem{CK11}
	I.~Csisz\'ar and J.~K\"orner, {\it Information Theory: Coding Theorems for 	Discrete Memoryless Systems}, Cambridge University Press, 2011.
	
	
	
	
	\bibitem{GAL76}
	R. G. Gallager, ``Source coding with side information and universal coding,'' LIDS-P-937, M.I.T., 1976.
	
	\bibitem{KW2011}
	B. G. Kelly and A. B. Wagner, ``Improved source coding exponents via Witsenhausen's rate," {\it IEEE Trans. on Inform. Theory}, vol. 57, no. 9, pp. 5616--5633, September 2011.
	
	\bibitem{KW2012}
	B. G. Kelly and A. B. Wagner, ``Reliability in source coding with side information," {\it IEEE Trans. on Inform. Theory}, vol. 58, no. 8, pp. 5086--5111, August 2012.
	
	\bibitem{LCV2017}
	J. Liu, P. Cuff, and S. Verd\'u, ``On $\alpha$--decodability and $\alpha$--likelihood decoder," in {\it Proc. 55th Ann. Allerton Conf. Comm. Control Comput.}, Monticello, IL, October 2017.
	
	\bibitem{MERHAV09}
	N. Merhav,``Statistical physics and information theory,''  
	{\it Foundations and Trends in Communications and Information Theory,} vol. 6, nos. 1–2, pp. 1--212, 2009. 
	
	%
	
%
	\bibitem{MERHAV2017}
	N. Merhav, ``The generalized stochastic likelihood decoder: random coding and expurgated bounds," {\it IEEE Trans. on Inform. Theory}, vol. 63, no. 8, pp. 5039--5051, August 2017. 
	See also a correction at {\it IEEE Trans. on Inform. Theory}, vol. 63, no. 10, pp. 6827--6829, October 2017.
	
	\bibitem{MERHAV_TYPICAL}
	N. Merhav, ``Error exponents of typical random codes," {\it IEEE Trans. on Inform. Theory}, vol. 64, no. 9, pp. 6223--6235, September 2018.
	
	\bibitem{MERHAV_GAUSS}
	N. Merhav, ``Error exponents of typical random codes for the colored Gaussian channel," {\it IEEE Trans. on Inform. Theory}, vol. 65, no. 12, pp. 8164--8179, December 2019.
	
	\bibitem{MERHAV_TRELLIS}
	N. Merhav, ``Error exponents of typical random trellis codes," {\it IEEE Trans. on Inform. Theory}, vol. 66, no. 4, pp. 2067--2077, April 2020.
	
	\bibitem{MERHAV_IID}
	N. Merhav, ``A Lagrange--dual lower bound to the error exponent of the typical random code," to appear in {\it IEEE Trans. on Inform. Theory}, vol. 66, no. 6, pp. 3456--3464, June 2020.	
	
	\bibitem{PRAD2014}
	A. Nazari, A. Anastasopoulos, and S. S. Pradhan, ``Error exponent for multiple--access channels: lower bounds," {\it IEEE Trans. on Inform. Theory}, vol. 60, no. 9, pp. 5095--5115, September 2014.
	
	\bibitem{OH1994}
	Y. Oohama and T. S. Han, ``Universal coding for the Slepian-Wolf data compression system and the strong converse theorem," {\it IEEE Trans. on Inform. Theory}, vol. 40, no. 6, pp. 1908--1919, November 1994.
	
	
	%
	%

	\bibitem{SW}
	D. Slepian and J. Wolf, ``Noiseless coding of correlated information sources,'' {\it IEEE Trans. on Inform. Theory}, vol. 19, no. 4, pp. 471--480, July 1973.

	\bibitem{SSG}
	A. Somekh--Baruch, J. Scarlett, and A. Guill\'en i F\`abregas, ``Generalized random Gilbert--Varshamov codes,'' {\it IEEE Trans. on Inform. Theory}, vol. 65, no. 5, pp. 3452--3469, May 2019.


	
	\bibitem{TMWG}
	R.~Tamir (Averbuch), N.~Merhav, N.~Weinberger, and A.~Guill\'en i F\`abregas,
	``Large deviations behavior of the logarithmic error probability of random codes,'' {\it IEEE Trans.\ on Inform.\ Theory}, vol.\ 66, no.\ 11, pp.\ 6635--6659, November 2020.
	
	\bibitem{TM_MMI}
	R.~Tamir (Averbuch) and N.~Merhav, ``The MMI decoder is asymptotically optimal for the typical random code and for the expurgated code,'' submitted for publication, July 2020.
	

	\bibitem{SW9}
	N. Weinberger and N. Merhav, ``Optimum tradeoffs between the error exponent and the excess--rate exponent of variable--rate Slepian--Wolf coding,'' {\it IEEE Trans. on Inform. Theory}, vol. 61, no. 4, pp. 2165--2190, April 2015.
	

		

	%
	

	
\end{thebibliography}
\end{document}